\newif\ifarxiv
\theoremstyle{plain}
\newtheorem{theorem}{Theorem}[section]
\newtheorem{lemma}[theorem]{Lemma}
\newtheorem{rrule}[theorem]{Reduction Rule}
\theoremstyle{definition}
\newtheorem{definition}[theorem]{Definition}
\theoremstyle{remark}
\newcommand{\OS}{\texttt{Witty}}
\newcommand{\Thr}{\textsf{Thr}}
\newcommand{\cla}{\ensuremath{\textsf{cla}}}
\newcommand{\nISack}{\ensuremath{\textsf{nISack}}}
\newcommand{\sack}{\ensuremath{\textsf{sack}}}
\newcommand{\Oh}{\mathcal{O}}
\newcommand{\cut}{\ensuremath{\textsf{cut}}}
\newcommand{\Cuts}{\ensuremath{\textsf{Cuts}}}
\newcommand{\wit}{\ensuremath{\textsf{wit}}}
\newcommand{\leaf}{\ensuremath{\textsf{leaf}}}
\newcommand{\dirty}{\ensuremath{\textsf{Dirty}}}
\newcommand{\refine}{\ensuremath{\textsf{Ref}}}
\newcommand{\refineAll}{\ensuremath{\textsf{RefAll}}}
\newcommand{\imp}{\ensuremath{\textsf{imp}}}
\newcommand{\pairsplit}{\ensuremath{\textsf{pairsplit}}}
\newcommand{\pairs}{\ensuremath{\textsf{Pairs}}}
\newcommand{\tree}{\ensuremath{\textsf{Tree}}}
\newcommand{\ex}{\ensuremath{\textsf{ex}}}
\newcommand{\const}{\ensuremath{\textsf{Const}}}
\newcommand{\pDTSlong}{\textsc{Minimum-Size Decision Tree}}
\newcommand{\prob}[6]{%
  \begin{quote}
  	\begin{samepage}
    \begin{labeling}{#6}%
      \setlength\topsep{-.6ex} \setlength\itemsep{-.2ex}
    \item[#1]
    \item[\emph{#2}]#3
    \item[\emph{#4}]#5
    \end{labeling}%
	\end{samepage}
  \end{quote}%
}
\newcommand{\probdefopt}[3]{\prob{#1}{Instance:}{#2}{Task:}{#3}{as}}
\newcommand{\appsymb}{$\bigstar$}
\newcommand{\appref}[1]{{\appsymb}}
\newcommand{\appendixsection}[1]{%
	\ifarxiv{}\else{}
	\gappto{\appendixProofText}{\section{Additional Material for \Cref{#1}}\label{app:#1}}
	\fi{}
}
\newcommand{\toappendix}[1]{%
	\ifarxiv{}#1\else{}
	\gappto{\appendixProofText}
	{{
			#1
	}}
	\fi{}
}
\newcommand{\appendixproof}[2]{%
	\ifarxiv{}#2\else{}\gappto{\appendixProofText}
	{
		\subsection{Proof of \Cref{#1}}\label{proof:#1}
		#2
	}
	\fi{}
}
\title{Witty: An Efficient Solver for Computing Minimum-Size Decision Trees}
\author{
    Luca Pascal Staus\textsuperscript{\rm 1},
    Christian Komusiewicz\textsuperscript{\rm 1},
    Frank Sommer\textsuperscript{\rm 2},
    Manuel Sorge\textsuperscript{\rm 2}
}
\begin{document}

\maketitle

\begin{abstract}
  Decision trees are a classic model for summarizing and classifying data. To enhance interpretability and generalization properties, it has been proposed to favor small decision trees.  
  Accordingly, in the minimum-size decision tree training problem (\textsc{MSDT}), the input is a set of training examples in $\mathbb{R}^d$ with class labels and we aim to find a decision tree that classifies all training examples correctly and has a minimum number of nodes.
  \textsc{MSDT} is NP-hard and therefore presumably not solvable in polynomial time.
  Nevertheless, Komusiewicz et al. [ICML~'23] developed a promising algorithmic paradigm called \emph{witness trees} which solves \textsc{MSDT} efficiently if the solution tree is small.
  In this work, we test this paradigm empirically.
  We provide an implementation, augment it with extensive heuristic improvements, and scrutinize it on standard benchmark instances.
  The augmentations achieve a mean 324-fold (median 84-fold) speedup over the naive implementation. Compared to the state of the art they achieve a mean 32-fold (median 7-fold) speedup over the dynamic programming based MurTree solver [Demirović et al., J.~Mach.~Learn.~Res.~'22] and a mean 61-fold (median 25-fold) speedup over SAT-based implementations [Janota and Morgado, SAT~'20].
  As a theoretical result we obtain an improved worst-case running-time bound for \textsc{MSDT}.
\end{abstract}

\begin{links}
	\link{Code, Data and Experimental Results}{https://doi.org/10.5281/zenodo.11235017}
\end{links}

\section{Introduction}
\label{sec-intro}
\appendixsection{sec-intro}
\looseness=-1
Traditional decision trees recursively partition the feature space with axis-parallel binary cuts and assign a class to each part of the partition.
When learning decision trees, we are given a training data set that consists of examples $E \subseteq \mathbb{R}^d$ labeled by classes and we want to find a decision tree that classifies the training data set (see \Cref{sec:Preliminaries} for the formal definitions).
In addition, we want to optimize certain criteria, chiefly among them we want to minimize the number of cuts in the tree (or, equivalently, the size of the tree).\footnote{Another popular choice is to minimize the depth of the tree. We opted for the size criterion instead for simplicity. Our techniques are broadly applicable to the depth criterion, see the discussion in \cref{sec:conclusion}.}
This is because small trees are more easily interpretable~\citep{molnar_interpretable_2020,moshkovitz_explainable_2020} and they are thought to be more accurate on unknown data~\citep{fayyad_what_1990,Hu0HH20}.
The resulting optimization problems are NP-hard, however~\citep{hyafil_constructing_1976,OrdyniakS21}.
Therefore, researchers traditionally use heuristics such as CART~\citep{DBLP:books/wa/BreimanFOS84}.
Recent advancements in hardware and algorithms have made it feasible to compute provably optimal trees for training data sets with up to several hundreds of examples 
(see the surveys by \citet{carrizosa_mathematical_2021,DBLP:journals/air/CostaP23} and a full list of 24 implementations in the appendix).
The implementations still suffer from poor scalability, however~\citep{DBLP:journals/air/CostaP23}.
In this paper we contribute towards overcoming this limitation.

To that end, as has been done before~\citep{DBLP:conf/sat/JanotaM20,DBLP:conf/ijcai/NarodytskaIPM18,DBLP:journals/jmlr/DemirovicLHCBLR22}, we focus on the training problem in which we want to find the smallest size perfect decision tree for the input training examples from two classes.
\emph{Perfect} means that all examples are classified correctly.
This choice is to create a baseline method and we aim to later extend our approach to weaker accuracy guarantees and more classes.
We build on recent theoretical algorithmic research that investigated what properties of the training data make computing smallest decision trees hard or tractable~\citep{OrdyniakS21,EibenOrdyniakPaesaniSzeider23,KobourovLMPRSSW23,KKSS23}.
Our starting point is the algorithmic paradigm of \emph{witness trees}, introduced by \citet{KKSS23}.
This paradigm improved on previous approaches~\citep{OrdyniakS21,EibenOrdyniakPaesaniSzeider23} with a better running time guarantee of $\Oh((6\delta D s)^s \cdot s n)$ in terms of the number $n$ of training examples and the small parameters largest domain size $D$ ($D=2$ is common), largest number $\delta\le d$ of features in which two training examples differ,\footnote{See \citet{OrdyniakS21} and \Cref{tab:overview_datasets_reduced} in the appendix for evidence that $\delta$ is a reasonably small parameter (much smaller than the number~$d$ of dimensions) in several datasets.}
and the number $s$ of cuts in the solution tree.
Importantly for us, the paradigm simplified previous theoretical approaches to such an extent as to make it promising as a foundation of a competitive solver.
Additionally, the approach generalizes to other learning models and even ensembles~\citep{KKSS23,OPRS24}, making it particularly interesting to test in practice.

The main idea of the witness-tree paradigm is to start with a trivial decision tree consisting of a single leaf labeled by an arbitrary class. Then we pick a training example which is classified incorrectly, we call it \emph{dirty}, and we recursively try in a branching over all possibilities to refine the current tree with a new cut and a new leaf in which the dirty example is classified correctly.
We label the new leaf with the dirty example as \emph{witness} and mandate that all future refinements maintain all witnesses at their assigned leafs.
This reduces the search space that we need to explore to find the desired decision tree.
See \Cref{sec:base_algo} for a more detailed description.
In this work, we provide an implementation of the above paradigm and we extensively engineer it with improvements that give heuristic speedups while maintaining the optimality of the computed decision tree.

Our main result is that these improvements substantially speed up the algorithm to the point where the resulting solver, called \OS{}, performs substantially better than the state of the art (see \Cref{sec:evaluation}).
To the best of our knowledge, the state-of-the-art solvers for computing minimum-size perfect decision trees are SAT-based solvers by \citet{DBLP:conf/ijcai/NarodytskaIPM18} and \citet{DBLP:conf/sat/JanotaM20} and the dynamic-programming algorithm \texttt{MurTree} by \citet{DBLP:journals/jmlr/DemirovicLHCBLR22}.
All other implementations for computing variants of optimal decision trees solve different optimization problems and are not suitable for the task discussed in the present paper, see the appendix for details.

We tested the solvers on standard benchmark data sets from the Penn Machine Learning Benchmarks~\citep{romano2021pmlb}.
Out of the three state-of-the-art solvers \texttt{MurTree} performed the best, solving~$371$ out of the~$700$ instances. In comparison, \OS{} was able to solve~$388$ instances and achieves a mean~$32$-fold (median~$7$-fold) speedup over \texttt{MurTree}. \OS{} performs especially well compared to \texttt{MurTree} on instances that have dimensions with a large number of different values. This is due to the fact that \OS{} does not require binary dimensions and one of the above-mentioned improvements specifically exploits large domains.
We should also note the following. For the calculation of all speedup factors in this paper we ignored all instances that were solved in less than one second by both algorithms as these instances are not a good indicator for the scaling behavior of the running times.


To achieve these results, we employ the following techniques.
First, we apply data reduction rules that simplify the instances (\cref{sec:base_algo}).
Second, we improve the branching rule which introduces the next cut into the partial decision tree in several ways (\cref{sec:base_algo}):
We carefully select which dirty examples to use for branching, specify a suitable sequence of the cuts to try, and observe that some cuts can be omitted from branching.
Third, we introduce lower-bounding strategies that determine a minimum number of cuts that still need to be added to the tree, to further shrink the search space (\cref{sec:lower_bounds}).
Fourth, we use symmetry-breaking techniques which we call subset constraints that leverage information from unsuccessfully returning branches which essentially fixes some examples into subtrees of the solution decision tree (\cref{sec:subset_constraints}).
This also yields an improved running-time bound of $\Oh((\delta D \log s)^s \cdot s n)$ (\cref{thm-dts-improved-algo}).
Finally, during the search we select certain subsets of examples for which we directly compute lower bounds, cache them, and exploit them later in the search (\cref{sec:subset_caching}).
We rigorously analyze all of the above techniques and prove them to preserve optimality of the computed trees.

Apart from improved scalability, other key advantages of our implementation include that, different from the state-of-the-art algorithms, it is not necessary to binarize the features, and that the general paradigm is very flexible so that it can be adapted for reoptimizing a given decision tree, other optimization goals such as minimizing the number of misclassified examples, and for other concepts related to decision trees such as decision lists.
Similarly, the above heuristic improvements are flexible and they apply to computing minimum-depth trees, for example.
In summary, we provide an implementation and extensive heuristic tuning of the witness-tree paradigm for computing minimum-size decision trees which is flexible and at the same time substantially outperforms the state of the art.

\toappendix{

  \paragraph{Implementations for computing optimal decision trees and state of the art for computing minimum-size perfect trees.}
  We are aware of the following 26 implementations that solve exactly some form of optimization problem related to computing small decision trees:
  \citet{DBLP:conf/cp/BessiereHO09,NijssenF10,VerwerZ17,bertsimas_optimal_2017,DBLP:conf/ijcai/NarodytskaIPM18,DBLP:conf/aaai/VerwerZ19,HuRS19,AGV20,ZhuMPNK20,FiratCGHZ20,DBLP:conf/sat/JanotaM20,AglinNS20,avellaneda_efficient_2020,Hu0HH20,LinZHRS20,VerhaegheNPQS20,GunlukKLMS21,DBLP:journals/jmlr/DemirovicLHCBLR22,BoutilierMZ22,McTavishZAKCRS22,LindenWD23,SCM23,DBLP:conf/icml/DemirovicHJ23,DBLP:journals/air/AlosAT23,SchidlerS24,CRB24}.
  
  Out of these, \citet{DBLP:conf/sat/JanotaM20,DBLP:conf/cp/BessiereHO09,NijssenF10,DBLP:conf/ijcai/NarodytskaIPM18,HuRS19,AGV20,avellaneda_efficient_2020,LinZHRS20,GunlukKLMS21,DBLP:journals/jmlr/DemirovicLHCBLR22,McTavishZAKCRS22,LindenWD23,DBLP:journals/air/AlosAT23,SchidlerS24,CRB24} would theoretically be suitable to minimize the size, that is, the number of inner nodes, of perfect trees, see the more detailed discussion below.
  
  All of the remaining implementations are not suitable for this optimization goal:
  Most of the remaining implementations minimize some form of classification error under a given depth constraint on the solution tree \cite{VerwerZ17,DBLP:conf/aaai/VerwerZ19,FiratCGHZ20,Hu0HH20,VerhaegheNPQS20,McTavishZAKCRS22,SCM23,DBLP:conf/icml/DemirovicHJ23}.
  It would thus be possible to find perfect trees using these implementations, but it is not possible to minimize the number of inner nodes.
  Note that, minimizing the size of the tree is a more difficult optimization problem than minimizing the depth because many more different topologies are possible and have to be tested in the search for the optimum.
  The remaining three implementations either do not consider axis-aligned cuts \citep{bertsimas_optimal_2017,BoutilierMZ22} as we do, or instead optimize some clustering property \citep{ZhuMPNK20}.

  Of the implementations that can minimize the size for perfect trees, we further had to exclude the following: \citet{DBLP:conf/cp/BessiereHO09} was superseded by the more efficient implementation by \citet{DBLP:conf/ijcai/NarodytskaIPM18}.
  Similarly, \citet{NijssenF10} was superseded by \citet{AglinNS20} which was in turn superseded by the more efficient implementation by \citet{DBLP:journals/jmlr/DemirovicLHCBLR22}.
  Moreover, \citet{HuRS19} was superseded by \citet{LinZHRS20} (see also \citet{DBLP:journals/jmlr/DemirovicLHCBLR22}) which was in turn superseded by the more efficient implementation by \citet{McTavishZAKCRS22}.
  These three implementations minimize a sum of a classification error and a given real parameter $\lambda$ times the number of leaves of the solution tree, perhaps with an additional depth constraint.
  By setting the depth constraint large enough and the parameter $\lambda$ small enough these implementations hence find minimum-size perfect trees.
  However, we tested the most efficient implementation with the corresponding parameter values and, apparently due to not being optimized for such parameters, it was clear that it is by far not competitive for finding minimum-size perfect trees.
  For similar reasons, we had to exclude \citet{CRB24} because prelimary experiments showed that it is also clearly not competitive for the parameter settings that lead to finding minimum-size perfect trees.
  
  In the implementation by \citet{GunlukKLMS21} the topology of the tree is given as an input.
  One could thus find the minimum size by running their algorithm for all possible topologies.
  However, it is clear that this cannot lead to a competitive solver, hence we excluded it.
  The implementation by \citet{LindenWD23} solves a much more general problem that can minimize any so-called separable optimization function.
  However, apparently due to the generality, it was found to be less efficient than \citet{DBLP:journals/jmlr/DemirovicLHCBLR22}'s, see \citet{SchidlerS24}.
  In \citet{DBLP:journals/air/AlosAT23} the authors mention that the performance of their algorithm is similar to \citet{DBLP:conf/sat/JanotaM20}.
  The remaining implementations we had to exclude because, while theoretically their approach could be used to minimize the size of the trees, no implementation of the size-minimization variant was available: \citet{avellaneda_efficient_2020} describes an algorithm that solves the decision problem of whether there exists a perfect decision tree that satisfies a given depth constraint and size constraint.
  By setting the depth large enough and incrementally increasing the size constraint one could thus find a minimum-size perfect tree.
  However, the implementation by \citet{avellaneda_efficient_2020} does not allow for such an approach.\footnote{Instead, either the implementation minimizes the depth or it finds the minimum-size perfect tree under the additional constraint that the depth is minmal.
  The minimum-size tree might not have minimal depth and thus the implementation does not find a minimum-size perfect tree in all cases.}
  Consequently, we had to exclude it as well.
  Similarly, \citet{AGV20}'s approach could be used to minimize the size of the trees but the available implementation does not allow for such a setting.
  While \citet{SchidlerS24} mention that their approach could be adapted to minimizing the size of the trees instead of the depth, we are not aware of an implementation of the size-minimization variant.

  Overall, the state of the art for minimizing the size of perfect decision trees thus comprises \citet{DBLP:conf/sat/JanotaM20,DBLP:conf/ijcai/NarodytskaIPM18,DBLP:journals/jmlr/DemirovicLHCBLR22}.

}

\section{Preliminaries}
\label{sec:Preliminaries}

For $n\in\mathds{N}$, we denote~$[n] \coloneqq \{ 1,2,\ldots ,n\}$. For a vector~$x\in \mathds{R}^d$, we denote by~$x[i]$ the~$i$th entry in~$x$.
Let~$\Sigma$ be a set of \emph{class symbols}. 
We consider binary classification, so we assume that~$\Sigma = \{ \text{blue}, \text{red} \}$. 
A \emph{data set} with classes~$\Sigma$ is a tuple $(E, \lambda)$ of a set of \emph{examples} $E \subseteq \mathds{R}^d$ and their class labeling $\lambda \colon E \to \Sigma$.
Note that this formulation captures ordered features because such features can be mapped into~$\mathds{R}$.
We assume that~$(E,\lambda)$ does \emph{not} contain two examples with identical coordinates but different class labels.
For a fixed data set~$(E,\lambda)$, we let~$n\coloneqq |E|$  denote the \emph{number of examples} and~$d$ the \emph{dimension} of the data set.

For each dimension~$i\in [d]$, we let~$\Thr(i)$ be a smallest set of \emph{thresholds} such that for each pair of examples~$e_1, e_2\in E$ with~$e_1[i]<e_2[i]$ there is a threshold~$t\in \Thr(i)$ with~$e_1[i]\leq t < e_2[i]$. 
Note that such a set~$\Thr(i)$ can be computed in~$\Oh(n \log n)$~time and that $|\Thr(i)| \leq D$.
A \emph{cut} is a pair~$(i,t)$ where~$i\in [d]$ is a dimension and~$t\in \Thr(i)$ is a threshold in dimension~$d$. 
The set of all cuts is denoted by~$\Cuts(E)$. 
The \emph{left side} of a cut with respect to~$E'\subseteq E$ is~$E'[\leq (i,t)] \coloneqq \{ e\in E' \mid e[i] \leq t \}$, and the \emph{right side} of a cut with respect to~$E'$ is~$E'[> (i,t)] \coloneqq \{ e\in E' \mid e[i] > t \}$.

A \emph{decision tree} is a tuple~$\mathcal{D} = (T,\cut,\cla)$ where~$T$ is an ordered rooted binary tree with vertex set~$V(\mathcal{D})$, that is, each inner vertex has a well-defined left and right child. 
Furthermore, $\cut: V(\mathcal{D})\to \Cuts(E)$ maps every inner vertex to a cut and~$\cla: V(\mathcal{D})\rightarrow \Sigma$ labels each leaf.
The \emph{size} of $\mathcal{D}$ is the number of inner vertices.
For each vertex~$v\in V(\mathcal{D})$ we define a set~$E[\mathcal{D},v]\subseteq E$ of examples that are \emph{assigned} to~$v$.  
If~$v$ is the root of~$\mathcal{D}$, then~$E[\mathcal{D},v] \coloneqq E$. 
For other vertices, the assigned values are defined via the cuts at inner vertices. More precisely, for a parent vertex~$p$ with left child~$u$ and right child~$v$ we set~$E[\mathcal{D},u]\coloneqq E[\mathcal{D},p][\leq \cut(p)]$  and~$E[\mathcal{D},v]\coloneqq E[\mathcal{D},p][> \cut(p)]$. \todo[inline]{C: Still not a fan of the double [][] notation. Thoughts?}
If~$\mathcal{D}$ is clear from the context, we just write~$E[v]$. 
By definition, each example~$e\in E$ is \emph{assigned} to exactly one leaf~$\ell$ of~$\mathcal{D}$. 
We say that~$\ell$ is \emph{the leaf of~$e$} in~$\mathcal{D}$ and denote~$\ell$ by~$\leaf(\mathcal{D},e)$ or just~$\leaf(e)$ if~$\mathcal{D}$ is clear. 
An example $e\in E$ is \emph{dirty} in~$T$ if we have $\lambda(e)\ne \cla(\ell)$ with~$\ell$ being the leaf of $e$. 
The set of all dirty examples in $T$ is $\dirty(T)$. 
A decision tree \emph{classifies}~$(E,\lambda)$ if the class of every example~$e\in E$ matches the class of its leaf, that is, we have~$\lambda(e) = \cla(\leaf(e))$ for all~$e\in E$. 
In this case we call~$\mathcal{D}$ \emph{perfect}.

We aim to solve the following computational problem.

\probdefopt{\textsc{Minimum-Size Decision Tree (MSDT)}}
{A data set~$(E,\lambda)$.}
{Find a smallest decision tree that classifies~$(E,\lambda)$.}

We let \textsc{Bounded-Size Decision Tree (BSDT)} denote the variant where we are also given a number~$s$ and need to find a decision tree of size at most~$s$ or decide that no such tree exists. We can solve \textsc{MSDT} by solving multiple instances of \textsc{BSDT} as follows:
Start with~$s=1$. Increase~$s$ one by one and solve \textsc{BSDT} for each~$s$ until a tree is found.
This tree must be a solution for \textsc{MSDT}.


Proofs marked with~$(\bigstar)$ are deferred to the appendix.

\section{Base version -- the witness-tree algorithm}
\label{sec:base_algo}
\appendixsection{sec:base_algo}

The base version of our algorithm is the witness-tree algorithm of \citet{KKSS23}.
In the following, we briefly explain the algorithm. The presented algorithm solves the \textsc{BSDT} problem where we have a fixed size threshold~$s$. 
We then use this algorithm to solve \textsc{MSDT} as described above.

\paragraph{Witness trees and one-step refinements.}
Witness trees are decision trees augmented with a labeling of the leafs by examples. 
Formally, a \emph{witness tree} is a tuple~$W = (T,\cut,\cla,\wit)$~where $(T,\cut,\cla)$ is a decision tree and~$\wit: V(W)\rightarrow E$ is a map that maps each leaf~$\ell$ to an example~$e\in E[\ell]$ of the same class as~$\ell$. 
These examples are called \emph{witnesses} of~$W$. 
%
During branching, $W$ is extended by adding a new inner vertex and a new leaf.
This is done via one-step refinements.
Formally, a \emph{one-step refinement} of~$W$ is a tuple~$(v,i,t,e)$ where~$v\in V(W)$ is some vertex in~$W$,~$(i,t)$ is a cut in~$\Cuts(E)$, and~$e\in E[W,v]$ is an example.

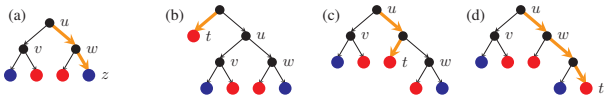
\begin{figure}
\centering
\scalebox{0.6}{
\begin{tikzpicture}[scale=0.29]
\newcommand{\sizeL}{8}
\newcommand{\sizeS}{6}

\node[label=right:{(a)}] (1) at (-4, 6.5){};
\draw [-stealth, orange, line width=2pt](0,6) -- (2,4.3);
\draw [-stealth, orange, line width=2pt](2,4) -- (3,2.3);
\node[fill,circle,inner sep=0pt,minimum size=\sizeS pt,label=right:{$u$}] (1) at (0, 6){};
\draw [-stealth](0,6) -- (-2,4.3);

\node[fill,circle,inner sep=0pt,minimum size=\sizeS pt,label=right:{$v$}] (1) at (-2, 4){};
\draw [-stealth](-2,4) -- (-3,2.3);
\draw [-stealth](-2,4) -- (-1,2.3);
\node[fill,circle,inner sep=0pt,minimum size=\sizeS pt,label=right:{$w$}] (1) at (2, 4){};
\draw [-stealth](2,4) -- (1,2.3);

\node[fill=blue,circle,inner sep=0pt,minimum size=\sizeL pt] (1) at (-3, 2){};
\node[fill=red,circle,inner sep=0pt,minimum size=\sizeL pt] (1) at (-1, 2){};

\node[fill=red,circle,inner sep=0pt,minimum size=\sizeL pt] (1) at (1, 2){};
\node[fill=blue,circle,inner sep=0pt,minimum size=\sizeL pt,label=right:{$z$}] (1) at (3, 2){};



\node[label=right:{(b)}] (1) at (8, 6.5){};
\draw [-stealth, orange, line width=2pt](13,7) -- (11,5.3);
\node[fill,circle,inner sep=0pt,minimum size=\sizeS pt] (1) at (13, 7){};

\draw [-stealth](13,7) -- (15,5.3);
\node[fill=red,circle,inner sep=0pt,minimum size=\sizeL pt,label=right:{$t$}] (1) at (11, 5){};

\node[fill,circle,inner sep=0pt,minimum size=\sizeS pt,label=right:{$u$}] (1) at (15, 5){};
\draw [-stealth](15,5) -- (13,3.3);
\draw [-stealth](15,5) -- (17,3.3);

\node[fill,circle,inner sep=0pt,minimum size=\sizeS pt,label=right:{$v$}] (1) at (13.0, 3){};
\draw [-stealth](13,3) -- (12,1.3);
\draw [-stealth](13,3) -- (14,1.3);
\node[fill,circle,inner sep=0pt,minimum size=\sizeS pt,label=right:{$w$}] (1) at (17.0, 3){};
\draw [-stealth](17,3) -- (16,1.3);
\draw [-stealth](17,3) -- (18,1.3);

\node[fill=blue,circle,inner sep=0pt,minimum size=\sizeL pt] (1) at (12, 1){};
\node[fill=red,circle,inner sep=0pt,minimum size=\sizeL pt] (1) at (14, 1){};

\node[fill=red,circle,inner sep=0pt,minimum size=\sizeL pt] (1) at (16, 1){};
\node[fill=blue,circle,inner sep=0pt,minimum size=\sizeL pt] (1) at (18, 1){};


\node[label=right:{(c)}] (1) at (20, 6.5){};
\draw [-stealth, orange, line width=2pt](25,7) -- (27,5.3);
\draw [-stealth, orange, line width=2pt](27,5) -- (26,3.3);
\node[fill,circle,inner sep=0pt,minimum size=\sizeS pt,label=right:{$u$}] (1) at (25, 7){};
\draw [-stealth](25,7) -- (23,5.3);

\node[fill,circle,inner sep=0pt,minimum size=\sizeS pt,label=right:{$v$}] (1) at (23.0, 5){};
\draw [-stealth](23,5) -- (22,3.3);
\draw [-stealth](23,5) -- (24,3.3);
\node[fill,circle,inner sep=0pt,minimum size=\sizeS pt] (1) at (27.0, 5){};

\draw [-stealth](27,5) -- (29,3.3);

\node[fill=blue,circle,inner sep=0pt,minimum size=\sizeL pt] (1) at (22, 3){};
\node[fill=red,circle,inner sep=0pt,minimum size=\sizeL pt] (1) at (24, 3){};

\node[fill=red,circle,inner sep=0pt,minimum size=\sizeL pt,label=right:{$t$}] (1) at (26, 3){};
\node[fill,circle,inner sep=0pt,minimum size=\sizeS pt,label=right:{$w$}] (1) at (29, 3){};
\draw [-stealth](29,3) -- (28,1.3);
\draw [-stealth](29,3) -- (30,1.3);

\node[fill=red,circle,inner sep=0pt,minimum size=\sizeL pt] (1) at (28, 1){};
\node[fill=blue,circle,inner sep=0pt,minimum size=\sizeL pt] (1) at (30, 1){};

\node[label=right:{(d)}] (1) at (31, 6.5){};
\draw [-stealth, orange, line width=2pt](36,7) -- (38,5.3);
\draw [-stealth, orange, line width=2pt](38,5) -- (40,3.3);
\draw [-stealth, orange, line width=2pt](40,3) -- (41,1.3);
\node[fill,circle,inner sep=0pt,minimum size=\sizeS pt,label=right:{$u$}] (1) at (36, 7){};
\draw [-stealth](36,7) -- (34,5.3);

\node[fill,circle,inner sep=0pt,minimum size=\sizeS pt,label=right:{$v$}] (1) at (34.0, 5){};
\draw [-stealth](34,5) -- (33,3.3);
\draw [-stealth](34,5) -- (35,3.3);
\node[fill,circle,inner sep=0pt,minimum size=\sizeS pt,label=right:{$w$}] (1) at (38.0, 5){};
\draw [-stealth](38,5) -- (37,3.3);

\node[fill=blue,circle,inner sep=0pt,minimum size=\sizeL pt] (1) at (33, 3){};
\node[fill=red,circle,inner sep=0pt,minimum size=\sizeL pt] (1) at (35, 3){};

\node[fill=red,circle,inner sep=0pt,minimum size=\sizeL pt] (1) at (37, 3){};
\node[fill,circle,inner sep=0pt,minimum size=\sizeS pt] (1) at (40, 3){};

\draw [-stealth](40,3) -- (39,1.3);

\node[fill=blue,circle,inner sep=0pt,minimum size=\sizeL pt] (1) at (39, 1){};
\node[fill=red,circle,inner sep=0pt,minimum size=\sizeL pt,label=right:{$t$}] (1) at (41, 1){};


\end{tikzpicture}
}
\caption{Examples of one-step refinements. 
(a) shows a current witness tree~$W$ where one red example~$e$ is misclassified in leaf~$z$.
(b), (c), and (d) show all three possible one-step refinements of~$W$ where~$e$ is the witness of the new leaf~$t$.
The orange path is the classification path of example~$e$.
Note that the new node can have any cut separating example~$e$ from the witness of leaf~$z$.}
\label{fig-os-ref}
\end{figure}

A one-step refinement~$(v,i,t,e)$ is applied to a witness tree~$W$ as follows (see \Cref{fig-os-ref}):
First, subdivide the edge from~$v$ to its parent~$p$ 
by adding a new inner vertex~$u$ with cut~$(i,t)$.
Now,~$p$ is the parent of~$u$ and~$u$ is the parent of~$v$. 
If~$v$ was the left (right) child of~$p$ then~$u$ becomes the left (right) child of~$p$. 
If~$v$ is the root of the tree (hence~$v$ has no parent~$p$) then~$u$ becomes the new root.
Second, add a new leaf~$\ell$ as the second child of~$u$ (with~$v$ being the first child).
The example~$e$ determines the assignment of~$\ell$ and~$v$ to the sides of the cut~$(i,t)$:
The example~$e$ needs to be assigned to leaf~$\ell$, so if~$e$ is assigned to the left child, then~$\ell$ becomes the left child. Otherwise,~$\ell$ becomes the right child. 
Finally, the class of~$\ell$ is set to~$\lambda(e)$ and the witness of~$\ell$ is set to~$e$. 

Let~$R$ denote the newly created witness tree.
We write~$W\xrightarrow{r} R$ to indicate that~$R$ was created by applying~$r$ to~$W$.
An application of a sequence~$r_1, \ldots, r_q$ of one-step refinements to create~$R$ is denoted by~$W\xrightarrow{r_1,\ldots,r_n} R$.
$\refine(W)$ is the set of all one-step refinements~$r = (v,i,t,e)$ of~$W$ such that~$e\in \dirty(W)$ is dirty in~$W$,~$e$ and~$\wit(\leaf(e))$ are on different sides of the cut~$(i,t)$ and~$r$ does \emph{not} change the leaf of any witness of~$W$.
This set is important because it contains all one-step refinements that are relevant for the algorithm.

\paragraph{Description of the algorithm solving \textsc{BSDT}.}
\Cref{alg:base} shows the pseudocode. For the correctness proof, refer to the work of \citet{KKSS23}.
\Cref{alg:base} starts with a witness tree~$W$ that consists of just one leaf~$\ell$.
An arbitrary witness for this leaf~$\ell$ is chosen and the class of~$\ell$ is set to the class of this witness.
Next, \texttt{Refine} is called (\Cref{line-refine}).
\Cref{alg:base} then recursively chooses a dirty example~$e$ and iterates over all one-step refinements in $\refine(W)$ in which~$e$ is the dirty example. The exact order of this iteration is specified in the appendix.
The idea is that since~$e$ is currently dirty, we need to assign~$e$ to a new leaf with class~$\lambda(e)$ since all examples are required to be correctly classified.
The one-step refinements with~$e$ as dirty example do this by making~$e$ the witness of the newly added leaf and assigning the class of~$e$ to that leaf.
\Cref{alg:base} traverses a search tree where each node represents a call of \texttt{Refine}.
To avoid confusion, from now on we call the vertices of the search tree \emph{nodes}, and the vertices in a decision/witness tree \emph{vertices}. For a node~$N$, we let~$\tree(N)$ denote the witness tree~$W$ that \texttt{Refine} is called with and we use~$\ex(N)$ to refer to the dirty example that is chosen in \Cref{line-dirty}.

\begin{algorithm2e}[t]
  \DontPrintSemicolon
  \SetKwProg{Fn}{Function}{}{}
  \SetKwFunction{Refine}{Refine}

    \KwIn{A witness tree $W$, a data set $(E, \lambda)$, and a maximum size $s\in \mathds{N}$.}

    \KwOut{A perfect witness tree of size at most $s$ or $\bot$ if none could be found.}

    \BlankLine
    \Fn{\Refine($W$, $(E,\lambda)$, $s$)\label{line-refine}}{

    \lIf{$W$ classifies $(E,\lambda)$}{\Return{$W$}}

    \lIf{$W$ has size $s$}{\Return{$\bot$}}

    $e \gets$ some dirty example from $\dirty(W)$\label{line-dirty}\;

      \ForAll{$r = (v, i, t, e)\in \refine(W)$\label{line-loop-osr}}{
        Apply $r$ to $W$ to create a new witness tree $R$\label{line-apply-osr}\;
        $R' \gets$ \Refine($R, (E, \lambda), s$)\label{line-apply-refine}\;
        \lIf{$R' \not= \bot$}{\Return{$R'$}}
      }
    \Return{$\bot$}
  }  
  \caption{Base witness-tree algorithm.}
  \label{alg:base}
\end{algorithm2e}

\toappendix{

\paragraph{Order of one-step refinements in \cref{line-loop-osr}.}
One of our algorithmic improvements (the Subset Constraints presented in \Cref{sec:subset_constraints}) relies on a specific order in which the one-step refinements are considered in \Cref{line-loop-osr}:
In the outermost loop, the algorithm fixes the vertex~$v$ of the one-step refinement by iterating over the vertices on the leaf-to-root path starting at the leaf~$\ell$ of~$e$.
In the next loop, the algorithm fixes the dimension~$i$ by iterating over all $\delta$~dimensions in which~$e$ and~$\wit(\ell)$ differ. 
In the innermost loop, the algorithm fixes the threshold~$t$ by iterating over all thresholds in~$\Thr(i)$ between~$e[i]$ and~$\wit(\ell)[i]$.
The algorithm always starts with the thresholds that are closest to~$e[i]$ regardless of whether~$e[i]>\wit(\ell)[i]$ or~$e[i]<\wit(\ell)[i]$. 
Any one-step refinement~$(v,i,t,e)$ that changes the leaf of some witness is not contained in~$\refine(W)$ and thus skipped.

}

We now describe some first improvements to this algorithm, leading to the first version of the solver. 

\paragraph{Dirty example priority.}
\label{sec:dirty_priority}

For the correctness it is not relevant which dirty example is chosen in \Cref{line-dirty}.
This permits our first improvement: choose a dirty example that minimizes the number of branches.
Ideally one can chose the dirty example such that the number of one-step refinements is as small as possible.
Computing this for every dirty example for every call of \texttt{Refine} takes too long, however.
Hence, we update this number for an example~$e$ only whenever~$e$ is assigned to a new leaf; 
preliminary experiments showed that this is a good trade-off.
We can use the same idea to choose the witness of the initial leaf and the initial dirty example.
That means before the algorithm starts we can find the pair of examples with different classes that minimizes the number of one-step refinements, that is, the number of cuts separating them and choose one of these two elements as initial witness.


\paragraph{Data reduction.}


In the following, we briefly describe the used rules.
Full correctness proofs and their experimental evaluation can be found in the appendix.
First, if there are two examples~$e_1$ and~$e_2$ having the same value in each dimension (recall that~$e_1$ and~$e_2$ are required to have the same class label), we remove one of them from~$(E, \lambda)$.
Second, if there is a dimension~$i$ such that all examples have the same value in this dimension, we remove~$i$. 
Third, if the instance has two equivalent cuts, then remove one of them.
Here, two cuts~$(i_1, t_1), (i_2, t_2)\in \Cuts(E)$ are \emph{equivalent} if~$E[\leq (i_1, t_1)] = E[\leq (i_2, t_2)]$.
For the fourth rule, consider a pair of cuts~$(i, t_1)$ and~$(i, t_2)$ in the same dimension~$i$ with $t_1 < t_2$. 
Now, if all examples on the left (right) side of both cuts have the same class, then remove~$(i, t_1)$ (or~$(i,t_2)$ respectively).
For the last rule, consider a pair of dimensions~$i_1$ and~$i_2$. 
If there is an ordering of the examples such that the values of the examples do not decrease in both dimensions, then one can construct a new dimension~$i_{1,2}$ such that for each cut in dimensions~$i_1$ and~$i_2$, there is an equivalent cut in dimension~$i_{1,2}$ and vice versa. 
Now,~$i_1$ and~$i_2$ are replaced by~$i_{1,2}$.


\toappendix{
Here, we provide a more detailed description of all of our reduction rules and prove their correctness.
Furthermore, we provide examples for their application.
Formally, a reduction rule is the following.


\begin{definition}
	A \emph{reduction rule} is a function~$r$ that maps any data set~$(E, \lambda)$ to another data set~$(E', \lambda')$. We call~$r$ \emph{correct} if there is a decision tree of size~$s$ that correctly classifies~$(E, \lambda)$, if and only if there is a decision tree of size~$s$ that correctly classifies~$(E', \lambda')$.
\end{definition}

In the following subsections we introduce five reduction rules that decrease the size of the data set. To demonstrate how they work we use the data set in \Cref{tab:ex_preprocessing} as an example.
The overall effect of the application of all of our reduction rules is shown in \Cref{tab:overview_datasets_reduced}.

\subsection{Removing examples and dimensions}

The first two reduction rules are very simple. We can use them to remove unnecessary examples or dimensions.

\begin{rrule}[Remove Duplicate Example Rule]
Let~$(E,\lambda)$ be a data set and let~$e_1,e_2\in E$ be a pair of examples. If~$e_1$ and~$e_2$ have the same value in all dimensions, then remove~$e_1$.
\end{rrule}

\begin{rrule}[Remove Dimension Rule]
	Let~$(E,\lambda)$ be a data set and let~$i$ be a dimension. If all examples in~$i$ have the same value, then remove dimension~$i$.
\end{rrule}

The first rule is clearly correct because two examples with the same values in all dimensions have the same class and would always end up in the same leaf of any decision tree. Removing one of them therefore does not change whether a decision tree is perfect or not.

The second rule is clearly correct because a dimension where all examples have the same value does not have any cuts. That means a decision tree never use this dimension anyway.

\subsection{Equivalent cuts}

The third reduction rule is called the \emph{Equivalent Cuts Rule}. The idea of this rule is to remove cuts from the data set that are equivalent to other cuts. To better understand what it means for two cuts to be equivalent we can look at the two cuts~$(d_1,1)$ and~$(d_2,1)$ in \Cref{tab:ex_preprocessing}. Their left sides~$E[\leq (d_1,1)]$ and~$E[\leq (d_2,1)]$ are both equal to~$\{ a,b \}$. That means if we replace the cut~$(d_1,1)$ in a decision tree with the cut~$(d_2,1)$ no example would be assigned to a different leaf.

With this we can define an equivalence relation over the set of all cuts~$\Cuts(E)$. Two cuts~$(i_1,t_1)$,~$(i_2,t_2)\in \Cuts(E)$ are \emph{equivalent} if~$E[\leq (i_1,t_1)] = E[\leq (i_2,t_2)]$. We can now use this relation to define the Equivalent Cuts Rule:

\begin{rrule}[Equivalent Cuts Rule]
	Let~$(E,\lambda)$ be a data set and let~$(i_1,t_1), (i_2,t_2)\in \Cuts(E)$ be two cuts. If $(i_1,t_1)$ and $(i_2,t_2)$ are equivalent, then remove~$(i_1,t_1)$.
\end{rrule}

Next we explain how removing a cut~$(i,t)$ from a data set works. To remove this cut, we need to remove the thresholds~$t$ from~$\Thr(i)$. According to the definition of~$\Thr(i)$ there must be at least one pair of examples~$e_1,e_2\in E$ such that~$t$ is the only threshold in dimension~$i$ with~$e_1[i]\leq t<e_2[i]$. By setting the value in dimension~$i$ of the examples in every pair with this property to~$t$, we remove~$t$ from~$\Thr(i)$. At the same time this transformation ensures that the left and right side of every other cut does not change.

\begin{table}[t]
	\centering
	\begin{tabular}{ccccc}
		\toprule
		    & $d_1$ & $d_2$ & $d_3$ & class \\
		\midrule
		$a$ & $0$  & $1$  & $0$  & \textcolor{red}{red} \\
		$b$ & $1$  & $0$  & $0$  & \textcolor{red}{red} \\
		$c$ & $2$  & $2$  & $2$  & \textcolor{blue}{blue} \\
		$d$ & $3$  & $2$  & $1$  & \textcolor{red}{red} \\
		\bottomrule 
	\end{tabular}
	\caption{Example data set for the reduction rules.}
	\label{tab:ex_preprocessing}
\end{table}

If for example we want to remove the cut~$(d_1,1)$ from the data set in \Cref{tab:ex_preprocessing} we would find that the only pair of examples that has the property mentioned above is the pair~$(b,c)$. 
All other pairs can also be split by the cuts~$(d_1,0)$ or~$(d_1,2)$. That means we now set the values of~$b$ and~$c$ in dimension~$d_1$ to~$1$.

Now we just need to show that the Equivalent Cuts Rule is correct.

\begin{lemma}
	The Equivalent Cuts Rule is correct.
\end{lemma}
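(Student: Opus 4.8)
The plan is to prove both directions of the biconditional at once, by exhibiting a size- and classification-preserving correspondence between decision trees for $(E,\lambda)$ and decision trees for the reduced data set $(E',\lambda')$. Write $c_1 = (i_1,t_1)$ for the removed cut and $c_2 = (i_2,t_2)$ for the equivalent cut, and let $\phi\colon E \to E'$ denote the obvious map sending each example to its image after the value modification; since only coordinates in dimension $i_1$ change and labels are untouched, $\phi$ preserves labels, i.e.\ $\lambda' = \lambda\circ\phi^{-1}$. The correspondence rests on two observations: (i) no tree needs to use $c_1$, because $c_1$ and $c_2$ induce the same bipartition everywhere; and (ii) the value modification that deletes $c_1$ leaves the bipartitions of all surviving cuts intact, so it induces a bipartition-preserving bijection between $\Cuts(E)\setminus\{c_1\}$ and $\Cuts(E')$.

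For (i), I would first note that $i_1 \neq i_2$: two distinct thresholds in the same dimension are never equivalent, since $\Thr$ is chosen minimally and hence $E[\leq (i,t)]$ is strictly monotone in $t$. Now I would take any tree $\mathcal{D}$ classifying $(E,\lambda)$ and replace every occurrence of $c_1$ by $c_2$. Because $E[\leq c_1] = E[\leq c_2]$ holds globally, it still holds after intersecting with the example set reaching any node; hence the bipartition at every rerouted node is unchanged, the example-to-leaf assignment is identical, and the resulting tree $\mathcal{D}_0$ still classifies $(E,\lambda)$, has the same size, and uses only cuts in $\Cuts(E)\setminus\{c_1\}$.

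For (ii), I would argue at the level of thresholds. In each dimension $i$, the minimal set $\Thr(i)$ contains exactly one threshold in each gap between two consecutive distinct values, so cuts in dimension $i$ are in bijection with these gaps and the bipartition of a cut is determined solely by its gap. For $i \neq i_1$ no values change, so everything is preserved. In dimension $i_1$, let $v_L$ be the largest value $\leq t_1$ and $v_R$ the smallest value $> t_1$; the pairs for which $t_1$ is the unique separating threshold are exactly those with coordinates $v_L$ and $v_R$, and the modification sets all such coordinates to $t_1$. This merges the two values $v_L$ and $v_R$ into the single value $t_1$, deleting precisely the gap of $c_1$ while keeping the strict order $\dots < v_L \le t_1 < v_R < \dots$ with all neighbouring values intact; consequently every other gap, and hence every surviving cut, keeps exactly the same example bipartition under $\phi$. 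This yields the desired bijection between $\Cuts(E)\setminus\{c_1\}$ and $\Cuts(E')$.

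Combining the two observations finishes the proof: translating each cut of $\mathcal{D}_0$ to its $\phi$-image gives a tree of the same size classifying $(E',\lambda')$, and translating cuts backwards turns any size-$s$ tree for $(E',\lambda')$ into a size-$s$ tree for $(E,\lambda)$; in both directions $\phi$ preserves labels, so perfectness is preserved. A final routine check confirms that $(E',\lambda')$ is a valid data set, since the two merged coordinate values still differ in dimension $i_2$ and so no two examples of different classes collapse to identical coordinates. I expect observation (ii)---pinning down exactly which pairs are collapsed and verifying that this removes only the gap of $c_1$ without disturbing any other cut---to be the main obstacle, as it is the only place where the specific structure of the minimal threshold sets $\Thr(i)$ must be used; the substitution argument in (i) is comparatively routine.
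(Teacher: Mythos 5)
Your proof is correct and follows essentially the same route as the paper's: one direction by substituting every occurrence of the removed cut $c_1$ with its equivalent cut $c_2$ (identical bipartitions give an identical example-to-leaf assignment), the other by noting that the value modification deleting $c_1$ preserves the bipartition of every surviving cut, so any tree for the reduced data set also classifies the original one. The only difference is that you prove in detail (via the gap/threshold analysis in your observation (ii), plus the checks that $i_1 \neq i_2$ and that no examples collapse) what the paper merely asserts in its description of how a cut is removed, namely that the left and right sides of all other cuts are unchanged.
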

\begin{proof}
	Let~$(E,\lambda)$ be the original data set and~$(E',\lambda')$ be the data set that was created by the equivalent cuts rule. If a decision tree correctly classifies~$(E',\lambda')$ then it also correctly classifies~$(E,\lambda')$ since the rule only removes cuts. If a decision tree correctly classifies~$(E,\lambda)$ we can replace all cuts that were removed by the rule with the one cut that was not removed from their equivalence class. This creates a tree with the same size that correctly classifies~$(E',\lambda')$.
\end{proof}

\subsection{Reducing the size of dimensions}

With the fourth reduction rule, we want to remove the extreme values in some dimensions. For this let us again look at dimension~$d_1$ in \Cref{tab:ex_preprocessing}. The left side~$E[\leq (d_1,0)]$ of the cut~$(d_1,0)$ only contains a red example. The left side~$E[\leq (d_1,1)]$ of the cut~$(d_1,1)$ also only contains red examples while the right side contains fewer examples.

If one of these cuts is used in a decision tree that correctly classifies the data and does not contain empty leafs, then the left subtree of the vertex using these cuts would just be a single leaf with the class \textit{red}. That means replacing the cut~$(d_1,0)$ with the cut~$(d_1,1)$ in such a tree would create a decision tree that has the same size and still correctly classifies the data.

This leads to the following reduction rule.

\begin{rrule}[Dimension Reduction Rule]
	Let~$(E,\lambda)$ be a data set and let~$(i,t_1), (i,t_2)$ with~$t_1< t_2$ be a pair of cuts. If all examples on the left side of both cuts have the same class, then remove~$(i,t_1)$. Similarly, if all examples on the right side of both cuts have the same class, then remove~$(i,t_2)$.
\end{rrule}

Removing a cut works the same way as it does with the Equivalent Cuts Rule.

\begin{lemma}
	The Dimension Reduction Rule is correct.
\end{lemma}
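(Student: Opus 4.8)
The plan is to prove correctness by showing that the minimum size of a perfect decision tree is the same for the original data set $(E,\lambda)$ and the reduced data set $(E',\lambda')$; write $\mathrm{OPT}(E)$ and $\mathrm{OPT}(E')$ for these two quantities. I would treat the rule that removes $(i,t_1)$ under the assumption that all examples on the left side of both cuts have a common class $c$ — equivalently that the larger left side $E[\leq (i,t_2)]$ is monochromatic — and note that the case removing $(i,t_2)$ is symmetric, exchanging the roles of left/right and of $t_1/t_2$. Throughout I would use, exactly as for the Equivalent Cuts Rule, that deleting a cut only changes coordinate values so that the left and right side of every \emph{other} cut is preserved.

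One inequality is immediate and parallels the Equivalent Cuts Rule: any perfect tree for $(E',\lambda')$ uses only cuts of $\Cuts(E')$, each of which induces on $E$ the same partition as a corresponding cut of $\Cuts(E)\setminus\{(i,t_1)\}$, so after relabelling the cuts the same tree perfectly classifies $(E,\lambda)$ with the same size. This gives $\mathrm{OPT}(E)\le \mathrm{OPT}(E')$.

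For the reverse inequality $\mathrm{OPT}(E')\le \mathrm{OPT}(E)$, which is the substantive part, I would start from a \emph{minimum-size} perfect tree $\mathcal{D}$ for $(E,\lambda)$ and exploit that it has \emph{no empty leaves} (an empty leaf together with its parent could otherwise be pruned, contradicting minimality). I would then replace, \emph{simultaneously}, every occurrence of the cut $(i,t_1)$ in $\mathcal{D}$ by $(i,t_2)$, leaving all other cuts and all leaf labels untouched; this preserves the size exactly, and $(i,t_2)\in\Cuts(E')$ survives the reduction. Correctness of the modified tree I would verify by tracking each example's root-to-leaf path. An example $e$ with $e[i]\le t_1$ or $e[i]>t_2$ branches identically at every modified vertex, so its leaf is unchanged and it stays correctly classified. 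The only examples that can change paths are those in the band $t_1< e[i]\le t_2$; these lie in $E[\leq (i,t_2)]$ and hence all have class $c$. If such an $e$ meets no modified vertex its path is unchanged; otherwise, at the first modified vertex $p$ on its path it now turns into the left subtree of $p$ instead of the right.

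The key step, and the main obstacle, is to argue that $e$ is still classified correctly after diverging into this left subtree. Here I would combine no-empty-leaves with monochromaticity: in $\mathcal{D}$ the left subtree of $p$ received $E[p]\cap E[\leq (i,t_1)]$, which is nonempty (no empty leaves) and contained in $E[\leq (i,t_2)]$, hence entirely of class $c$; since $\mathcal{D}$ is perfect and has no empty leaves, every leaf of that subtree is labelled $c$. Replacing $(i,t_1)$ by $(i,t_2)$ inside the subtree alters neither its leaves nor their labels, so it routes every incoming example to a $c$-leaf, and as $e$ has class $c$ it is classified correctly. Because this argument only refers to the labels of the subtree's leaves, it handles nested occurrences of $(i,t_1)$ uniformly. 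Consequently the modified tree is a perfect tree for $(E',\lambda')$ of size $\mathrm{OPT}(E)$, yielding $\mathrm{OPT}(E')\le\mathrm{OPT}(E)$ and hence equality; in particular a perfect tree of size at most $s$ exists for $(E,\lambda)$ if and only if one exists for $(E',\lambda')$, which is exactly what correctness demands.
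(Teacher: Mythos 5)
Your proof is correct and takes essentially the same route as the paper's: one direction is immediate because the rule only removes cuts, and for the other you replace each occurrence of the removed cut $(i,t_1)$ by the surviving cut $(i,t_2)$, using that the right side of the modified vertex only shrinks while the enlarged left side is monochromatic. The only difference is in the finishing step: where the paper simply notes that the monochromatic left side ``can be classified without any further cuts,'' you start from a minimum-size tree with no empty leaves so that the left subtree's leaves are already all labelled~$c$ and no relabelling or restructuring is needed, which is a slightly more careful way of preserving the size exactly.
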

\begin{proof}
	Let~$(E,\lambda)$ be the original data set and~$(E',\lambda')$ be the data set that was created by the equivalent cuts rule. If a decision tree correctly classifies~$(E',\lambda')$, then it also correctly classifies~$(E,\lambda')$ since the rule only removes cuts.
	
	Let~$\mathcal{D}$ be a decision tree that correctly classifies~$(E,\lambda)$. If the cut~$(i,t)$ of an inner vertex in~$\mathcal{D}$ was removed and the threshold~$t$ is smaller than the smallest threshold~$t'$ of dimension~$i$ in~$(E',\lambda')$, we can replace~$(i,t)$ with~$(i,t')$. We can also do this replacement if~$t'$ is the biggest threshold of dimension~$i$ in~$(E',\lambda')$ and~$t$ is bigger than~$t'$.
	
	Without loss of generality, let us assume that~$t$ is smaller than~$t'$. Since the rule removed~$(i,t)$ we know that all examples on the left sides of the cuts~$(i,t)$ and~$(i,t')$ must have the same class. Replacing~$(i,t)$ by~$(i,t')$ means the right side is still correctly classified while the left side can be classified without any further cuts. This means there is a tree with the same size as~$\mathcal{D}$ that correctly classifies~$(E',\lambda')$.
\end{proof}

\subsection{Merging dimensions}

With the last reduction rule we want to merge dimensions together in order to reduce the number of dimensions while keeping the number of different cuts the same. This may not seem useful at first but in \Cref{sec:subset_constraints} we introduce an improvement of the algorithm that benefits from this.

\begin{table}[t]
\centering
\begin{subfigure}{.49\textwidth}
	\centering
	\begin{tabular}{ccccc}
		\toprule
		    & $d_1$ & $d_2$ & $d_3$ & class \\
		\midrule
		$b$ & $1$  & $0$  & $0$  & \textcolor{red}{red} \\
		$a$ & $0$  & $1$  & $0$  & \textcolor{red}{red} \\
		$d$ & $3$  & $2$  & $1$  & \textcolor{red}{red} \\
		$c$ & $2$  & $2$  & $2$  & \textcolor{blue}{blue} \\
		\bottomrule
	\end{tabular}
	\caption{Before the merge.}
	\label{fig:ex_merge_dims_1}
\end{subfigure}
\begin{subfigure}{.49\textwidth}
	\centering
	\begin{tabular}{cccc}
		\toprule
		    & $d_1$ & $d_4$ & class \\
		\midrule
		$b$ & $1$  & $0$  & \textcolor{red}{red} \\
		$a$ & $0$  & $1$  & \textcolor{red}{red} \\
		$d$ & $3$  & $2$  & \textcolor{red}{red} \\
		$c$ & $2$  & $3$  & \textcolor{blue}{blue} \\
		\bottomrule
	\end{tabular}
	\caption{After the merge.}
	\label{fig:ex_merge_dims_2}
\end{subfigure}
\caption{Example for merging the two dimensions~$d_2$ and~$d_3$.}
\label{fig:ex_merge_dims}
\end{table}

To better demonstrate what merging two dimensions means we can look at dimensions~$d_2$ and~$d_3$ in \Cref{tab:ex_preprocessing}. If we now look at the examples in the order~$b,a,d,c$ as shown in \Cref{fig:ex_merge_dims_1} we can see that their values in both dimensions never decrease. We can now create a new dimension~$d_4$ with the values shown in \Cref{fig:ex_merge_dims_2}. We chose the values such that for each cut in dimension~$d_2$ or~$d_3$, there is an equivalent cut in~$d_4$. This means we can now completely remove dimensions~$d_2$ and~$d_3$ and if any decision tree uses a cut in one of these dimensions, we can just replace it with the equivalent cut in~$d_4$. This leads to the following rule.

\begin{table}[t!]
	\centering
	\small
	\begin{tabular}{lrrrrrrrrrr}
		\toprule
		Instance name & $n$ & $n'$ & $d$ & $d'$ & $c$ & $c'$ & $\delta$ & $\delta'$ & $D$ & $D'$ \\
		\midrule
		postoperative-patient-data & $72$ & $72$ & $22$ & $\mathbf{17}$ & $22$ & $22$ & $14$ & $14$ & $\mathbf{2}$ & $3$ \\
		hayes-roth & $84$ & $84$ & $15$ & $15$ & $15$ & $15$ & $8$ & $8$ & $2$ & $2$ \\
		lupus & $86$ & $\mathbf{79}$ & $3$ & $\mathbf{2}$ & $126$ & $\mathbf{78}$ & $3$ & $\mathbf{2}$ & $75$ & $\mathbf{53}$ \\
		appendicitis & $106$ & $106$ & $7$ & $7$ & $523$ & $\mathbf{460}$ & $7$ & $7$ & $99$ & $\mathbf{98}$ \\
		molecular\_biology\_promoters & $106$ & $106$ & $228$ & $228$ & $228$ & $228$ & $104$ & $104$ & $2$ & $2$ \\
		tae & $106$ & $106$ & $5$ & $5$ & $96$ & $\mathbf{94}$ & $5$ & $5$ & $46$ & $\mathbf{45}$ \\
		cloud & $108$ & $108$ & $7$ & $7$ & $585$ & $\mathbf{555}$ & $7$ & $7$ & $108$ & $\mathbf{100}$ \\
		cleveland-nominal & $130$ & $130$ & $17$ & $17$ & $17$ & $17$ & $11$ & $11$ & $2$ & $2$ \\
		lymphography & $148$ & $148$ & $50$ & $\mathbf{37}$ & $50$ & $50$ & $26$ & $\mathbf{23}$ & $\mathbf{2}$ & $3$ \\
		hepatitis & $155$ & $155$ & $39$ & $\mathbf{28}$ & $355$ & $\mathbf{335}$ & $28$ & $\mathbf{25}$ & $85$ & $85$ \\
		glass2 & $162$ & $162$ & $9$ & $9$ & $709$ & $\mathbf{667}$ & $9$ & $9$ & $136$ & $\mathbf{132}$ \\
		backache & $180$ & $180$ & $55$ & $\mathbf{50}$ & $469$ & $\mathbf{429}$ & $26$ & $26$ & $180$ & $\mathbf{151}$ \\
		auto & $202$ & $202$ & $52$ & $\mathbf{35}$ & $961$ & $\mathbf{916}$ & $31$ & $\mathbf{29}$ & $184$ & $\mathbf{182}$ \\
		glass & $204$ & $204$ & $9$ & $9$ & $894$ & $\mathbf{846}$ & $9$ & $9$ & $172$ & $\mathbf{165}$ \\
		biomed & $209$ & $209$ & $14$ & $14$ & $735$ & $\mathbf{577}$ & $9$ & $9$ & $191$ & $\mathbf{125}$ \\
		new-thyroid & $215$ & $\mathbf{214}$ & $5$ & $5$ & $329$ & $\mathbf{232}$ & $5$ & $5$ & $100$ & $\mathbf{73}$ \\
		spect & $219$ & $219$ & $22$ & $22$ & $22$ & $22$ & $22$ & $22$ & $2$ & $2$ \\
		breast-cancer & $266$ & $266$ & $31$ & $\mathbf{25}$ & $40$ & $40$ & $15$ & $15$ & $11$ & $11$ \\
		heart-statlog & $270$ & $270$ & $25$ & $\mathbf{23}$ & $376$ & $\mathbf{369}$ & $18$ & $18$ & $144$ & $\mathbf{142}$ \\
		haberman & $283$ & $283$ & $3$ & $3$ & $89$ & $\mathbf{86}$ & $3$ & $3$ & $49$ & $\mathbf{46}$ \\
		heart-h & $293$ & $293$ & $29$ & $\mathbf{22}$ & $325$ & $\mathbf{318}$ & $19$ & $19$ & $154$ & $154$ \\
		hungarian & $293$ & $293$ & $29$ & $\mathbf{22}$ & $325$ & $\mathbf{318}$ & $19$ & $19$ & $154$ & $154$ \\
		cleve & $302$ & $302$ & $27$ & $\mathbf{25}$ & $390$ & $\mathbf{382}$ & $18$ & $18$ & $152$ & $\mathbf{151}$ \\
		heart-c & $302$ & $302$ & $27$ & $\mathbf{25}$ & $390$ & $\mathbf{382}$ & $18$ & $18$ & $152$ & $\mathbf{151}$ \\
		cleveland & $303$ & $303$ & $27$ & $\mathbf{25}$ & $391$ & $\mathbf{383}$ & $18$ & $18$ & $152$ & $\mathbf{151}$ \\
		ecoli & $327$ & $\mathbf{326}$ & $7$ & $\mathbf{5}$ & $351$ & $\mathbf{233}$ & $6$ & $\mathbf{5}$ & $81$ & $\mathbf{59}$ \\
		schizo & $340$ & $340$ & $14$ & $14$ & $2218$ & $\mathbf{2209}$ & $14$ & $14$ & $203$ & $203$ \\
		bupa & $341$ & $341$ & $5$ & $5$ & $307$ & $\mathbf{302}$ & $5$ & $5$ & $94$ & $94$ \\
		colic & $357$ & $357$ & $75$ & $\mathbf{71}$ & $408$ & $\mathbf{400}$ & $36$ & $36$ & $85$ & $\mathbf{82}$ \\
		dermatology & $366$ & $366$ & $129$ & $\mathbf{101}$ & $188$ & $188$ & $57$ & $\mathbf{53}$ & $61$ & $61$ \\
		cars & $392$ & $\mathbf{388}$ & $12$ & $\mathbf{11}$ & $704$ & $\mathbf{531}$ & $9$ & $9$ & $346$ & $\mathbf{266}$ \\
		soybean & $622$ & $622$ & $133$ & $\mathbf{73}$ & $133$ & $\mathbf{108}$ & $68$ & $\mathbf{49}$ & $\mathbf{2}$ & $7$ \\
		australian & $690$ & $690$ & $18$ & $\mathbf{16}$ & $1155$ & $\mathbf{1119}$ & $16$ & $\mathbf{15}$ & $350$ & $350$ \\
		diabetes & $768$ & $768$ & $8$ & $8$ & $1246$ & $\mathbf{1238}$ & $8$ & $8$ & $517$ & $\mathbf{515}$ \\
		contraceptive & $1358$ & $1358$ & $21$ & $21$ & $66$ & $\mathbf{65}$ & $13$ & $13$ & $34$ & $34$ \\
		\bottomrule
	\end{tabular}
	\caption{Overview of the data sets we used for our experiments including their name, number of examples~$n$, number of dimensions~$d$, number of total cuts~$c$, maximum number~$\delta$ of dimensions in which two examples differ, and the largest domain size~$D$. The columns~$n', d', c', \delta'$, and~$D'$ show the values of the data sets after applying all reduction rules.\\
	\textbf{Bold} entries indicate a change of this specific value in the input instance and after the application of all reduction rules.}
	\label{tab:overview_datasets_reduced}
\end{table}

\begin{rrule}[Dimension Merge Rule]
	Let~$(E,\lambda)$ be a data set and let~$i_1,i_2$ be a pair of dimensions. If there is an ordering of the examples such that their values never decrease in either dimension, then we can create a new dimension~$i_{1,2}$ such that for each cut in dimensions~$i_1$ and~$i_2$, there is an equivalent cut in dimension~$i_{1,2}$. We then remove~$i_1$ and~$i_2$.
\end{rrule}

We can generate the values of each example in the new dimension in the following way: We go through the examples in the order mentioned in the rule. We then start by assigning each example the value~$0$. Every time the value of the examples increases in one of the two original dimensions, we need a cut in the new dimension that has all the previous examples on its left side. To achieve this, we increase the value we assign to the remaining examples in the new dimension by~$1$.

\begin{lemma}
	The Dimension Merge Rule is correct.
\end{lemma}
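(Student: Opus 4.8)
The plan is to show that the merge introduces a new dimension $i_{1,2}$ whose cuts induce \emph{exactly} the same bipartitions of $E$ as the cuts of $i_1$ and $i_2$ combined, and then to swap equivalent cuts in and out of a decision tree as in the proof of the Equivalent Cuts Rule. Fix an ordering $e_1,\dots,e_n$ of $E$ that is non-decreasing in both $i_1$ and $i_2$, as guaranteed by the hypothesis. The first observation I would record is that, because the ordering is non-decreasing in $i_1$, the left side $E[\leq (i_1,t)]$ of any cut in dimension $i_1$ is a \emph{prefix} $\{e_1,\dots,e_j\}$ of this ordering (equal-valued examples are contiguous, so the downward-closed set $\{e : e[i_1]\leq t\}$ is an initial segment); the same holds for $i_2$. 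Consequently every cut of $i_1$ or $i_2$ corresponds to a ``breakpoint'' position $j$ at which the respective coordinate strictly increases, and its left side is the prefix ending at $j$.

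Next I would analyze the constructed dimension $i_{1,2}$. By its definition the assigned value strictly increases between $e_j$ and $e_{j+1}$ precisely when the coordinate increases in $i_1$ \emph{or} in $i_2$. Hence the breakpoints of $i_{1,2}$ are exactly the union of the breakpoints of $i_1$ and of $i_2$, and the left side of each cut of $i_{1,2}$ is again a prefix of the ordering. This yields the key claim in both directions: for every cut of $i_1$ or $i_2$ there is a cut of $i_{1,2}$ with the identical prefix as left side, hence an \emph{equivalent} cut; and conversely, every cut of $i_{1,2}$ sits at a breakpoint that is a breakpoint of $i_1$ or $i_2$, so it has an equivalent cut in one of the two original dimensions. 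The cuts in all remaining dimensions are untouched by the rule.

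With the cut-equivalence in hand, correctness follows by the same tree-surgery used for the Equivalent Cuts Rule. Given a size-$s$ tree $\mathcal{D}$ classifying $(E,\lambda)$, replace every inner vertex whose cut lies in $i_1$ or $i_2$ by the equivalent cut in $i_{1,2}$; since equivalent cuts send every example to the same side, the assignment $E[\mathcal{D},v]$ of every vertex, and therefore the leaf of every example, is unchanged, so the new tree still classifies correctly and has the same size. For the converse, start from a size-$s$ tree for $(E',\lambda')$ and replace each cut in $i_{1,2}$ by the equivalent cut in $i_1$ or $i_2$, again preserving all example assignments and the size. As the examples and their labels are identified between $(E,\lambda)$ and $(E',\lambda')$, this establishes that a size-$s$ classifier exists for one instance if and only if it exists for the other, i.e., correctness.

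The main obstacle I anticipate is the cut-equivalence claim rather than the tree surgery: one must argue carefully that the construction creates \emph{no spurious} cuts (every breakpoint of $i_{1,2}$ comes from a genuine increase in $i_1$ or $i_2$) and \emph{loses no} cut (every increase in $i_1$ or $i_2$ forces an increase in $i_{1,2}$), while correctly handling ordering ties: positions where neither coordinate increases produce no cut in any of the three dimensions, and positions where both increase yield one cut of $i_{1,2}$ equivalent to two mutually equivalent original cuts. Once this prefix characterization is pinned down, the equivalence of the induced bipartitions, and thus of the classifying power of the two instances, is immediate.
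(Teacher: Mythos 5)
Your proof is correct, but it is organized quite differently from the paper's. The paper argues modularly: it asserts (from the construction of $i_{1,2}$) that every cut in $i_1$ or $i_2$ has an equivalent cut in the new dimension, then invokes the already-proven Equivalent Cuts Rule to remove all cuts of $i_1$ and $i_2$, and finally invokes the Remove Dimension Rule to delete the two now-constant dimensions. You instead prove the statement directly: you establish the prefix/breakpoint characterization (left sides of cuts in $i_1$, $i_2$, and $i_{1,2}$ are prefixes of the common non-decreasing ordering, and the breakpoints of $i_{1,2}$ are exactly the union of those of $i_1$ and $i_2$), and then perform the cut-replacement tree surgery yourself in both directions. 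The underlying mechanism is the same in both arguments --- equivalent cuts induce identical bipartitions of $E$, so swapping them preserves every set $E[\mathcal{D},v]$ and the tree size --- but your version makes explicit a point the paper leaves implicit: the ``no spurious cuts'' direction. The paper's chain tacitly requires that introducing $i_{1,2}$ cannot create new classifying power, i.e., that every cut of $i_{1,2}$ is equivalent to some original cut; this step is not covered by either cited reduction rule and is guaranteed only by the construction, which is precisely what your breakpoint argument verifies. What the paper's route buys is brevity and reuse of earlier lemmas; what yours buys is self-containedness and a careful treatment of exactly the step where an imprecise construction of the merged dimension could fail.
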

\begin{proof}
	Let~$(E,\lambda)$ be the original data set and~$(E',\lambda')$ be the data set that was created by the dimension merge rule. After adding the new dimension to~$(E',\lambda')$, there is an equivalent cut in this new dimension for each cut in the original two dimensions. The correctness of the Equivalent Cuts Rule shows that removing all cuts from the two dimensions that were merged together is correct. Removing these cuts leads to all examples having the same value in these two dimensions. The correctness of the Remove Dimension Rule shows that removing these two dimensions is also correct.
\end{proof}
}

\section{Lower bounds}
\label{sec:lower_bounds}
\appendixsection{sec:lower_bounds}

We introduce two lower bounds that we use to prune the search tree. 
In both lower bounds an instance of \textsc{Set Cover} is constructed and then a lower bound is calculated for this instance. 
In \textsc{Set Cover} the input is a universe~$U$ and a family~$S$ of subsets of~$U$, and the task is to compute the smallest integer~$k$ such that there is a subset~$S'\subseteq S$ of size exactly~$k$ whose union is the universe~$U$.


We show that for our two lower bounds, the size~$k$ of the smallest subset~$S'$ is a lower bound for the minimum number of one-step refinements that are needed to correctly classify all examples in the current witness tree.
We calculate these lower bounds after Line~$5$ in each call of~\texttt{Refine} in \Cref{alg:base}. If~$k$ is bigger than~$s$ minus the current size of~$W$ we can return~$\bot$.
Since \textsc{Set Cover} is NP-hard~\citep{Karp72}, calculating the exact value of~$k$ in every call of \texttt{Refine} is not feasible. Instead we are going to introduce different ways of calculating a lower bound for~$k$.

\paragraph{First lower bound: Improvement Lower Bound (ImpLB).}

The idea of the ImpLB is to find the minimum number of one-step refinements that are necessary to correctly classify only the examples that are dirty in the current witness tree~$W$ while ignoring the examples that are already correctly classified.
Moreover, we ignore that one-step refinements may interfere with each other and consider the effect of each one-step refinement separately.
To assess this effect, we use the following definition.

\begin{definition}\label{def:imp}
	Let~$W$ be a witness tree,~$E'\subseteq \dirty(W)$ a set of dirty examples and~$r\in \refine(W)$ a one-step refinement with~$W \xrightarrow{r} R$. Then we define
	\[\imp(W,E',r) \coloneqq \{ e'\in E' \mid e'\not\in\dirty(R) \}\]
	as the set of dirty examples that get correctly classified by~$r$. We call these sets the \emph{imp sets}.
\end{definition}

We now create an instance of \textsc{Set Cover} by choosing~$\dirty(W)$ as the universe and the family of subsets~$\{\imp(W,\dirty(W),r) \mid r\in \refine(W) \}$.
\toappendix{
\begin{figure}[t]
\centering
\begin{subfigure}{.49\textwidth}
	\centering
	\begin{tabular}{cccc}
		\toprule
		& $d_1$ & $d_2$ & class \\
		\midrule
		$a$ & $0$ & $3$ & \textcolor{blue}{blue} \\
		$b$ & $1$ & $2$ & \textcolor{red}{red} \\
		$c$ & $2$ & $2$ & \textcolor{blue}{blue} \\
		$d$ & $2$ & $1$ & \textcolor{red}{red} \\
		$e$ & $2$ & $0$ & \textcolor{blue}{blue} \\
		\bottomrule
	\end{tabular}
	\caption{An example data set with $n = 5$ and $d = 2$.}
	\label{fig:impLB_example_data}
\end{subfigure}
\begin{subfigure}{.49\textwidth}
	\centering
	\scalebox{1.3}{
	\begin{tikzpicture}
	
	\draw (0,0) -- (-1,-1);
	\draw (0,0) -- (1,-1);	
	
	\fill (0,0) circle (2pt) node[above=2pt] {$A$};
	\node at (0.8,0) {$(d_1,1)$};
	\fill (-1,-1) circle (2pt) [color=red] node[above=2pt,color=black] {$B$};
	\node at (-1,-1.3) {wit: $b$};
	\fill (1,-1) circle (2pt) [color=blue] node[above=2pt,color=black] {$C$};
	\node at (1,-1.3) {wit: $c$};
	
	\end{tikzpicture}
	}
	\caption{An example witness tree. $A$ has the cut $(d_1,1)$, $B$ has the witness $b$ and $C$ has the witness $c$.}
	\label{fig:impLB_example_tree}
\end{subfigure}
\caption{Example data set and witness tree.}
\label{fig:impLB_example}
\end{figure}
\subsection{An example for \cref{def:imp}}
To demonstrate how an instance like this can look, consider the example in \Cref{fig:impLB_example}. On the left it shows a data set and on the right it shows a witness tree~$W$ with one inner vertex~$A$ and two leafs~$B$ and~$C$. The set~$\refine(W)$ contains six one-step refinements.
\[\refine(W) = \{ (B,d_1,0,a), (B,d_2,2,a), (A,d_1,0,a), (A,d_2,2,a), (C,d_2,1,d), (A,d_2,1,d) \} \]
The imp sets~$\imp(W,\dirty(W),r)$ for these one-step refinements look like this.
\[ \{ a \}, \{ a \}, \{ a \}, \{ a \}, \{ d \}, \{ d \} \]
That means the \textsc{Set Cover} instance for this example has the universe~$U = \{ a,d\}$ and the set of subsets~$I = \{ \{ a\}, \{ d\} \}$.
}
To show that the solution of this instance is a lower bound, we show that for any series of~$s$ one-step refinements~$I'$ that lead to~$W$ being perfect there is a set~$I\subseteq I'$ of size at most~$s$ such that the union of all sets in~$I$ is equal to~$\dirty(W)$.

\begin{theorem}[\appref{theorem:impLB}]\label{theorem:impLB}
	Let~$W\coloneqq R_0$ be a witness tree and~$(r_1,\ldots,r_s)$ a series of one-step refinements with~$R_{i-1}\xrightarrow{r_i} R_i$, $r_i\in \refine(R_{i-1})$ for~$i\in [s]$ such that~$R_s$ is perfect. Then there must be a set~$I\subseteq \refine(W)$, $|I| \leq s$, such that \[\bigcup_{r\in I} \imp(W,\dirty(W),r) = \dirty(W).\]
\end{theorem}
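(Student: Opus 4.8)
The plan is to charge each dirty example of $W$ to the new leaf that finally classifies it in $R_s$, group the dirty examples accordingly, and exhibit one refinement from $\refine(W)$ per group. First I would record the structural invariants of the sequence $R_0\to\cdots\to R_s$: every refinement only subdivides one edge and appends one new leaf, so every vertex of $W$ survives in each $R_i$, old leaves stay leaves and keep their class, and exactly $s$ new leaves $\ell_1,\dots,\ell_s$ are created in total. The key observation is that, since the cuts at the $W$-vertices are never modified, the root-to-leaf path of any example $e$ in $R_s$ visits the same $W$-vertices, with the same left/right choices, as its path in $W$; it only differs by occasionally branching off to a new leaf. Consequently, if $e$ ends in $R_s$ at an old leaf, that leaf must be $\leaf(W,e)$, so a $W$-dirty example cannot end at an old leaf (it would still be misclassified, contradicting that $R_s$ is perfect). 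Hence every $e\in\dirty(W)$ ends at one of the new leaves.

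For the grouping, I would set $D_j \coloneqq \{e\in\dirty(W)\mid e \text{ ends at } \ell_j \text{ in } R_s\}$; these sets partition $\dirty(W)$ into at most $s$ nonempty parts. Since all examples ending at $\ell_j$ follow the unique root-to-$\ell_j$ path, they make identical choices at every $W$-vertex and therefore share a common $W$-leaf $\ell^W_j\coloneqq\leaf(W,e)$ (independent of $e\in D_j$); moreover they all carry the class of $\ell_j$, and since they are dirty this class differs from $\cla(\ell^W_j)$. It then suffices to produce, for each nonempty $D_j$, a single refinement $\hat r_j\in\refine(W)$ with $D_j\subseteq\imp(W,\dirty(W),\hat r_j)$: the $\imp$-sets are always subsets of $\dirty(W)$, so the resulting family of at most $s$ sets then covers exactly $\dirty(W)$.

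To build $\hat r_j$ I would use the witness $w_j\coloneqq\wit(\ell^W_j)$ of the common $W$-leaf. Since no refinement in the sequence moves the leaf of a witness and the witnesses of $W$ remain witnesses throughout, $w_j$ stays at $\ell^W_j$, so in $R_s$ it sits at the old leaf $\ell^W_j$, which is distinct from the new leaf $\ell_j$. Let $x$ be the least common ancestor of $\ell_j$ and $\ell^W_j$ in $R_s$ and let $(i,t)\coloneqq\cut(x)$: every example reaching $\ell_j$ (in particular all of $D_j$) lies on one side of $(i,t)$ while $w_j$ lies on the other. Fixing any $e^\star\in D_j$, I set $\hat r_j \coloneqq (\ell^W_j, i, t, e^\star)$. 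This is a one-step refinement of $W$ whose dirty example $e^\star$ lies in $E[W,\ell^W_j]$, and $w_j=\wit(\leaf(W,e^\star))$ lies strictly on the opposite side of $(i,t)$; this single fact yields both membership conditions for $\refine(W)$, namely that $e^\star$ is separated from its leaf's witness and that no witness is displaced (the only witness inside $E[W,\ell^W_j]$ is $w_j$, which is not captured by the new leaf). Finally, every $e\in D_j$ lies in $E[W,\ell^W_j]$ on $e^\star$'s side with $\lambda(e)=\lambda(e^\star)$, hence is classified correctly after applying $\hat r_j$ and lies in $\imp(W,\dirty(W),\hat r_j)$.

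The main obstacle is exactly the verification that the projected refinement lands in $\refine(W)$: the original refinement $r_j$ lives in $\refine(R_{j-1})$ and its cut need not separate the $W$-witness $w_j$ from the group, so reusing it could displace $w_j$ and fall outside $\refine(W)$. I avoid this by discarding $r_j$'s own cut and instead reading the separating cut off the least common ancestor of $\ell_j$ and $\ell^W_j$ in the finished tree $R_s$, which by construction separates the whole group from $w_j$; combined with witness persistence this gives validity with no case distinction on how $e$ travelled through the intermediate trees.
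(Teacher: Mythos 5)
Your high-level plan (charge each $W$-dirty example to the new leaf of $R_s$ that finally hosts it, then exhibit one element of $\refine(W)$ per new leaf) is genuinely different from the paper's proof, which instead establishes a one-step pull-back lemma (\Cref{lem:imp_subset}: if $W\xrightarrow{r}R$, then every $r_2\in\refine(R)$ has some $r_1\in\refine(W)$ with $\imp(R,E'',r_2)\subseteq\imp(W,E',r_1)$) and composes it along the sequence. However, your argument has a genuine gap: the claim that all examples of $D_j$ share a common $W$-leaf is false. Examples reaching $\ell_j$ make identical choices only at the $W$-vertices \emph{on the root-to-$\ell_j$ path of $R_s$}, but their $W$-leaves are decided by $W$-vertices that need not lie on that path --- precisely the vertices whose subtrees a new inner vertex diverts them away from. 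Concretely, take four examples in $\mathds{R}^2$: $w_A=(0,0)$ blue, $w_B=(1,0)$ blue, $p=(0,1)$ red, $q=(1,1)$ red, with threshold $0.5$ in both dimensions. Let $W$ be the witness tree with root $x$ carrying cut $(1,0.5)$ and two \emph{blue} leaves $A$ (witness $w_A$) and $B$ (witness $w_B$); then $\dirty(W)=\{p,q\}$ with $\leaf(W,p)=A$ and $\leaf(W,q)=B$. The refinement $\rho=(x,2,0.5,p)\in\refine(W)$ (it separates $p$ from $w_A$, and displaces no witness since $w_A[2]=w_B[2]=0$) creates a new root with cut $(2,0.5)$ and a new red leaf $\ell$, and the resulting tree $R_1$ is already perfect, so $s=1$. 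Both $p$ and $q$ end at $\ell$, yet they sit at different leaves of $W$; note that the root-to-$\ell$ path of $R_1$ contains no $W$-vertex at all, so your "identical choices" argument is vacuous here, while the vertex $x$ that separates $A$ from $B$ lies off that path.

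This is not a presentational slip but breaks the construction: with $e^\star=p$ your refinement is $\hat r_1=(A,2,0.5,p)$, applied at the leaf $A$, and $\imp(W,\dirty(W),\hat r_1)=\{p\}$, so $q$ is never covered (and $D_1$ is the only group). The natural repair --- splitting groups by $W$-leaf --- produces two groups while $s=1$, violating the bound $|I|\le s$; keeping one group but applying the cut at a common $W$-ancestor of the relevant leaves (here, above $x$) can work, but then the new leaf captures examples from \emph{several} $W$-leaves and you must re-argue that no witness below that ancestor gets displaced, which your "the only witness inside $E[W,\ell^W_j]$ is $w_j$" argument no longer gives you. Handling exactly this interplay (which vertex to re-apply a later cut at, and why no witness moves) is what the paper's case analysis in \Cref{lem:imp_subset} does step by step; if you want to keep your one-shot structure, you would need to prove an analogous statement for the composed sequence, which is the missing core of the proof.
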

\appendixproof{theorem:impLB}{
To prove \Cref{theorem:impLB} we first need to prove \Cref{lem:imp_subset}. 
With this lemma we want to show the following: If~$R$ was created by applying a one-step refinement to~$W$ then for any imp set $S$ in~$R$ there is an imp set in~$W$ that is a superset of $S$. An important detail is that we calculate these imp sets with respect to the dirty examples in~$W$. That means if there is a dirty example in~$R$ that was not dirty in~$W$ we ignore it.

\begin{lemma}\label{lem:imp_subset}
	Let~$W$ be a witness tree,~$E'\subseteq \dirty(W)$ a subset of the dirty examples in~$W$, and~$r\in \refine(W)$ a one-step refinement with~$W \xrightarrow{r} R$. Then, for each one-step refinement~$r_2\in \refine(R)$ there exists a one-step refinement~$r_1\in \refine(W)$, such that \[\imp(R,E'',r_2)\subseteq \imp(W,E',r_1)\] with~$E'' \coloneqq E' \setminus \imp(W,E',r)$.
\end{lemma}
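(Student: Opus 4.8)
The plan is to first convert the abstract definition of $\imp$ into a concrete geometric description and then to build the required $r_1$ by ``pulling'' the refinement $r_2$ back through the single refinement $r$ that separates $W$ from $R$. The starting observation is that applying a one-step refinement $r=(v,i,t,e)$ only reroutes the examples of $E[W,v]$ lying on $e$'s side of $(i,t)$ into the fresh leaf $\ell$ of class $\lambda(e)$, while every other example keeps its leaf and hence its dirtiness. Thus, for $r\in\refine(W)$, the set $\imp(W,E',r)$ consists exactly of the $e'\in E'$ that reach $v$, lie on $e$'s side of $(i,t)$, and satisfy $\lambda(e')=\lambda(e)$. From this it follows that $E''=E'\setminus\imp(W,E',r)$ is precisely the set of examples of $E'$ that remain dirty in $R$, so that $X\coloneqq\imp(R,E'',r_2)\subseteq E''\subseteq E'\cap\dirty(W)$, and every member of $X$ shares the class $\lambda(e_2)$ and lies on $e_2$'s side of $(i_2,t_2)$.

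Next I would record the structural relationship: $R$ arises from $W$ by inserting the vertex $u$ (with cut $(i,t)$) on the edge above $v$ and attaching $\ell$, so that $E[R,u]=E[W,v]$, the set $E[R,\ell]$ equals the $e$-side of $E[W,v]$, and for every surviving vertex $z$ one has $E[R,z]\subseteq E[W,z]$ with the difference contained in the $e$-side of $(i,t)$. I then define $r_1$ to keep a dirty example $e_1\in X$ (any element works; if $X=\emptyset$ take $r_1=r$ and finish) and to reuse $r_2$'s cut $(i_2,t_2)$ at the vertex $v_1\coloneqq v_2$ when $v_2$ is an original vertex, and at $v_1\coloneqq v$ when $v_2=u$. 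The one exceptional case is $v_2=\ell$, where instead I reuse $r$'s own cut and vertex, i.e.\ $r_1=(v,i,t,e_1)$. With the characterisation above and the inclusions just listed, it is then routine to check coverage in each case: $X\subseteq E[W,v_1]$, every member of $X$ lies on $e_1$'s side of $r_1$'s cut (since $e_1\in X$ sits on the same side as all of $X$), and all share class $\lambda(e_1)=\lambda(e_2)$, whence $X\subseteq\imp(W,E',r_1)$.

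The hard part, and where I expect almost all of the care to go, is verifying that $r_1$ actually lies in $\refine(W)$, i.e.\ that it moves no witness of $W$ and that it separates $e_1$ from $\wit(\leaf(e_1))$. For the first, I would bound the set of examples that $r_1$ moves. When $v_2=\ell$ this moved set is exactly the set that $r$ moves, which contains no witness because $r\in\refine(W)$; in every other case it is contained in the union of the set that $r_2$ moves (which contains no witness of $R$, hence none of $W$, since $r_2\in\refine(R)$ and the witnesses of $W$ remain witnesses of $R$) together with some examples on the $e$-side of $(i,t)$, which again contain no witness by $r\in\refine(W)$. The separation condition then follows for free: $\wit(\leaf(e_1))$ lies in $E[W,v_1]$ (its leaf is below $v_1$ because $e_1\in E[W,v_1]$) and is a witness of $W$, so it cannot belong to the moved set just analysed, which forces it onto the side of the cut opposite $e_1$. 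The main obstacle is thus exactly this witness bookkeeping that ties the witness-preservation of $r_1$ back to the corresponding properties of $r$ and $r_2$; the crucial subtlety is that the naive choice of reusing $r_2$'s cut at $v$ breaks down precisely when $v_2=\ell$, which is why that case must fall back onto $r$'s own cut.
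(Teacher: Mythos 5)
Your proof is correct and takes essentially the same approach as the paper's: the identical three-way case split on the location of $r_2$'s vertex (an original vertex of $W$, the new inner vertex $u$, or the new leaf $\ell$), with the same choice of $r_1$ in each case (reusing $r_2$'s cut in the first two cases, falling back to $r$'s own cut when the vertex is $\ell$) and an element of $\imp(R,E'',r_2)$ serving as the dirty example of $r_1$. Your explicit geometric characterisation of imp sets and the witness/separation bookkeeping merely spell out details that the paper's proof asserts more tersely.
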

\begin{proof}
	Let~$r = (v,i,t,e)$ and let~$r_2 = (v',i',t',e')\in \refine(R)$ be some one-step refinement for~$R$. We now just need to find some one-step refinement~$r_1\in \refine(W)$ such that $\imp(R,E'',r_2)\subseteq \imp(W,E',r_1)$. We can assume that~$\imp(R,E'',r_2)\not=\emptyset$, as otherwise~$\imp(R,E'',r_2) = \emptyset\subseteq\imp(W,E',r_1)$ for any~$r_1\in \refine(W)$. By definition we know that~$\imp(R,E'',r_2)\subseteq E''\subseteq E'\subseteq\dirty(W)$. Now we know there is some example~$e''\in \imp(R,E'',r_2)$ which is dirty in~$W$.
	
	Let~$u$ be the inner vertex and~$\ell$ the leaf that are added to~$W$ by~$r$. We can now distinguish between three cases based on the vertex~$v'$.
	\begin{enumerate}
		\item ($v'\in V(W)$): Here we choose~$r_1 = (v',i',t',e'')$. Since~$e''$ was correctly classified by~$r_2$ we know that~$e''$ must be in~$E[R,v']\subseteq E[W,v']$. We also know that~$r_1$ does not change the leaf of any witnesses since~$r_2\in\refine(R)$. Therefore we have~$r_1\in \refine(W)$. Any example~$d\in \imp(R,E'',r_2)$ is a dirty example in~$W$ and contained in~$E[W,v']$. Therefore~$r_1$ correctly classifies~$d$ and we have~$\imp(R,E'',r_2)\subseteq \imp(W,E',r_1)$.
		\item ($v' = u$): Here we choose~$r_1 = (v,i',t',e'')$. The proof for this case works as in the first case since we have~$E[R,v'] = E[R,u] = E[W,v]$.
		\item ($v' = \ell$): Here we choose~$r_1 = (v,i,t,e'')$. 
		We know that~$e''\in E[R,\ell]\subseteq E[W,v]$. 
		We also know that~$r_1$ does not change the leaf of any witnesses since~$r\in \refine(W)$. 
		Hence, we have~$r_1\in\refine(W)$. 
		Any example~$d\in \imp(R,E'',r_2)$ was assigned to~$\ell$ by the inner vertex~$u$ that was added by~$r$. 	
		Since we also have~$d\in E[W,v]$ we know that~$d$ is assigned to the new leaf that is created by~$r_1$. 
		All examples in~$\imp(R,E'',r_2)$ have the same class which means~$\lambda(e'') = \lambda(d)$. 
		Thus,~$r_1$ correctly classifies~$d$ and we have~$\imp(R,E'',r_2)\subseteq \imp(W,E',r_1)$.
	\end{enumerate}
\end{proof}

With this lemma we can now prove \Cref{theorem:impLB}.

\begin{proof}[Proof for \Cref{theorem:impLB}]
	
Let~$E_0 \coloneqq \dirty(W)$ and~$E_i \coloneqq E_{i-1}\setminus \imp(R_{i-1},E_{i-1},r_i)$ for all~$i\in [s]$. By \Cref{lem:imp_subset}, for each~$i\in [s]$ there is a one-step refinement~$r_i'\in \refine(W)$ with the following property:
\begin{equation}\label{eq:impLB_proof_1}
	\imp(R_{i-1},E_{i-1},r_i)\subseteq \imp(W,\dirty(W),r_i').
\end{equation}
Let~$I = \{ r_i' \mid i\in [s]\}$. Clearly~$|I|\leq s$. Since~$R_s$ classifies~$(E,\lambda)$ and~$E_s\subseteq \dirty(R_s)$ we must have~$E_s = \emptyset$. This implies~$\bigcup_{i\in [s]} \imp(R_{i-1},E_{i-1},r_i) = \dirty(W)$. Due to \Cref{eq:impLB_proof_1} we now also know~$\bigcup_{r\in I} \imp(W,\dirty(W),r) = \dirty(W)$.
\end{proof}
}

\paragraph{Calculating the ImpLB.}

Instead of considering the content of each set, we only consider their sizes.
Thus, we calculate the smallest set of subsets such that the sum of their sizes is bigger or equal to the size of the universe. 
Hence, to calculate the ImpLB we just need to look at each one-step refinement~$r\in\refine(W)$, calculate the size of~$\imp(W,\dirty(W),r)$, sort the sizes in descending order, and then check how many sets are needed to reach a sum that is at least the size of~$\dirty(W)$.
We apply the ImpLB in each search-tree node.

\toappendix{
\subsection{Example for the ImpLB}
For the example in \Cref{fig:impLB_example} the way we calculate the ImpLB would not make a difference since each subset in~$I$ has size one and there is no overlap. But let us instead assume we had the universe~$U = \{ a, b, c, d\}$ and the set of subsets~$I = \{ \{ a, b\}, \{b, c\}, \{ d \} \}$. The optimal solution for this instance of \textsc{Set Cover} would be~$3$ since we need all three sets to cover~$U$. Our method turns~$I$ into a list of the sizes of each set sorted in descending order, giving~$[2, 2, 1]$. We would then sum up the numbers from left to right until we reach the size of~$U$. In this case we would need two numbers which is worse than the exact result. But for larger instances this method is much quicker to calculate since we do not need to look at the overlap between the sets.

\subsection{Improvements for the ImpLB}

We use two improvements to speed up this calculation while simultaneously improving the resulting lower bound.
First, we use known data reduction techniques for \textsc{Set Cover}: 
If a set~$A$ is a subset of a set~$B$ then we can remove~$A$ from our instance since it would always be better to choose~$B$ instead. 
Checking this subset relation for each pair in each call of ImpLB is too inefficient.
Thus, we just eliminate sets where we already know that they are a subset of a different set. We do this in the following way:
Consider two one-step refinements~$r_1 = (p,i,t,e)$ and~$r_2 = (v,i,t,e)$ where~$p$ is the parent of~$v$.
Clearly,~$\imp(W,\dirty(W),r_2)$ is a subset of~$\imp(W,\dirty(W),r_1)$ since both one-step refinements use the same cut and~$E[W,v]$ is a subset of~$E[W,p]$.
Thus, when calculating the ImpLB we only use one-step refinements~$r = (v,i,t,e)$ if the same one-step refinement can not be applied at the parent~$p$ of~$v$, i.e. when~$(p,i,t,e)$ is not in~$\refine(W)$.


For the second improvement, observe that there can be many one-step refinements~$(v,i,t,e)$ for some fixed vertex~$v$.
The sum of the corresponding imp sets might be much larger than the number~$x$ of dirty examples in the subtree of~$v$.
By definition, each of these imp sets can only contain dirty examples that are contained in the subtree of~$v$.
Thus, we only need to consider the largest imp sets such that the sum of their sizes is at least equal to~$x$.

\subsection{Pseudocode for calculating the ImpLB}
\Cref{alg:impLB} shows the pseudo code for calculating the ImpLB as described above including the two improvements.

\begin{algorithm2e}[t]
  \DontPrintSemicolon
  \SetKwProg{Fn}{Function}{}{}
  \SetKwFunction{CalcImpLB}{CalcImpLB}
  \SetKwFunction{Recurse}{Recurse}

    \KwIn{A witness tree $W$ and a data set $(E, \lambda)$.}

    \KwOut{A lower bound for the minimum number of one-step refinements that are still required for $W$ to classify $(E, \lambda)$.}

    \BlankLine
    
    \Fn{\CalcImpLB($W,(E, \lambda)$)}{
    
    \lForAll{$v\in V(W)$}{
    $P_v \gets$ empty Priority Queue
    }
    
    $root \gets$ root of $W$\;
    \Recurse($W, (E, \lambda), root$)\;
    \Return $|P_{root}|$\;
    }
    
    \Fn{\Recurse($W, (E, \lambda), v$)}{
    
    \If{$v$ is an inner vertex}{
    	$\ell, r \gets$ left child of $v$, right child of $v$\;
		\Recurse($W, (E, \lambda), \ell$)\;
		\Recurse($W, (E, \lambda), r$)\;
		$P_v.\text{addAll}(P_\ell)$\,\,\, and\,\,\, $P_v.\text{addAll}(P_r)$\;
    
    }
    $p \gets$ parent of $v$ or \texttt{null} if $v$ is the root of $W$\;
    
    \ForAll{$r = (v, i, t, e)\in \refine(W)$}{
    \If{$p =$ \texttt{null} $\vee\, (p, i, t, e)\notin \refine(W)$}{
		$P_v.\text{add}(|\imp(W,\dirty(W), r)|)$ \tcp*[f]{Improvement 1}\;
	}
    }
    $d \gets$ number of dirty examples in the subtree of $v$ \;
    
    \While{$\text{sum}(P_v) - P_v.\text{min}() > d$  }{
		$P_v.\text{removeMin}()$\tcp*[f]{Improvement 2}\;
	}
  }  
  \caption{ImpLB algorithm.}
  \label{alg:impLB}
\end{algorithm2e}
}


\paragraph{Second lower bound: Pair Lower Bound (PairLB).}
\label{sec:pairLB}

For the PairLB we use a similar idea as for the ImpLB but instead of looking at sets of dirty examples we look at sets of pairs of examples that are assigned to the same leaf but have different classes. For each one-step refinement~$r$ we define a \emph{pairsplit set} as the set of all pairs that are split up by $r$, that is, all pairs where the two examples are no longer in the same leaf of the tree created by $r$.
With the set of all pairs as the universe and all pairsplit sets as the family of subsets we obtain a~\textsc{Set Cover} instance. Similar to the ImpLB, the solution of this instance is a lower bound for the minimum number of one-step refinements that are needed to correctly classify all examples in the current witness tree.

However, preliminary experiments showed that calculating the PairLB in every search-tree node does not improve the running time of the algorithm. 
Thus, we now only use the PairLB to calculate an initial lower bound for the solution of~\textsc{MSDT}.
For this we use the LP-relaxation of the standard ILP-formulation of~\textsc{Set Cover}. 
Further details  and proofs for the PairLB are in the Appendix.

\toappendix{
\subsection{Details for the Pair Lower Bound (PairLB)}
Next we define the \emph{Pair Lower Bound (PairLB)}. For this lower bound we look at pairs of vertices with different classes. For a witness tree~$W$, we define~$\pairs(W)$ as the \emph{set of all pairs}~$\{ e_1, e_2 \}\subseteq E$ with~$\lambda(e_1)\not= \lambda(e_2)$ and~$\leaf(e_1) = \leaf(e_2)$, that is the set of all pairs of examples that are assigned to the same leaf but have different classes. 
Clearly, each pair always contains exactly one dirty example. 
Furthermore, $\refineAll(W)$ is the set of \emph{all possible one-step refinements} of~$W$.
In particular~$\refine(W)\subseteq \refineAll(W)$.

In a perfect tree $W$ the set $\pairs(W)$ must be empty since examples with different classes must be in different leafs. Because of this we can now define a lower bound that is similar to the ImpLB but instead of correctly classifying all dirty examples we try to split up all existing pairs. We use \Cref{def:pairsplit} to capture this idea.

\begin{definition}\label{def:pairsplit}
	For some subset of pairs~$P\subseteq \pairs(W)$ and a one-step refinement~$r\in \refineAll(W)$ with~$W \xrightarrow{r} R$, we define
	\[\pairsplit(W,P,r) \coloneqq \{ p\in P \mid p\not\in\pairs(R) \}\]
	as the \emph{set of pairs that get split up by~$r$}.
        We call these sets the \emph{pairsplit sets}.
\end{definition}

Similar to the imp sets a pairsplit set is a subset of~$\pairs(W)$ in which all pairs can be split up by a single one-step refinement.

We can now create an instance of \textsc{Set Cover} by choosing~$\pairs(W)$ as the universe and~$P = \{ \pairsplit(W,\pairs(W),r) \mid r\in \refineAll(W) \}$ as the family of subsets. Notice how we use the set of all one-step refinements~$\refineAll(W)$ instead of the set~$\refine(W)$ that only contains one-step refinement that the algorithm is allowed to use. This is because this lower bound would not work if we used~$\refine(W)$ instead of~$\refineAll(W)$.

\paragraph{Example for the Pair Lower Bound (\Cref{def:pairsplit}).}
We can see this by looking at the example in \Cref{fig:impLB_example} again. If we construct the pairsplit sets for each one-step refinement in~$\refine(W)$ for this example we would get the following sets.
\[ \{ \{ a,b \} \}, \{ \{ a,b \} \}, \{ \{ a,b \} \}, \{ \{ a,b \} \}, \{ \{ c,d \} \}, \{ \{ c,d \} \}. \]

Notice how none of these sets contain the pair~$\{ d,e \}$. This means if we only used the one-step refinements in~$\refine(W)$ to construct the \textsc{Set Cover} instance, it would have no solution. If we actually use all one-step refinements in~$\refineAll(W)$ to construct the set of subsets~$P$ and we remove all duplicates, we would get~$P = \{ \{ \{ a,b \} \}, \{ \{ c,d \} \}, \{ \{ d,e \} \} \}$ since there are no one-step refinements that can split more than one pair at a time.

\paragraph{Correctness proof.}

To show that the solution of such an instance of \textsc{Set Cover} is a correct lower bound, we just need to show that for any series of~$s$ one-step refinements that lead to~$W$ being perfect, there is a set~$P'\subseteq P$ of size at most~$s$ such that the union of all sets in~$P'$ is equal to~$\pairs(W)$.

\begin{theorem}
\label{theorem:pairLB}
	Let $W\coloneqq R_0$ be a witness tree and $(r_1,\ldots,r_s)$ a series of one-step refinements with $R_{i-1}\xrightarrow{r_i} R_i$, $r_i\in \refineAll(R_{i-1})$ such that $R_s$ is perfect. Then, there must be a set of at most $s$ one-step refinements $I\subseteq \refineAll(W)$, $|I| \leq s$, such that \[\bigcup_{r\in I} \pairsplit(W,\pairs(W),r) = \pairs(W).\]
\end{theorem}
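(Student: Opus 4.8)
The plan is to mirror the proof of \Cref{theorem:impLB} essentially line by line, replacing dirty examples by conflicting pairs, $\dirty(\cdot)$ by $\pairs(\cdot)$, $\imp$ by $\pairsplit$, and crucially $\refine$ by $\refineAll$. First I would establish the single-step analogue of \Cref{lem:imp_subset}: for a witness tree $W$ with $W \xrightarrow{r} R$ where $r\in\refineAll(W)$, any subset $P\subseteq\pairs(W)$, and any $r_2\in\refineAll(R)$, there is a one-step refinement $r_1\in\refineAll(W)$ with $\pairsplit(R,P'',r_2)\subseteq\pairsplit(W,P,r_1)$, where $P''\coloneqq P\setminus\pairsplit(W,P,r)$. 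Two preliminary observations make this cleaner than the $\imp$ case. Since a refinement only subdivides an edge and adds a leaf, it never merges leaves, so $\pairs(R)\subseteq\pairs(W)$; and whether a pair is split by $r=(v,i,t,e)$ depends only on $v$ and the cut $(i,t)$, not on the witness example $e$ (a pair in $E[W,v]$ lying in one leaf is split exactly when its two examples fall on opposite sides of $(i,t)$, regardless of which side becomes the new leaf). This example-independence, together with the fact that $\refineAll$ imposes no witness-preservation constraint, removes the bookkeeping that complicated \Cref{lem:imp_subset}.

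For the lemma itself I would run the same three-case split on the vertex $v'=(\cdot)$ of $r_2=(v',i',t',e')$, letting $u$ and $\ell$ be the inner vertex and leaf created by $r$. Writing $\bar v\coloneqq v'$ when $v'\in V(W)$ and $\bar v\coloneqq v$ when $v'\in\{u,\ell\}$, the key containment in each case is $E[R,v']\subseteq E[W,\bar v]$ (equality for $v'=u$ since $E[R,u]=E[W,v]$, and a subset for $v'=\ell$ and for $v'$ inside the subtree of $v$ because some examples are diverted to $\ell$). I would then set $r_1\coloneqq(\bar v,i',t',e'')$ for any $e''\in E[W,\bar v]$. Any pair $p\in\pairsplit(R,P'',r_2)$ satisfies $p\subseteq E[R,v']\subseteq E[W,\bar v]$, lies in $\pairs(W)$ (as $p\in P''\subseteq P\subseteq\pairs(W)$, so its two examples share a leaf of $W$ routed through $\bar v$), and has its two examples on opposite sides of $(i',t')$; hence $r_1$ splits $p$, giving $p\in\pairsplit(W,P,r_1)$. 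The empty case $\pairsplit(R,P'',r_2)=\emptyset$ is trivial.

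Given the lemma, I would finish exactly as in \Cref{theorem:impLB}. Set $P_0\coloneqq\pairs(W)$ and $P_i\coloneqq P_{i-1}\setminus\pairsplit(R_{i-1},P_{i-1},r_i)$, so that $P_i=P_{i-1}\cap\pairs(R_i)\subseteq\pairs(R_i)$; since $R_s$ is perfect we have $\pairs(R_s)=\emptyset$ and therefore $P_s=\emptyset$, which together with $P_0\supseteq P_1\supseteq\cdots\supseteq P_s$ telescopes to $\bigcup_{i\in[s]}\pairsplit(R_{i-1},P_{i-1},r_i)=\pairs(W)$. Applying the single-step lemma iteratively along $R_{i-1},R_{i-2},\ldots,R_0=W$ (the subtracted sets match because $P''$ at level $j$ equals $P_{j+1}$) yields for each $i$ a refinement $r_i'\in\refineAll(W)$ with $\pairsplit(R_{i-1},P_{i-1},r_i)\subseteq\pairsplit(W,\pairs(W),r_i')$. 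Taking $I\coloneqq\{r_i'\mid i\in[s]\}$ gives $|I|\le s$ and $\bigcup_{r\in I}\pairsplit(W,\pairs(W),r)=\pairs(W)$.

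The main obstacle is the lemma, and specifically the $v'=\ell$ case, where one must argue that a pair split by a refinement rooted at the freshly created leaf is also split by a refinement rooted at $v$ in $W$; here the containment $E[R,\ell]\subseteq E[W,v]$ and the witness-independence of splitting are what make the chosen $r_1$ work. I would also flag explicitly that using $\refineAll$ rather than $\refine$ is essential — the worked example after \Cref{def:pairsplit} shows the bound can fail with $\refine$ — and that this is precisely what frees us from verifying that the constructed $r_1$ preserves witnesses.
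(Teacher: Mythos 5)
Your proposal is correct and follows the same overall route as the paper's proof: a single-step pull-back lemma (the paper's \Cref{lem:pairsplit_subset}) followed by the identical telescoping argument with $P_0=\pairs(W)$, $P_i=P_{i-1}\setminus\pairsplit(R_{i-1},P_{i-1},r_i)$, $P_s=\emptyset$, and $I=\{r_i'\mid i\in[s]\}$. Where you genuinely differ is in the proof of the lemma itself, and your version is tighter than the paper's. The paper's proof of \Cref{lem:pairsplit_subset} asserts $\refineAll(W)\subseteq\refineAll(R)$ and then ``simply chooses $r_1=r_2$''; taken literally, this choice only makes sense when the vertex of $r_2$ already lies in $V(W)$ (and even the asserted inclusion fails in general, since examples diverted to the new leaf can make $E[R,v]\subsetneq E[W,v]$), so refinements of $R$ rooted at the newly created vertices $u$ and $\ell$ are not covered by the paper's argument. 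Your three-case remapping ($\bar v=v'$ when $v'\in V(W)$, and $\bar v=v$ when $v'\in\{u,\ell\}$), justified by $E[R,v']\subseteq E[W,\bar v]$ together with the observation that whether a pair is split depends only on the vertex and the cut and not on the witness example, closes exactly this gap. Two further points in your favor: you state the lemma with $r\in\refineAll(W)$ rather than the paper's $r\in\refine(W)$, which is what the theorem actually needs since the given series only guarantees $r_i\in\refineAll(R_{i-1})$; and you make the iterated application of the lemma down the chain $R_{i-1},R_{i-2},\ldots,R_0=W$ explicit, including why the subtracted sets match at each level, which the paper (in both \Cref{theorem:impLB} and \Cref{theorem:pairLB}) leaves implicit.
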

Again similar to the ImpLB, we first need a lemma that has the same function as \Cref{lem:imp_subset}: If $R$ was created by applying a one-step refinement to $W$ then for any pairsplit set in $R$, we want to show that there is a pairsplit set in $W$ that is a superset of the pairsplit set in $R$. An important detail is that we calculate these pairsplit sets with respect to $\pairs(W)$. That means if there is a pair in $\pairs(R)$ that is not in $\pairs(W)$, we ignore it.

\begin{lemma}\label{lem:pairsplit_subset}
	Let~$W$ be a witness tree,~$P\subseteq \pairs(W)$ a subset of the pairs in~$W$ and~$r\in \refine(W)$ a one-step refinement with~$W \xrightarrow{r} R$. Then for each one-step refinement~$r_2\in \refineAll(R)$, there exists a one-step refinement~$r_1\in \refineAll(W)$, such that~\[\pairsplit(R,P',r_2)\subseteq \pairsplit(W,P,r_1)\] with~$P' \coloneqq P \setminus \pairsplit(W,P,r)$.
\end{lemma}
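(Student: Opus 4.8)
The plan is to mirror the proof of \Cref{lem:imp_subset} almost verbatim, replacing dirty examples by pairs and $\refine$ by $\refineAll$, while exploiting one crucial simplification: whether a one-step refinement splits a pair depends \emph{only} on whether its cut separates the two examples of the pair, and not at all on the chosen example or on any class labels. Consequently, I can always reuse the dimension $i'$, threshold $t'$, and even the example $e'$ of $r_2$, rather than having to substitute a new example as was necessary in the imp setting. Write $r = (v,i,t,e)$ and $r_2 = (v',i',t',e')$, and let $u$ and $\ell$ be the inner vertex and leaf added to $W$ by $r$, so that $V(R) = V(W) \cup \{u,\ell\}$. As in \Cref{lem:imp_subset}, I may assume $\pairsplit(R,P',r_2) \ne \emptyset$, since otherwise the empty set is trivially contained in $\pairsplit(W,P,r_1)$ for any $r_1 \in \refineAll(W)$.

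First I would record two facts that make the pairsplit sets well defined and locate the relevant examples. Since $P' = P \setminus \pairsplit(W,P,r)$, every pair in $P'$ survives the application of $r$ in the same leaf, so $P' \subseteq \pairs(R)$ and hence $\pairsplit(R,P',r_2)$ is meaningful. Moreover, for any pair $\{c,d\} \in \pairsplit(R,P',r_2)$, the refinement $r_2$ changes the leaf of $c$ and $d$, which forces both $c,d \in E[R,v']$ and forces the cut $(i',t')$ to separate them; and since $P' \subseteq P \subseteq \pairs(W)$, the pair $\{c,d\}$ also lies in the same leaf of $W$.

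Next I would define $r_1$ by the same three-case split on the location of $v'$ used in \Cref{lem:imp_subset}: if $v' \in V(W)$ I set $r_1 = (v',i',t',e')$, and if $v' = u$ or $v' = \ell$ I set $r_1 = (v,i',t',e')$. Membership $r_1 \in \refineAll(W)$ follows from $E[R,v'] \subseteq E[W,v']$ in the first case, from $E[R,u] = E[W,v]$ in the second, and from $E[R,\ell] \subseteq E[W,v]$ in the third, each of which guarantees that $e'$ is assigned to the vertex at which $r_1$ acts. To finish, take any pair $q \in \pairsplit(R,P',r_2)$; by the previous paragraph its two examples lie in the set assigned to the vertex used by $r_1$ and are separated by $(i',t')$, so applying $r_1$ sends them to different leaves. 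As $q \in P \subseteq \pairs(W)$, this means $q \in \pairsplit(W,P,r_1)$, which yields the desired inclusion.

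The main obstacle I anticipate is the localization bookkeeping rather than any deep argument: I must carefully argue that both examples of a split pair actually descend through the vertex at which $r_1$ is placed, which in the case $v' = \ell$ amounts to ``pulling the cut up'' from the freshly created leaf to $v$ and checking that a single cut at $v$ still separates the pair (using $E[R,\ell] \subseteq E[W,v]$). A secondary subtlety worth emphasizing is why $\refineAll$, and not $\refine$, must be used here: the chosen $r_1$ may well move a witness and therefore need not belong to $\refine(W)$, so the witness constraint must be dropped for the inclusion to hold in all three cases.
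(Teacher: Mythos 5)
Your proof is correct, and it is in fact more careful than the paper's own argument. The paper's proof takes a shortcut: it asserts that $\refineAll(W)\subseteq \refineAll(R)$ and concludes that one can ``simply choose $r_1 = r_2$,'' adding only the remark that pairs in $P'$ never enter subtrees of $R$ they were not already in. Taken literally, that shortcut works only in your first case, where the vertex $v'$ of $r_2$ already lies in $V(W)$ (and there the justification is $e'\in E[R,v']\subseteq E[W,v']$, as you note). If $v'$ is one of the two newly created vertices $u$ or $\ell$, then $r_2\notin\refineAll(W)$, since its vertex does not exist in $W$, so $r_1=r_2$ is not even a well-defined one-step refinement of $W$; moreover, the inclusion the paper invokes points the wrong way for this purpose (one would need $\refineAll(R)\subseteq\refineAll(W)$), and the stated inclusion is itself false in general because $E[R,v']\subseteq E[W,v']$ can be strict. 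Your three-case split---reusing $(v',i',t',e')$ when $v'\in V(W)$ and relocating the cut to $v$ via $r_1=(v,i',t',e')$ when $v'\in\{u,\ell\}$, justified by $E[R,u]=E[W,v]$ and $E[R,\ell]\subseteq E[W,v]$---handles exactly the cases the paper glosses over, and your key simplification relative to \Cref{lem:imp_subset} (that splitting a pair depends only on both examples reaching the refined vertex and being separated by the cut, not on the chosen example, classes, or witnesses) is what lets you keep $e'$ unchanged; it also correctly explains why $\refineAll$ rather than $\refine$ must be used for $r_1$. In short, the paper buys brevity at the cost of a genuine gap in the two ``new vertex'' cases; your version supplies the complete argument.
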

\begin{proof}
	Let~$r_2 = (v,i,t,e)\in \refineAll(R)$ be some one-step refinement for~$R$. We now just need to find some one-step refinement~$r_1\in \refineAll(W)$ such that~$\pairsplit(R,P',r_2)\subseteq \pairsplit(W,P,r_1)$.
	
	Since~$R$ was created by applying a one-step refinement to~$W$ we have~$\refineAll(W)\subseteq \refineAll(R)$. Because of this we can simply choose~$r_1 = r_2$. We know that~$P'\subseteq P$ and we know that pairs in~$P'$ can never be added to subtrees in~$R$ that they were not originally in before~$r$ was applied to~$W$. Because of this we have~$\pairsplit(R,P',r_2)\subseteq \pairsplit(W,P,r_1)$.
\end{proof}

Now we can prove \Cref{theorem:pairLB}. Due to the similarities in how we defined the ImpLB and the PairLB this proof works exactly the same as the proof for \Cref{theorem:impLB}.

\begin{proof}[Proof for \Cref{theorem:pairLB}]
	
	Let~$P_0 \coloneqq \pairs(W)$ and~$P_i \coloneqq P_{i-1}\setminus \pairsplit(R_{i-1},P_{i-1},r_i)$ for all~$i\in [s]$. Due to \Cref{lem:pairsplit_subset} we know that for each~$i\in [s]$ there is a one-step refinement~$r_i'\in \refineAll(W)$ with the following property:
\begin{equation}\label{eq:pairLB_proof_1}
	\pairsplit(R_{i-1},P_{i-1},r_i)\subseteq \pairsplit(W,\pairs(W),r_i').
\end{equation}
Let~$I = \{ r_i' \mid i\in [s]\}$. Clearly~$|I|\leq s$. Since~$R_s$ is perfect and~$P_s\subseteq \pairs(R_s)$ we must have~$P_s = \emptyset$. That implies~$\bigcup_{i\in [s]} \pairsplit(R_{i-1},P_{i-1},r_i) = \pairs(W)$. Due to \Cref{eq:pairLB_proof_1} we now also know~$\bigcup_{r\in I} \pairsplit(W,\pairs(W),r) = \pairs(W)$.
\end{proof}

\paragraph{Calculating the PairLB.}

To actually calculate a lower bound for this instance of \textsc{Set Cover} we could use the same method we used for the ImpLB. However our preliminary experiments showed that this method does not yield very good results if we use it to calculate the PairLB. This is probably due to the fact that we use~$\refineAll(W)$ instead of~$\refine(W)$.

Instead we translate the problem instance into an ILP instance. We can then create an LP-relaxation of that instance and solve this relaxation with an LP solver. Clearly the solution for the LP-relaxation is a lower bound for the solution of the ILP instance since we are dealing with a minimization problem.

To construct the ILP instance we use the standard ILP formulation for \textsc{Set Cover}. First, we introduce a binary variable~$x_r$ for every one-step refinement~$r\in\refineAll(W)$. We want $x_r$ to be equal to~$1$ if and only if the~$\pairsplit$ set of~$r$ is part of the set cover solution. Next we are going to introduce a constraint~$1\leq \sum_{r\in S_p} x_r$ for every pair~$p\in \pairs(W)$ where~$S_p \coloneqq \{ r\in \refineAll(W) \mid p\in\pairsplit(W,\pairs(W),r) \}$ is the set of all one-step refinements that split~$p$. The objective function is then just to minimize the sum of all variables. The LP-relaxation is obtained by allowing the variables to have any value between~$0$ and~$1$. This of course also makes it possible that the result of the objective function is not an integer. In that case we simply round up to the nearest integer.

To reduce the size of the LP instance and speed up the calculation we can again use one of the data reduction techniques that we used for the ImpLB. This time we can eliminate all subsets~$S$ that come from one-step refinements that are not applied to the root of the tree. Since we are using one-step refinements from~$\refineAll(W)$ we can always apply the same one-step refinement at the root of~$W$ and get a subset that is a superset of~$S$.
}

\section{Subset constraints}
\label{sec:subset_constraints}
\appendixsection{sec:subset_constraints}

A \emph{subset constraint of a vertex~$v$} in a witness tree~$W$ is a subset~$S\subseteq E[W,v]$ of examples which imposes the constraint that one-step refinements are not allowed to remove all examples of~$S$ from the subtree of~$v$.
The idea is that if such one-step refinements lead to a perfect tree, then there is a different perfect tree that does not violate this constraint and is not larger.
For example, if one can replace threshold~$t$ in vertex~$v$ of~$W$ by a different threshold~$t'$ such that there is no example in~$v$ with a threshold between~$t$ and~$t'$, we only have to test one of~$t$ and~$t'$.
Testing this for each pair of thresholds is too inefficient.
Instead, we use subset constraints: 

\begin{definition}\label{def:subset_constraint}
	Let~$W$ be a witness tree and let~$v\in V(W)$. 
	We call a subset~$C\subseteq E$ a \emph{Subset Constraint} of~$v$. We call~$C$ \emph{violated} if~$E[W,v]\cap C = \emptyset$. 
	The set~$\const(W,v)$ contains all subset constraints of~$v$ in the tree~$W$.
\end{definition}

We add a subset constraint~$C$ when~$v$ is first added to the tree
.
After each one-step refinement, we update which examples of~$C$ are still in the subtree of~$v$, thus immediately detecting a violation of~$C$.




\paragraph{Refinements.} 
A refinement of a witness tree~$W$ is defined as follows.



\begin{definition}\label{def:refinement}
	Let~$W = (T,\cut,\cla,\wit)$ and~$R = (T',\cut',\cla',\wit')$ be two witness trees. We say that~$R$ is a \emph{refinement} of~$W$ if and only if~$W$ and~$R$ fulfill the following properties:

	\begin{enumerate}
		\item \label{def:refinement_1} $V(W)\subseteq V(R)$.
		\item \label{def:refinement_2} A vertex $v\in V(W)$ is a leaf in $W$ if and only if~$v$ is a leaf in $R$, and for each leaf $\ell\in V(W)$ we have $\cla(\ell) = \cla'(\ell)$ and $\wit(\ell)\in E[R,\ell]$.
		\item \label{def:refinement_3} Let $v_1,v_2\in V(W)$ be a pair of vertices. The vertex $v_1$ is in the left subtree of $v_2$ in $W$ if and only if~$v_1$ is in the left subtree of $v_2$ in $R$. The vertex $v_1$ is in the right subtree of $v_2$ in $W$ if and only if~$v_1$ is in the right subtree of $v_2$ in $R$.
		\item \label{def:refinement_5} For every inner vertex $v\in V(W)$ we have $\cut(v) = \cut'(v)$.
	\end{enumerate}
\end{definition}

If~$R$ was created by the algorithm by applying one or multiple one-step refinements to~$W$ then~$R$ is a refinement of~$W$.
In the appendix, we show that also the reverse direction is true.

\toappendix{
In the main part we showed that if~$R$ was created by the algorithm by applying one or multiple one-step refinements to~$W$ then~$R$ is a refinement of~$W$.
Next, we verify the reverse direction.

\begin{lemma}
\label{lem:unique_refinement}
	Let~$R$ be a perfect witness tree that is a refinement of the witness tree~$W$ and let~$e\in\dirty(W)$ be a dirty example in~$W$. Then, there must be a one-step refinement~$r\in\refine(W)$ with~$W\xrightarrow{r} W'$ such that~$R$ is a refinement of~$W'$.
\end{lemma}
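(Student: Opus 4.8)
The plan is to locate, inside the given refinement $R$, the single new vertex at which $e$ is first steered away from the (wrong) leaf it reaches in $W$, and to read the desired one-step refinement off the cut stored at that vertex.

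First I would record the consequence of perfectness. Write $\ell_W := \leaf(W,e)$; since $e$ is dirty, $\cla(\ell_W)\neq\lambda(e)$, and by \Cref{def:refinement} (property~\ref{def:refinement_2}) the class of $\ell_W$ is unchanged in $R$. As $R$ is perfect, $e$ is classified correctly in $R$, so $\leaf(R,e)\neq\ell_W$. In fact $\leaf(R,e)\notin V(W)$: if $e$ reached some leaf $\ell'\in V(W)$ in $R$, then following $e$ down $R$ and using that the cuts at the $W$-ancestors of $\ell'$ agree in $W$ and $R$ (\Cref{def:refinement}, properties~\ref{def:refinement_3} and~\ref{def:refinement_5}) shows $e$ would reach $\ell'$ in $W$ as well, forcing $\ell'=\ell_W$ and contradicting the class mismatch. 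Hence $e$ is diverted from $\ell_W$ by a vertex of $R$ that does not belong to $W$.

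Next I would pinpoint that divergence vertex. Let $u$ be the lowest common ancestor in $R$ of $\ell_W$ and $\leaf(R,e)$. I claim $u\in V(R)\setminus V(W)$: otherwise $u$ is inner in $W$ and an ancestor of $\ell_W$, so it lies on $e$'s classification path in $W$; since $\cut(u)$ is the same in $W$ and $R$, in $R$ it would send $e$ toward the same subtree as $\ell_W$, contradicting that $\leaf(R,e)$ lies in the opposite subtree of $u$. Put $(i,t):=\cut(u)$ (its cut in $R$); then $\leaf(R,e)$ lies on $e$'s side of $(i,t)$ and $\ell_W$ on the other side. Let $c$ be the topmost $W$-vertex of $e$'s $W$-path lying in the non-$e$ subtree of $u$; equivalently, $c$ is the child, along $e$'s $W$-path, of the deepest $W$-vertex of that path that is still an ancestor of $u$ in $R$ (and in the degenerate case where $e$ is already diverted above the root of $W$, take $c$ to be that root). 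I would then set $r:=(c,i,t,e)$ and, crucially, declare the two vertices that $r$ creates to be exactly $u$ and $\ell_{\mathrm{new}}:=\leaf(R,e)$, both of which already live in $V(R)$; note $e\in E[W,c]$ since $c$ lies on $e$'s $W$-path, so $r$ is a well-formed one-step refinement.

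Finally I would verify $r\in\refine(W)$ and that $R$ refines the tree $W'$ with $W\xrightarrow{r}W'$. For membership in $\refine(W)$: $e\in\dirty(W)$ is given; $e$ and $\wit(\ell_W)$ are separated by $(i,t)$, because $\wit(\ell_W)\in E[R,\ell_W]$ by \Cref{def:refinement} (property~\ref{def:refinement_2}) and $\ell_W$ sits in the non-$e$ subtree of $u$; and $r$ moves no witness, since every witness of a leaf in the subtree of $c$ reaches that leaf in $R$ through the non-$e$ side of $u$ and so stays on $c$'s side of the inserted cut, while witnesses outside the subtree of $c$ are untouched by $r$. For the four conditions that make $R$ a refinement of $W'$: property~\ref{def:refinement_1} holds by our designation of $u,\ell_{\mathrm{new}}\in V(R)$ together with $V(W)\subseteq V(R)$; property~\ref{def:refinement_5} holds because $r$ leaves all old cuts intact and sets the cut of $u$ to $(i,t)=\cut(u)$ in $R$; property~\ref{def:refinement_2} holds because $r$ preserves leaf/inner status, class, and witness of every old leaf, while $\ell_{\mathrm{new}}=\leaf(R,e)$ is a leaf of $R$ with class $\lambda(e)$ and witness $e\in E[R,\ell_{\mathrm{new}}]$; and property~\ref{def:refinement_3} follows from the structure preservation of $R$ over $W$ once one checks that in $R$ the vertex $u$ lies between $c$ and its $W$-parent with no $W$-vertex in between, which is exactly how $r$ inserts $u$ in $W'$.

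The main obstacle I anticipate is this last bookkeeping: confirming that the refinement $r=(c,i,t,e)$ reconstructed from $R$ genuinely satisfies the two side-conditions defining $\refine(W)$, and that the inserted vertex $u$ occupies precisely the position in $R$ demanded by \Cref{def:refinement} (property~\ref{def:refinement_3}), including the degenerate case where $e$ is diverted above the root of $W$. Everything else is a routine transfer of the structure-preservation properties from the pair $(W,R)$ to the pair $(W',R)$.
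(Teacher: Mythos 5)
Your proposal is correct and takes essentially the same route as the paper's own proof: both reconstruct the one-step refinement from $R$ by taking the new leaf to be $\leaf(R,e)$, the new inner vertex $u$ to be the unique vertex of $R$ having $\leaf(R,e)$ in one subtree and $e$'s leaf in $W$ in the other (your lowest common ancestor), reading off the cut of $u$ in $R$, and attaching at the first (topmost) $V(W)$-vertex on the path from $u$ toward $\leaf(W,e)$. The paper's proof is in fact terser than yours, merely stating this construction and asserting its properties, whereas you additionally verify the side conditions for $r\in\refine(W)$ and the refinement properties of $W'$ versus $R$.
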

\begin{proof}
	We need to find a one-step refinement~$r = (v,i,t,e)$ with~$W\xrightarrow{r} W'$ such that~$R$ is a refinement of~$W'$. That means the leaf~$\ell$ and the inner vertex~$u$ we add with~$r$ need to be in~$V(R)$ due to Property~\ref{def:refinement_1} in \Cref{def:refinement}. To identify these vertices we can use the dirty example~$e$. Let~$\ell'$ be the leaf that~$e$ is assigned to in~$W$. The one-step refinement~$r$ makes~$e$ the witness of~$\ell$. Property~\ref{def:refinement_2} tells us that~$\ell$ must be the leaf that~$e$ is assigned to in~$R$. The inner vertex~$u$ must now be the vertex in~$R$ that has~$\ell$ in one of its subtrees and~$\ell'$ in the other subtree. There can only be one vertex like this. This inner vertex~$u$ now tells us the cut~$(i,t)$ we need for~$r$ due to Property~\ref{def:refinement_5}. Now we just need to find the vertex~$v$ in~$V(W)$ at which~$r$  is applied. For this we can look at the path from~$u$ to~$\ell'$ in~$R$. Starting from~$u$, the vertex~$v$ must be the first vertex on that path that also exists in~$V(W)$.
\end{proof}

For the correctness of \Cref{lem:unique_refinement} it is essential that~$W$ has at least one dirty example.
This is because otherwise there is always a smaller perfect tree.
This is \emph{not} a crucial restriction since our goal is to find the smallest perfect tree.
}

\paragraph{Definition and proof of subset constraints.}
Now, we add subset constraints to a witness tree after the application of a one-step refinement.


\begin{definition}\label{def:correct_sub_const}
	Let~$N$ be a node in the search tree with witness tree~$W$ and the dirty example~$e \coloneqq \ex(N)$ and let~$r \coloneqq (v,i,t,e)\in \refine(W)$ be a one-step refinement with~$W\xrightarrow{r} W'$ that introduces a subset constraint~$C$ for the newly added inner vertex.
	
	We say that~$C$ is \emph{correct} if for each perfect witness tree~$R$ that is a refinement of~$W'$ and violates~$C$, there is a different witness tree~$R'$ with the following properties:
	\begin{enumerate}
		\item\label{def:correct_sub_const_1} $R'$ is also perfect.
		\item\label{def:correct_sub_const_2} $R'$ is not bigger than $R$.
		\item\label{def:correct_sub_const_3} There is a different one-step refinement~$r' \in \refine(W)$ with~$W\xrightarrow{r'} W''$ that \Cref{alg:base} chooses before~$r$ such that~$R'$ is a refinement of~$W''$.
	\end{enumerate}
\end{definition}

\toappendix{

\subsection{Intuition of \Cref{def:correct_sub_const}}

The most important part of \Cref{def:correct_sub_const} is Property~\ref{def:correct_sub_const_3}, that is, that~$r'$ is chosen before~$r$. 
Together with the other properties, according to \Cref{lem:unique_refinement}, there must be a perfect witness tree~$R''$ that is not bigger than~$R$, of which~$R'$ is a refinement and which is discovered by the algorithm before~$R$.
That means, if there is a perfect tree~$W$ that violates a correct subset constraint there must be a different perfect tree~$W'$ that is discovered by the algorithm before~$W$ and is not bigger than~$W$. 
If~$W'$ now also violates a correct subset constraint we can find a perfect tree~$W''$ that is discovered by the algorithm before~$W'$ and is not bigger than~$W'$. 
We can keep doing this until we eventually find a perfect tree that does not violate any correct subset constraints. Due to \Cref{def:correct_sub_const} and there only being a finite number of trees, such a tree must always exist. 
}

\begin{theorem}[\appref{theorem:sub_const_correctness}]
\label{theorem:sub_const_correctness}
	If there is at least one perfect witness tree of size at most~$s\in\mathds{N}$, then there is at least one perfect witness tree of size at most~$s$ that does not violate any correct subset constraints.
\end{theorem}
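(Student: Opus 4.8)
The plan is to run the deterministic search of \Cref{alg:base} in my head and argue about the \emph{first} perfect witness tree of size at most $s$ that this search reaches. Fixing the rule by which the algorithm selects the dirty example $\ex(N)$ at each node $N$ and the fixed order in which it enumerates one-step refinements makes the conceptual (unpruned) search tree a well-defined object, and its depth-first traversal induces a total order $\prec$ on its nodes. By the reverse-direction \Cref{lem:unique_refinement}, the one-step refinement leading from a witness tree $W$ towards a fixed perfect refinement is uniquely determined once the dirty example is fixed; applying this repeatedly from the initial single-leaf tree, every perfect witness tree of size at most $s$ that the search can reach occurs at exactly one search node, so $\prec$ restricts to a total order on these trees. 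Since the data set is finite there are only finitely many of them, and by hypothesis at least one exists, so I may take the $\prec$-minimal one and call it $R^*$.

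I then claim that $R^*$ violates no correct subset constraint, arguing by contradiction. Suppose $R^*$ violates some correct subset constraint $C$. Every subset constraint is created when its vertex is added, so $C$ was introduced by a one-step refinement $r=(v,i,t,e)\in\refine(W)$ with $e=\ex(N)$ and $W=\tree(N)$, applied along the search path to $R^*$ via $W\xrightarrow{r}W'$. As $R^*$ is obtained from $W'$ by further one-step refinements it is a refinement of $W'$, and by assumption it violates $C$. This is exactly the hypothesis of \Cref{def:correct_sub_const} with $R\coloneqq R^*$: correctness of $C$ yields a perfect witness tree $R'$ that is no larger than $R^*$, together with a \emph{different} one-step refinement $r'\in\refine(W)$ that the algorithm enumerates before $r$, such that $W\xrightarrow{r'}W''$ and $R'$ is a refinement of $W''$.

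The remaining step, which is the technical heart of the argument, is to turn the abstract tree $R'$ into a perfect witness tree that the search actually reaches strictly before $R^*$. Starting from $W''$, I repeatedly invoke \Cref{lem:unique_refinement}, each time with the dirty example the algorithm would select and with $R'$ as the target perfect refinement, obtaining a chain $W''=W_0''\to W_1''\to\cdots$ of one-step refinements that all preserve $R'$ as a refinement; each step adds a vertex of $R'$, so the chain reaches a first perfect tree $R''$ satisfying $V(W'')\subseteq V(R'')\subseteq V(R')$, whence $R''$ has size at most that of $R'$, hence at most that of $R^*$, hence at most $s$. By construction $R''$ lies at a genuine node inside the subtree rooted at the $r'$-branch of $W$, while $R^*$ lies in the subtree rooted at the $r$-branch; since $r'$ precedes $r$ in the algorithm's enumeration these subtrees are disjoint and the $r'$-branch is traversed first, so $R''\prec R^*$. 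Thus $R''$ is a perfect witness tree of size at most $s$ that is $\prec$-smaller than $R^*$, contradicting minimality and proving the claim. I expect the main obstacle to be making the discovery order fully precise, so that "chosen before $r$'' in \Cref{def:correct_sub_const} and "discovered before $R^*$'' refer to the same well-defined traversal order, and verifying that $R''$ is a node the search genuinely visits rather than merely an abstract refinement of $W''$.
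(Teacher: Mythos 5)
Correct, and essentially the paper's own approach: the paper assumes a perfect tree of size at most~$s$ violating a correct subset constraint, uses \Cref{def:correct_sub_const} (together with \Cref{lem:unique_refinement}, spelled out in its accompanying intuition discussion) to obtain a perfect tree of no larger size that the search discovers strictly earlier, and iterates until termination by finiteness --- which is exactly your minimal-counterexample argument under the traversal order~$\prec$, phrased as an iteration. If anything, your write-up is more explicit than the paper's two-line proof, since you carry out in detail the step of converting the abstract refinement $R'$ of $W''$ into a perfect tree $R''$ that the search genuinely visits before $R^*$.
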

\appendixproof{theorem:sub_const_correctness}{
\begin{proof}
	Let us assume there is a perfect witness tree~$W$ that has at most size~$s$ and violates at least one correct subset constraint~$C$. According to \Cref{def:correct_sub_const} there must be a different perfect witness tree~$W'$ that is not bigger than~$W$ and is discovered by the algorithm before~$W$.
	
	We can keep replacing the current tree with a different tree according to \Cref{def:correct_sub_const} until we find a tree where no correct subset constraint is violated. Since we know that the replacement tree is always discovered earlier by the algorithm and there is only a finite number of trees we must find such a tree eventually.
\end{proof}
}

Next, we introduce two specific subset constraints.


\paragraph{Threshold Subset Constraints.}

Our first subset constraint is based on the observation: replacing a threshold by another threshold without changing the leaf assignment of any example can be possible.


\begin{definition}\label{def:threshold_sub_const}
	Let~$N$ be a node in the search tree with the witness tree~$W \coloneqq \tree(N)$ and the dirty example~$e \coloneqq \ex(N)$, let~$r=(v,i,t,e),r'=(v,i,t',e)\in \refine(W)$ with~$W\xrightarrow{r} R$ and~$W\xrightarrow{r'} R'$ such that we have~$e[i]\leq t' < t$ or~$t < t' < e[i]$, and let~$\ell$ and~$u$ be the leaf and inner vertex that are added to~$W$ by~$r$.
We add the \emph{Threshold Subset Constraint}~$C \coloneqq E[R,\ell]\setminus E[R',\ell]$ to~$\const(R,u)$ and~$t'$ is the \emph{Constraint Threshold} of~$C$. 
\end{definition}

If a Threshold Subset Constraint of a vertex~$u$ in~$W$ is violated then replacing the threshold of~$u$ by the Constraint Threshold~$t'$ does not change the leaf of any example in~$W$. 

\begin{theorem}[\appref{thm-threshold-sub-const}]
  Threshold Subset Constraints are correct.%
  \label{thm-threshold-sub-const}
\end{theorem}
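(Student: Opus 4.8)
The plan is to verify \Cref{def:correct_sub_const} directly for the constraint of \Cref{def:threshold_sub_const}; the meta-result \Cref{theorem:sub_const_correctness} then does the rest. So fix the situation of \Cref{def:threshold_sub_const}: node $N$ with witness tree $W$ and dirty example $e=\ex(N)$, applied refinement $r=(v,i,t,e)$ with $W\xrightarrow{r}R$ adding inner vertex $u$ (cut $(i,t)$) and leaf $\ell$, and the alternative $r'=(v,i,t',e)$ with $W\xrightarrow{r'}R'$, where $C=E[R,\ell]\setminus E[R',\ell]$. To avoid clashing with the names $R,R'$ used here I would write $P$ for an arbitrary perfect witness tree that refines $R$ and violates $C$, and build the tree demanded by \Cref{def:correct_sub_const}, which I would call $P'$. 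The candidate is forced: take $P'$ to be $P$ with the cut at $u$ changed from $(i,t)$ to $(i,t')$, and let the earlier refinement required by Property~\ref{def:correct_sub_const_3} be exactly this $r'$, so that $W''=R'$. Since $t'$ lies between $e[i]$ and $t$, it is strictly closer to $e[i]$ than $t$ is, so by the branching order (\Cref{line-loop-osr}), which for a fixed vertex and dimension tries thresholds nearest $e[i]$ first, $r'$ is considered before $r$; and $r'\in\refine(W)$, $r'\neq r$ by hypothesis.

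The geometric heart of the argument is to read off $C$ explicitly. Assuming without loss of generality $e[i]\le t'<t$ (the case $t<t'<e[i]$ is symmetric, with the new leaf on the right), the example $e$ sits on the $\le$-side of both cuts, so $\ell$ is the left child in both $R$ and $R'$, giving $E[R,\ell]=\{x\in E[W,v]:x[i]\le t\}$ and $E[R',\ell]=\{x\in E[W,v]:x[i]\le t'\}$, hence $C=\{x\in E[W,v]:t'<x[i]\le t\}$ --- the thin slab of examples below $v$ whose $i$-th coordinate lies strictly between $t'$ and $t$. Because $P$ refines $R$, the example sets at shared vertices can only shrink, so $E[P,u]\subseteq E[R,u]=E[W,v]$, and these slab examples are precisely the ones that would switch sides of $u$ when its threshold moves from $t$ to $t'$. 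Violation of $C$ means $E[P,u]\cap C=\emptyset$, so no example reaching $u$ switches sides; thus every example is routed to the same leaf in $P'$ as in $P$, and $E[P',w]=E[P,w]$ for every vertex $w$. This at once gives Property~\ref{def:correct_sub_const_1} ($P'$ classifies exactly as the perfect tree $P$, so $P'$ is perfect) and Property~\ref{def:correct_sub_const_2} ($P'$ and $P$ have the same vertex set, hence the same size).

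It remains to establish Property~\ref{def:correct_sub_const_3}, namely that $P'$ is a refinement of $R'$, by checking the four conditions of \Cref{def:refinement} with $R'$ in the $W$-role and $P'$ in the $R$-role. I would identify the new inner vertex and leaf of $R'$ with $u$ and $\ell$ (legitimate, since $e$ is on the same side of both cuts, so $R$ and $R'$ have identical topology); then Properties~\ref{def:refinement_1} and~\ref{def:refinement_3} follow from $P$ refining $R$ combined with the fact that $R,R'$ and $P,P'$ each share a topology. Property~\ref{def:refinement_5} holds because $P$ inherits all cuts of $R$, the trees $R$ and $R'$ agree at every inner vertex except $u$, and at $u$ I deliberately set the cut of $P'$ to $(i,t')=\cut_{R'}(u)$. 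The one delicate point, and what I expect to be the main obstacle, is Property~\ref{def:refinement_2}: class labels transfer unchanged, but the witness-containment $\wit_{R'}(\cdot)\in E[P',\cdot]$ must be re-certified against the \emph{new} threshold $t'$. This is exactly where the slab computation pays off: since $E[P',w]=E[P,w]$ for all $w$ and $P$ already refines $R$, every inherited witness still lands in its leaf, while the new witness $e$ reaches $\ell$ in $P'$ because $e[i]\le t'$ routes it left at $u$. The conceptual crux throughout is this re-interpretation step --- a tree constructed on top of threshold $t$ has to be recognized as a legal descendant of the branch that instead used $t'$ --- and it rests entirely on the violation of $C$ forcing the two cuts to agree on all examples actually present below $u$.
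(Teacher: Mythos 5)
Your proposal is correct and follows essentially the same route as the paper's proof: replace the cut $(i,t)$ at $u$ by $(i,t')$ in the violating tree, use the violation of $C$ to conclude that no example changes its leaf (hence perfection and equal size), invoke the threshold-ordering of \Cref{line-loop-osr} for Property~\ref{def:correct_sub_const_3}, and conclude that the modified tree is a refinement of $R'$. The only difference is that you spell out the slab computation $C=\{x\in E[W,v]: t'<x[i]\le t\}$ and verify the four conditions of \Cref{def:refinement} explicitly, details the paper compresses into ``the only difference between $R$ and $R'$ and between $R^*$ and $R^{**}$ is the threshold of $u$.''
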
%
\appendixproof{thm-threshold-sub-const}{
\begin{proof}%
	We use the notation of \Cref{def:threshold_sub_const}; so let~$C\in \const(R,u)$ be the Threshold Subset Constraint added to~$u$ in~$R$.
	Furthermore, let~$R^*$ be a perfect witness tree that is a refinement of~$R$ and violates~$C$. 
	We find a different witness tree that has the three properties from \Cref{def:correct_sub_const}. 
	We can create such a witness tree~$R^{**}$ by simply replacing the threshold of~$u$ in~$R^*$ by the Constraint Threshold~$t'$. 
	Since~$C$ is violated in~$R^*$ we know that~$R^{**}$ is a perfect tree. 
	We also have not increased the size of the tree. Now we just need to show Property~\ref{def:correct_sub_const_3} of \Cref{def:correct_sub_const}.
	
Because~$t'$ is closer to~$e[i]$ than~$t$, \Cref{alg:base} chooses~$r=(v,i,t',e)$ before~$r'=(v,i,t,e)$ (see \Cref{sec:base_algo}). 
The only difference between~$R$ and~$R'$ and between~$R^*$ and~$R^{**}$ is the threshold of~$u$. 
	Thus,~$R^{**}$ is a refinement of~$R'$.
\end{proof}
}

\paragraph{Dirty Subset Constraints.}

Our second subset constraint is based on the idea that we  discard some vertices on the leaf-to-root path for one-step refinements.
For example, we may discard one-step refinements~$r$ at a vertex~$v$ when one of the child subtrees of~$v$ is already perfect, since it is also possible to apply~$r$ at the root of the other child subtree.

\begin{definition}\label{def:dirty_sub_const}
	Let~$N$ be a node in the search tree of \Cref{alg:base} with the witness tree~$W \coloneqq \tree(N)$ and the dirty example~$e \coloneqq \ex(N)$, let~$v$ be an inner vertex in~$W$ with the children~$v_1$ and~$v_2$, let~$r=(v,i,t,e),r'=(v_1,i,t,e)\in \refine(W)$ be two one-step refinements with~$W\xrightarrow{r} R$ and~$W\xrightarrow{r'} R_1$, let~$\ell$ and~$u$ be the leaf and inner vertex that are added to~$W$ by~$r$.
	We add the \emph{Dirty Subset Constraint}~$C = E[W,v_2]\cap \dirty(W)$ to~$\const(R,u)$.
\end{definition}

\begin{theorem}[\appref{thm-dirty-sub-const-corr}]
	Dirty Subset Constraints are correct.
	\label{thm-dirty-sub-const-corr}
\end{theorem}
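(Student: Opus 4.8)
The plan is to verify, for the Dirty Subset Constraint $C = E[W,v_2]\cap\dirty(W)$ that \Cref{def:dirty_sub_const} attaches to the new inner vertex $u$ of $W\xrightarrow{r}R$, the three properties of \Cref{def:correct_sub_const}. The "earlier" one-step refinement demanded by Property~3 will be the sibling refinement $r'=(v_1,i,t,e)$ from \Cref{def:dirty_sub_const}: since \Cref{alg:base} iterates the vertex of a one-step refinement along the leaf-to-root path of $e$ (see \Cref{sec:base_algo}) and $v_1$ lies strictly below $v$ on that path, $r'$ is explored before $r$. Writing $W\xrightarrow{r'}W''$, the task is therefore to transform an arbitrary perfect refinement $R^\ast$ of $R$ that violates $C$ into a perfect, no-larger witness tree $R'$ that refines $W''$.

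First I would extract the structural consequence of the violation. As $v_2$ lies below $u$ we have $E[R^\ast,v_2]\subseteq E[R^\ast,u]$, and since $R^\ast$ refines $R$ (which only adds cuts) also $E[R^\ast,u]\subseteq E[R,u]=E[W,v]$. Hence every $v_2$-example still reaching $u$ in $R^\ast$ lies in $E[W,v_2]$, and the violation $E[R^\ast,u]\cap C=\emptyset$ forces all of these to be \emph{non-dirty} in $W$; in particular they are classified in exactly the way the untouched $W$-subtree $W[v_2]$ classifies them. I would also note that, because $r\in\refine(W)$, no witness in $E[W,v]$ can be on $e$'s side of $(i,t)$, so every witness of a leaf inside $v_2$'s subtree stays inside that subtree (by Property~\ref{def:refinement_2} of \Cref{def:refinement}) and is non-dirty.

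The construction of $R'$ is a localized surgery on the subtree $R^\ast[u]$, leaving everything above $u$ in $R^\ast$ untouched. I would (i)~lift $\cut(v)$ to the top of this subtree; (ii)~on its $v_2$-branch graft the clean tree $W[v_2]$; and (iii)~on its $v_1$-branch install a copy $u_1$ of the cut $(i,t)$ with a fresh leaf $\ell_1$ of class $\lambda(e)$ and witness $e$, hanging the original $e$-side subtree of $R^\ast[u]$ below $\ell_1$ and re-attaching below the non-$e$-child of $u_1$ all refinement vertices that $R^\ast$ had inserted between $u$ and $v_1$ together with $R^\ast[v_1]$. Perfectness is then checked by cases: examples diverted above $u$ keep their leaf; non-dirty $v_2$-examples flow through $W[v_2]$ and are correct there; the $e$-side $v_1$-examples reach $\ell_1$ and form a subset of what the reused $e$-side subtree already classified correctly in $R^\ast$; and the non-$e$-side $v_1$-examples meet exactly the same inserted cuts as in $R^\ast$, with $\cut(v)$ and $(i,t)$ merely sending them to the branch they belong to regardless of the order, so they land in the same leaves. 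For the size bound, $u_1$ replaces $u$ one-for-one, whereas on the $v_2$-branch the clean $W[v_2]$ is no larger than the refined $R^\ast[v_2]$ and, in addition, we discard every refinement vertex that $R^\ast$ used only to separate the now-non-dirty $v_2$-examples; hence $\lvert R'\rvert\le\lvert R^\ast\rvert$. Matching $R'$ against $W''$ vertex by vertex (the cut $(i,t)$ at $u_1$, the leaf $\ell_1$ with witness $e$, the preserved witnesses of the other leaves, and the unchanged part above $v$) then gives that $R'$ refines $W''$, establishing Property~3.

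The main obstacle is precisely the perfectness of $R'$, and it is concentrated in the re-routing of the examples that change their leaf, namely the $e$-side examples of $v_2$'s subtree that used to be pulled into $\ell$ but must now be absorbed by $v_2$. This is exactly where the violation is indispensable: it guarantees that these examples are non-dirty in $W$, so replacing the possibly heavily refined $R^\ast[v_2]$ by the untouched $W[v_2]$ both classifies them correctly and does not enlarge the tree. The secondary difficulty is the bookkeeping of the refinement vertices inserted between $u$ and $v$: I must argue they can be moved onto the $v_1$-branch without disturbing any example, which holds because in $R^\ast$ they only ever see non-$e$-side examples and hence never interact with the relocated leaf $\ell_1$.
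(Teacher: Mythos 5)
Your proof takes essentially the same route as the paper's: you select the same earlier refinement $r'=(v_1,i,t,e)$ (justified by the leaf-to-root ordering of branching vertices), perform the same surgery---lifting $\cut(v)$ to the top of $u$'s subtree, grafting the untouched subtree of $W$ at $v_2$ onto the $v_2$-branch, and reattaching the cut $(i,t)$ with its $e$-side subtree, the intermediate refinement vertices, and the refined $v_1$-subtree onto the $v_1$-branch---and you invoke the violation of $C$ exactly as the paper does, to conclude that the examples of $E[W,v_2]$ rerouted into the grafted subtree are non-dirty in $W$ and hence correctly classified there. The only deviations are cosmetic (your ``copy'' $u_1$ and ``fresh leaf'' $\ell_1$ are the paper's reused $u$ and $\ell$; note that one cannot literally hang the reused $e$-side subtree \emph{below} a leaf), so this matches the paper's proof.
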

\appendixproof{thm-dirty-sub-const-corr}{
\begin{proof}
	Let~$N$ be a node in the search tree of \Cref{alg:base} with the witness tree~$W \coloneqq \tree(N)$ and the dirty example~$e \coloneqq \ex(N)$, let~$v$ be an inner vertex in~$W$ with the children~$v_1$ and~$v_2$ and let~$(v,i,t,e),(v_1,i,t,e)\in \refine(W)$ be two one-step refinements with~$W\xrightarrow{(v,i,t,e)} R$ and~$W\xrightarrow{(v_1,i,t,e)} R_1$. Finally, let~$\ell$ and~$u$ be the leaf and inner vertex that are added to~$W$ by~$(v,i,t,e)$.
	
	Next, let~$R^*$ be a perfect witness tree that is a refinement of~$R$ and violates~$C$. We find a different witness tree~$R^{**}$ that has the three properties from \Cref{def:correct_sub_const}. Without loss of generality, we assume that~$v_1$ is the left child of~$v$. We now obtain~$R^{**}$ in the following way.
	
	First we know that~$v$ and~$u$ still exist in~$R^*$ but~$u$ may no longer be the direct parent of~$v$. So we are now going to take the entire subtree of~$u$ with the exception of the subtree of~$v$, remove the edge from~$v$ to its left child and move the subtree of~$u$ into that gap. We connect everything so that the previous parent of~$u$ is now the parent of~$v$ and the previous parent of~$v$ is now the parent of the previous left child of~$v$. Lastly we replace the right subtree of~$v$ in~$R^{**}$ with the right subtree of~$v$ in~$W$.
	
	We know that~$E[R^{**},u]\subseteq E[R^*,u]$ which means that everything in the subtree of~$u$ is still correctly classified. We also know that~$E[R^*,v]\subseteq E[R^{**},v]$. But we can also see that~$E[R^{**},v]\subseteq E[W,v]$. We replaced the right subtree of~$v$ in~$R^{**}$ with the right subtree of~$v$ in~$W$ and we know that~$C$ is violated which means all dirty examples in that right subtree were removed. This means that~$R^{**}$ is perfect.
	
	These changes also do not increase the size of the tree and they also do not change a leaf into an inner vertex or an inner vertex into a leaf.
	If we now only look at the vertices that were already present in~$R$ we can notice that the relative position of most of these vertices has not changed. The only changes are that~$u$ and~$\ell$ are now in the left subtree of~$v$ instead of~$v$ being a child of~$u$. This is the exact same difference that we have between~$R$ and~$R_1$. That means~$R^{**}$ is a refinement of~$R_1$.
	
	Since the algorithm chooses~$(v_1,i,t,e)$ before~$(v,i,t,e)$ we have shown that~$R^{**}$ fulfills the three properties from \Cref{def:correct_sub_const}.
\end{proof}
}

Now, we show that using the \emph{Dirty Subset Constraint} we can improve the running time of~$\Oh((\delta\cdot D\cdot s)^{s} \cdot s\cdot  n)$ for \pDTSlong{} shown by \citet{KKSS23} by replacing the~$s$ in the base of the running time by~$\log(s)$.

\begin{theorem}[\appref{thm-dts-improved-algo}]
\label{thm-dts-improved-algo}
  \pDTSlong{} can be solved in $\Oh((\delta\cdot D\cdot \log(s))^{s} \cdot s\cdot  n)$~time.
\end{theorem}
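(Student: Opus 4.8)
The plan is to run \Cref{alg:base} augmented with the Dirty Subset Constraints of \Cref{def:dirty_sub_const}, which are correct by \Cref{thm-dirty-sub-const-corr}, so that by \Cref{theorem:sub_const_correctness} the search still finds a perfect tree of size at most $s$ whenever one exists. The search tree has depth exactly $s$, since every one-step refinement adds one inner vertex and we stop once the witness tree reaches size $s$. At each search-tree node the work --- enumerating $\refine(W)$, applying a refinement, updating the constraint counters, and testing whether $W$ classifies $(E,\lambda)$ --- is $\Oh(s\cdot n)$ after the usual bookkeeping. Hence the total cost is $\Oh(B^{s}\cdot s\cdot n)$, where $B$ bounds the number of children of a search-tree node, and it remains to show $B=\Oh(\delta\cdot D\cdot \log s)$.

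As in the analysis of \citet{KKSS23}, a one-step refinement $(v,i,t,e)$ for the chosen dirty example $e=\ex(N)$ is determined by the vertex $v$ on the path from $\leaf(e)$ to the root, the dimension $i$ (one of the $\delta$ dimensions in which $e$ and $\wit(\leaf(e))$ differ), and the threshold $t$ (at most $D$ of them). This yields the $\delta\cdot D$ factors, and the $s$ factor in the bound of \citet{KKSS23} is exactly the number of candidate vertices $v$, i.e.\ the length of the leaf-to-root path, which can be $\Theta(s)$. The heart of the proof is therefore to show that, once the Dirty Subset Constraints are in force, the algorithm must branch on only $\Oh(\log s)$ of these vertices.

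To bound the number of relevant vertices I would argue along the leaf-to-root path $\leaf(e)=w_0,w_1,\dots,w_k$, where for $j\ge 1$ the vertex $w_j$ is an inner vertex whose on-path child is $w_{j-1}$ and whose off-path child $c_j$ roots a sibling subtree; these sibling subtrees are pairwise disjoint. By \Cref{def:dirty_sub_const}, branching at $w_j$ attaches to the newly created vertex the constraint $E[W,c_j]\cap\dirty(W)$, and this constraint is violated --- triggering a prune either immediately or once later refinements remove the last such example --- unless some currently dirty example of $c_j$ is retained in the subtree. The key claim is that this pruning lets us restrict attention to a subsequence $w_{j_1},w_{j_2},\dots$ of the path along which the number of leaves of the subtree at least doubles from one retained vertex to the next; since the witness tree has at most $s+1$ leaves, there can be at most $1+\log_2(s+1)=\Oh(\log s)$ such vertices. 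Intuitively this is a heavy-path/doubling phenomenon: applying the new cut at a vertex reached from its heavier child can be deferred to that child (the lighter sibling carries too few dirty examples to justify pulling the cut up), so only the $\Oh(\log s)$ ``light'' ascents along the path survive. Combining the $\Oh(\log s)$ vertex choices with the $\delta\cdot D$ choices of $(i,t)$ gives $B=\Oh(\delta\cdot D\cdot\log s)$ and hence the claimed running time.

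The main obstacle is exactly this last claim: turning the per-refinement, final-tree statement of \Cref{thm-dirty-sub-const-corr} (``a retained dirty sibling is needed'') into a uniform $\Oh(\log s)$ bound on the branching degree at the current node. The difficulty is that the constraint speaks about which dirty examples survive in some hypothetical perfect refinement, whereas the degree bound must hold combinatorially at the present witness tree; bridging this requires choosing the right size potential --- number of leaves, or number of retained dirty examples, both bounded in terms of $s$ rather than $n$ --- and showing that two vertices the algorithm genuinely branches on force that potential to double, while every intermediate vertex is provably prunable. Care is also needed to confirm that restricting the vertex set to these $\Oh(\log s)$ positions does not interact badly with the enumeration of $(i,t)$ or with the Threshold Subset Constraints, so that no optimal-size tree is lost.
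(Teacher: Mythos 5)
There is a genuine gap at exactly the step you flag as the heart of the proof: the claim that the Dirty Subset Constraints cut the admissible positions on the leaf-to-root path down to $\Oh(\log s)$ via a leaf-count-doubling (heavy-path) argument. A Dirty Subset Constraint prunes a position at the moment of branching only in one situation: the constraint attached there is $E[W,c_j]\cap\dirty(W)$, and it is violated immediately exactly when the sibling subtree below $c_j$ contains \emph{no} dirty example at all (\Cref{def:dirty_sub_const}); nothing in the definition compares subtree sizes or defers a cut to a ``heavier'' child. Consequently, for an arbitrary choice of $e=\ex(N)$ the branching degree can be $\Theta(\delta\cdot D\cdot s)$: take a caterpillar-shaped witness tree (every inner vertex has one leaf child) in which every leaf still contains a dirty example, and let $e$ lie in the deepest leaf. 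Then every position $w_1,\dots,w_k$ with $k=\Theta(s)$ has a dirty sibling leaf, so no constraint is violated at creation and all positions are genuinely branched on; moreover the number of leaves below $w_j$ is $j+1$, growing by one per step, so no doubling subsequence exists. Pruning that happens only ``later, once the last such example is removed'' cannot rescue the bound here either: in this configuration none of the $\Theta(s)$ children violates its constraint when it is created, so each spawns a full recursive call, and the degree of the search-tree node really is $\Theta(\delta\cdot D\cdot s)$, not $\Oh(\delta\cdot D\cdot\log s)$.

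The missing idea is to exploit the freedom in \Cref{line-dirty} of \Cref{alg:base}: the $\Oh(\log s)$ bound is not proved for every dirty example, but for a well-chosen one. The paper's proof compresses $W$ by repeatedly deleting any leaf containing no dirty example together with its parent --- these are precisely the positions whose Dirty Subset Constraint is empty and hence immediately violated --- obtaining a tree $W^*$ that represents the same admissible one-step refinements and in which every leaf contains a dirty example. It then chooses $e$ in a leaf of \emph{minimum} depth $i$ in $W^*$. Minimality forces every vertex of $W^*$ at depth less than $i$ to be an inner vertex, so $W^*$ has at least $2^i-1$ inner vertices; since it has at most $s$, this gives $i=\Oh(\log s)$, and the admissible positions for this particular $e$ are the at most $i+1$ vertices on its leaf-to-root path in $W^*$. (In your caterpillar instance this rule would pick the dirty example sitting in the root's leaf child, at depth $1$, rather than the deepest one.) The rest of your outline --- correctness via \Cref{thm-dirty-sub-const-corr} and \Cref{theorem:sub_const_correctness}, search depth $s$, per-node work $\Oh(s\cdot n)$, the $\delta\cdot D$ count of cut choices, and the geometric series argument to pass from \textsc{BSDT} to \textsc{MSDT} --- matches the paper's proof; only the key combinatorial lemma needs to be replaced by the minimum-depth-leaf argument above.
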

\appendixproof{thm-dts-improved-algo}{
\begin{proof}
Let~$W$ be a witness tree.
Observe that it is sufficient to show that there exists a dirty example~$e\in\dirty(W)$ such that there are at most $\log(s)$~positions in~$W$ such that refining~$W$ at those positions does not violate a \emph{Dirty Subset Constraint}:
since there are at most $\log(s)$ such positions and for each of them we have at most $\delta$~possible dimensions and at most $D$~possible dimensions, we branch into at most $\delta\cdot D\cdot \log(s)$~possibilities.
Hence, it remains to verify this statement.

For this, we construct a compressed witness tree~$W^*$ which we use for a more compact representation of all one-step refinements.
While there exists a leaf~$\ell$ of~$W$ without any dirty example, we do the following:
Let~$p$ be the parent of~$\ell$ in~$W$ and let~$r$ be the (unique) sibling of~$\ell$.
If~$p$ is the root of~$W$, we set~$W'$ to be the subtree of~$W$ rooted in~$r$.
Otherwise, if~$p$ has a parent~$q$ (assume without loss of generality that~$p$ is the right child of~$q$), we construct~$W'$ from~$W$ by replacing the right subtree of~$q$ by the subtree rooted in~$r$ of~$W$, that is, we ''remove''~$p$ and~$\ell$ from~$W$ to obtain~$W'$.
By the definition of the \emph{Dirty Subset Constraint} no refinement is done on the edges~$\{\ell,p\}$ and~$\{p,q\}$ of~$W$ and thus~$W'$ preserves all one-step refinements of~$W$.
Then, we rename~$W'$ to~$W$.

This process stops, if each leaf contains at least one dirty example and then the resulting witness tree is denoted by~$W^*$.
Since in each step all one-step refinements are preserved, we conclude that~$W$ and~$W^*$ represent the same set of one-step refinements.

Let~$\ell$ be a leaf of~$W^*$, which by definition of~$W^*$ contains at least one dirty leaf, of minimal depth~$i$ in~$W^*$.
Since~$d$ is minimal, in each vertex in~$W^*$ with depth~$i'<i$ has to be an inner vertex, representing a cut in~$W$.
These are~$\sum_{j=0}^{i-1}2^j=2^i-1$ many. 
By definition,~$W$ and thus also~$W^*$ has at most $s$~inner vertices.
Hence,~$2^i-1<s$ and thus~$i<\log(s)$.
Consequently, \textsc{BSDT} can be solved in $\Oh((\delta\cdot D\cdot \log(s))^{s} \cdot s\cdot  n)$~time.

Recall that we solve \pDTSlong{} by increasing the value of~$s$ iteratively by one, until the minimal value of~$s$ is found.
Thus, the running time for \pDTSlong{} is~$\Oh(\sum_{i=1}^s (\delta\cdot D\cdot \log(i))^{i} \cdot i\cdot  n)\in \Oh(s\cdot n\cdot \sum_{i=1}^s (\delta\cdot D\cdot \log(s))^{i}) \in \Oh((\delta\cdot D\cdot \log(s))^{s} \cdot s\cdot  n)$ by  the geometric series.
\end{proof}

Unfortunately, the replacement of~$s$ by~$\log(s)$ in the base of the running time is not possible for the more general \textsc{Minimum Tree Ensemble Size (MTES)}.
Roughly speaking, in \textsc{MTES} the majority vote of $\ell$~decision trees is used to classify each example (for a formal definition of \textsc{MTES} we refer to~\citet{KKSS23}).
Hence, an example is dirty if it is misclassified at least $\ell/2+1$~times.
Note that in \textsc{MTES},~$s$ is an upper bound for the number of inner vertices for each of the $\ell$~trees.

For an intuition, consider the case that~$\ell=3$ and that each of the $\ell$~witness trees is a path~$P$ of the same length~$s'=s/2$ using the exact same cuts~$C$ and let~$\pi$ be an ordering of~$C$.
More precisely, one tree of the ensemble performs the cuts in order~$\pi$, the second tree performs the cuts in~$\pi$ with a cyclic shift of~$s'/3$, and the third tree performs the cuts in~$\pi$ with a cyclic shift of~$2s'/3$.
Now, observe that for each dirty example~$e$ we need to branch on at least $s'/3\in \Oh(s)$~positions to decrease the number of misclassifications of~$e$ by at least one.

}

\section{Subset caching}
\label{sec:subset_caching}
\appendixsection{sec:subset_caching}

Our final improvement is inspired by the caching of subproblems used in MurTree~\citep{DBLP:journals/jmlr/DemirovicLHCBLR22} to solve the related problem of minimizing the number of misclassifications under a size and depth constraint.
They iterate over all possible cuts for the root and then calculate optimal solutions for the left and right subtree recursively.
For this approach caching of subproblems is very natural.

Since \OS{} can modify any part of the current witness tree, \OS{} is not directly amenable to caching.
Thus, we modify the caching of subproblems as follows:
We use a \emph{set-trie data structure}~\citep{DBLP:conf/IEEEares/Savnik13} to save lower bounds for specific subsets of the examples. 
These lower bounds tell us how many inner vertices we need at least to correctly classify that subset of the examples. 
In any search-tree node~$N$ with~$W\coloneqq \tree(N)$, we then look at each leaf~$\ell\in V(W)$ and the set of examples~$L\coloneqq E[W,\ell]$ assigned to~$\ell$ and check if a subset~$S$ of~$L$ is present in our data structure. 
If this is the case, then the lower bound~$z$ associated with~$S$ is also a lower bound for~$L$.
Since all examples in~$L$ are assigned to the same leaf, this is also a valid lower bound for~$W$.
If~$z$ is bigger than the remaining size budget~$s'$, we know that it is not possible to correctly classify the data set by applying at most~$s'$ one-step refinements to~$W$.
Consequently, the algorithm can return~$\bot$. 
Otherwise, if~$z$ is not big enough, we keep checking our data structure for another subset of~$L$ until we either find a sufficiently large lower bound or until no more subsets of~$L$ are left.

\toappendix{
We use the \emph{set-trie data structure}~\citep{DBLP:conf/IEEEares/Savnik13} to store the lower bounds for the subsets of examples.
A set-trie~$\mathcal{T}$ saves the subsets in a tree structure where each vertex represents a single example. 
We add a subset~$S$ of examples to~$\mathcal{T}$ by first sorting~$S$ in ascending order~$\tau$. 
We then start at the first example~$e_1$ in~$\tau$ and check if the root of~$\mathcal{T}$ has a child~$c$ that represents~$e_1$. 
If not, we add such a child. 
Next, we check if~$c$ has a child that represents the second example~$e_2$ and add a new child if that is not the case. 
We continue doing this until we have created a path that represents~$S$. 
We then mark the last vertex in this path as an end vertex and save the lower bound of the subset in this end vertex.
}

Since our algorithm does not naturally calculate lower bounds for subsets of examples, to populate the set trie~$\mathcal{T}$ we do the following: 
In a search-tree node~$N$, we look at the example set~$L\coloneqq E[\tree(N),\ell]$ of any leaf~$\ell$ with~$|L|\leq \min(|E| / 4,30)$. 
Let~$s'$ be the remaining size budget of the current witness tree. 
We run a separate instance of our algorithm that checks whether the set~$L$ can be correctly classified with a tree of size at most~$s'$. 
If not, then we can add~$L$ to~$\mathcal{T}$ with the lower bound~$s'+1$. 
Of course we could then check if~$L$ can be correctly classified with a tree of size at most~$s'+1$ to improve the lower bound, but we decided to only check size~$s'$ because that is sufficient to show that the current tree cannot correctly classify the data. 
If we ever need a better lower bound for~$L$ at a later point, we can still calculate it then. 
A limit of the size of~$L$ is necessary since otherwise one  solves the original problem; preliminary experiments showed that~$\min(|E| / 4,30)$ is a good limit.


A set-trie is especially useful when solving \textsc{MSDT} since we can reuse the set-trie for each instance of
\textsc{BSDT} until we find the optimal~$s$.
For all solved instances, each trie had at most $80\ 000$~vertices, resulting in a maximum space consumption of $3$~GB.

\section{Experimental evaluation}
\label{sec:evaluation}
\appendixsection{sec:evaluation}

\paragraph{Experimental setup.}

For our experiments, we used~$35$ data sets that were also used in the experimental evaluation of the state-of-the-art SAT-based \textsc{MSDT} solver~\citep{DBLP:conf/sat/JanotaM20}.
The data sets are part of the Penn Machine Learning Benchmarks~\citep{romano2021pmlb}; \Cref{tab:overview_datasets_reduced} (in the appendix) gives an overview.


To meet the requirements of \textsc{MSDT} inputs, we transformed the data sets as follows (similar to \citet{DBLP:conf/sat/JanotaM20}). We replaced each categorical dimension by a set of new binary dimensions indicating whether an example is in the category.
Second, we converted each instance into a binary classification problem.
For this, we colored all examples of the largest class red, and all remaining examples blue.
Finally, if two examples had the same value in all dimensions but are colored differently, we removed one of them arbitrarily.

Following \citet{DBLP:conf/sat/JanotaM20}, we randomly sampled multiple subsets of the examples from each data set. 
Specifically, for each data set we chose~$10$ random subsets with~$20\%$ of the examples and~$10$ random subsets with~$50\%$ of the examples. 
This means we ran our experiments on a total of~$700$ instances. 
For each instance we used a time limit of~$60$ minutes.

Our experiments were performed on servers with~$24$ GB RAM and two Intel(R) Xeon(R) E5540 CPUs with~$2.53$ GHz,~$4$ cores,~$8$ threads, running Java openjdk 11.0.19. Each individual experiment was allowed to use up to~$12$ GB RAM. All algorithms we tested run on a single core. However, we reserved two additional cores for each experiment to avoid side effects of other threads like the Java garbage collector. We implemented our algorithm in Kotlin and the code is available in the supplementary material. To solve the LP-relaxation of the PairLB we used Gurobi 10.0.3 \citep{gurobi}.

\begin{figure}[t]
	\centering
	\includegraphics[width=\linewidth]{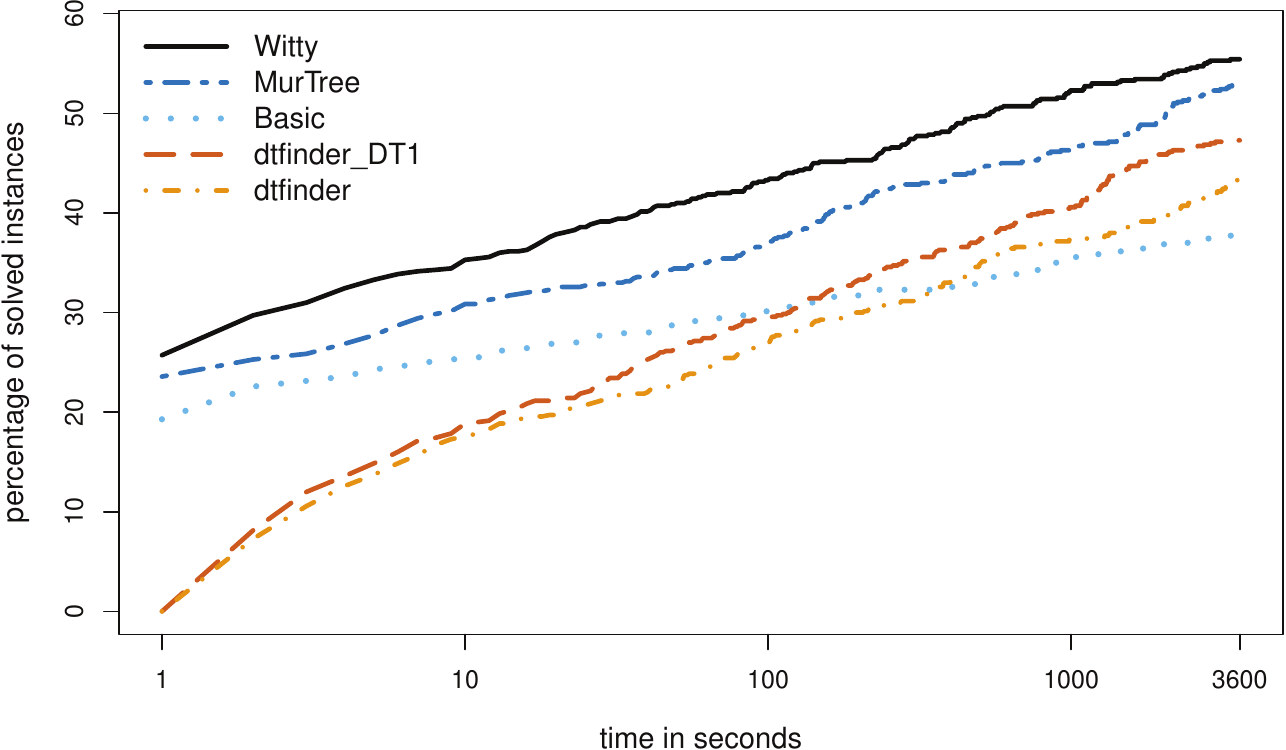}
	\captionof{figure}{Comparison of different algorithms for \textsc{MSDT}. For each time~$t$ it is shown how many instances were solved by each algorithm in less than $t$~seconds.}
	\label{fig-variants}
\end{figure}

\begin{figure}[t]
	\centering
	\includegraphics[width=.9\linewidth]{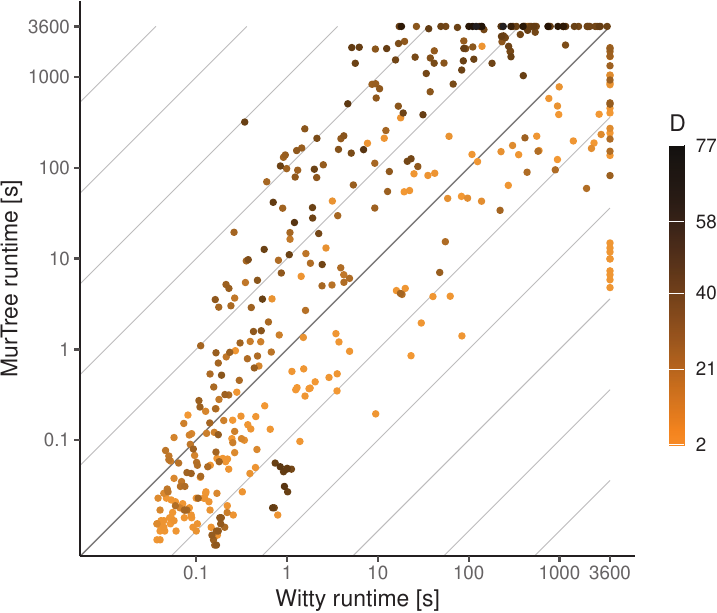}
	\captionof{figure}{Comparison of the running times of \OS{} and \texttt{MurTree} for each instance with the color representing the largest domain size~$D$.}
	\label{fig-time-D}
\end{figure}

\paragraph{Comparison with the basic version of \OS.}

\Cref{fig-variants} shows the performance of \OS{} and \texttt{Basic}. \OS{} is the final version of our algorithm using all improvements while \texttt{Basic} is just Algorithm~\ref{alg:base} together with the dirty example priority and the data reduction rules from Section~\ref{sec:base_algo}.

\texttt{Basic} solved~$264$ out of the~$700$ instances. In comparison the naive version of our algorithm without any improvements only solved~$222$ instances and \texttt{Basic} is~$26$ times (median~$9$ times) faster on instances solved by both. This shows that the dirty example priority and data reduction rules from \Cref{sec:base_algo} already lead to a large speed-up.

\OS{} solved~$388$ instances and is roughly~$65$ times (median~$23$ times) faster than \texttt{Basic} on instances that were solved by both, showing that the combination of all of our improvements yields a vast speed-up.

A more detailed comparison of the different solver configurations can be found in the appendix.

\todo[inline]{maybe briefly mention how many instances were solved by LB and/or SubConst

F: I wouldnt do this}



\paragraph{Comparison with the state of the art.}

We compare \OS{} against the state-of-the-art SAT-based algorithms~\citep{DBLP:conf/ijcai/NarodytskaIPM18,DBLP:conf/sat/JanotaM20}. 
We use the names \texttt{dtfinder\_DT1} and \texttt{dtfinder}, respectively, to refer to these algorithms. 
We use the improved version of the encoding by \citet{DBLP:conf/ijcai/NarodytskaIPM18} that was presented by \citet{DBLP:conf/sat/JanotaM20}.
We also compare \OS{} against the state-of-the-art dynamic programming based solver, \texttt{MurTree}~\citep{DBLP:journals/jmlr/DemirovicLHCBLR22}. Since these algorithms only support binary dimensions, we transform the data sets following their approach by introducing a binary dimension for each cut indicating wether an example is on the left or right side of the cut. We still used the non-binarized data sets for \OS{}.

\Cref{fig-variants} shows this comparison.
\OS{} solved~$57$~instances more than \texttt{dtfinder\_DT1} which solved~$26$~instances more than \texttt{dtfinder}.
\OS{} is roughly~$61$ times (median~$25$ times) faster than \texttt{dtfinder\_DT1}.
Furthermore, there is only one instance that can be solved by \texttt{dtfinder} or \texttt{dtfinder\_DT1} but not by \OS.

\texttt{MurTree} on the other hand solved~$371$ instances. However, only~$341$ of these instances were solved by both \OS{} and \texttt{MurTree}. 
On these instances, \OS{} achieved a mean~$32$-fold (median~$7$-fold) speedup over \texttt{MurTree}.
To compute the speed-up, we ignored instances that were solved in less than one second by both algorithms. 
One notable property of the instances that \OS{} solved faster than \texttt{MurTree} is that the largest domain size~$D$ tends to be a lot bigger than in the instances that \texttt{MurTree} could solve faster than \OS{}. This can be seen in \Cref{fig-time-D}.
We assume that the Threshold Subset Constraints from Section~\ref{sec:subset_constraints} are the reason for this: these Subset Constraints are particularly effective on dimensions with many thresholds since they allow the algorithm to skip several thresholds within such a dimension.

We also looked at the values of the instance-specific parameters~$n,d,\delta,c$, and~$s$ of these instances (see \Cref{fig-time-n,fig-time-d,fig-time-delta,fig-time-c,fig-time-s} in the appendix).
There are two notable observations. First, \OS{} also tends to perform better than \texttt{MurTree} on instances with a large number of cuts~$c$. This is most likely related to the above observation that these instances also tend to have a large value for~$D$.
Second, \texttt{MurTree} can solve instances with a larger optimal tree size~$s$. \texttt{MurTree} solved instances with~$s \leq 20$ while \OS{} could only solve instances with~$s \leq 16$. This suggests that \texttt{MurTree} scales better with the size of the optimal tree as long as the largest domain size~$D$ is not too large. In fact, all instances with~$s > 16$ that \texttt{MurTree} solved have a value of~$D = 2$.

\paragraph{Comparison with heuristics.}

We evaluated the size, classification quality on the test data, and the balance of the trees computed by \OS{} against three heuristics: CART~\citep{DBLP:books/wa/BreimanFOS84}, Weka~\citep{WittenFH16}, and YaDT~\citep{Ruggieri04,Ruggieri19}.
CART and Weka also compute perfect trees.
Our evaluation (see \Cref{tab:overview_datasets_tree_size_comparison,tab:overview_datasets_test_classification_comparison,tab:overview_datasets_tree_balance_comparison} in the appendix) shows that the trees of YaDT are much smaller than the trees of \OS{} which are a bit smaller than the ones of CART and Weka.
Also, all computed trees are very balanced and achieve similar classification quality.

\toappendix{
\subsection{Evaluation of different variants of our algorithm \OS{}}

\begin{figure}
\centering
\begin{minipage}{.45\textwidth}
	\centering
	\includegraphics[width=\linewidth]{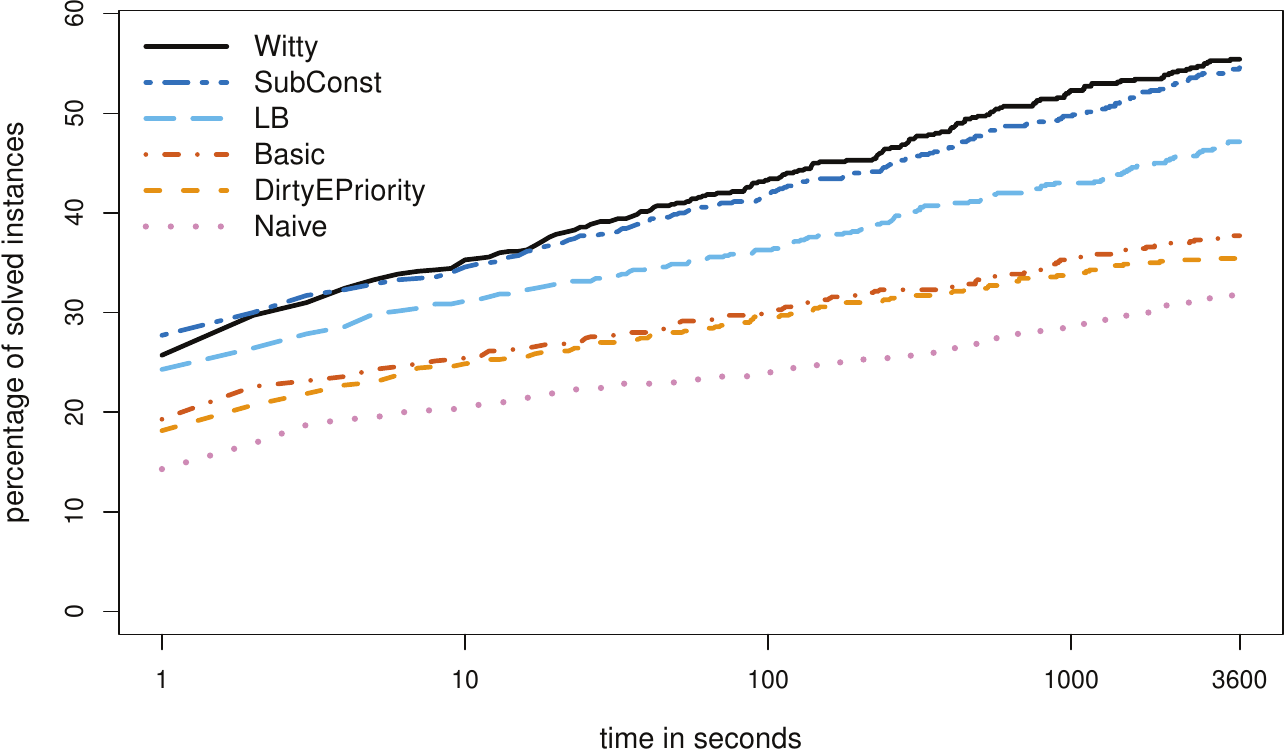}
	\captionof{figure}{Effect of the different improvements on the running time.}
	\label{fig-dirty-ex-prio-use-rr}
\end{minipage}%
\hspace{0.3cm}
\begin{minipage}{.45\textwidth}
	\centering
	\includegraphics[width=\linewidth]{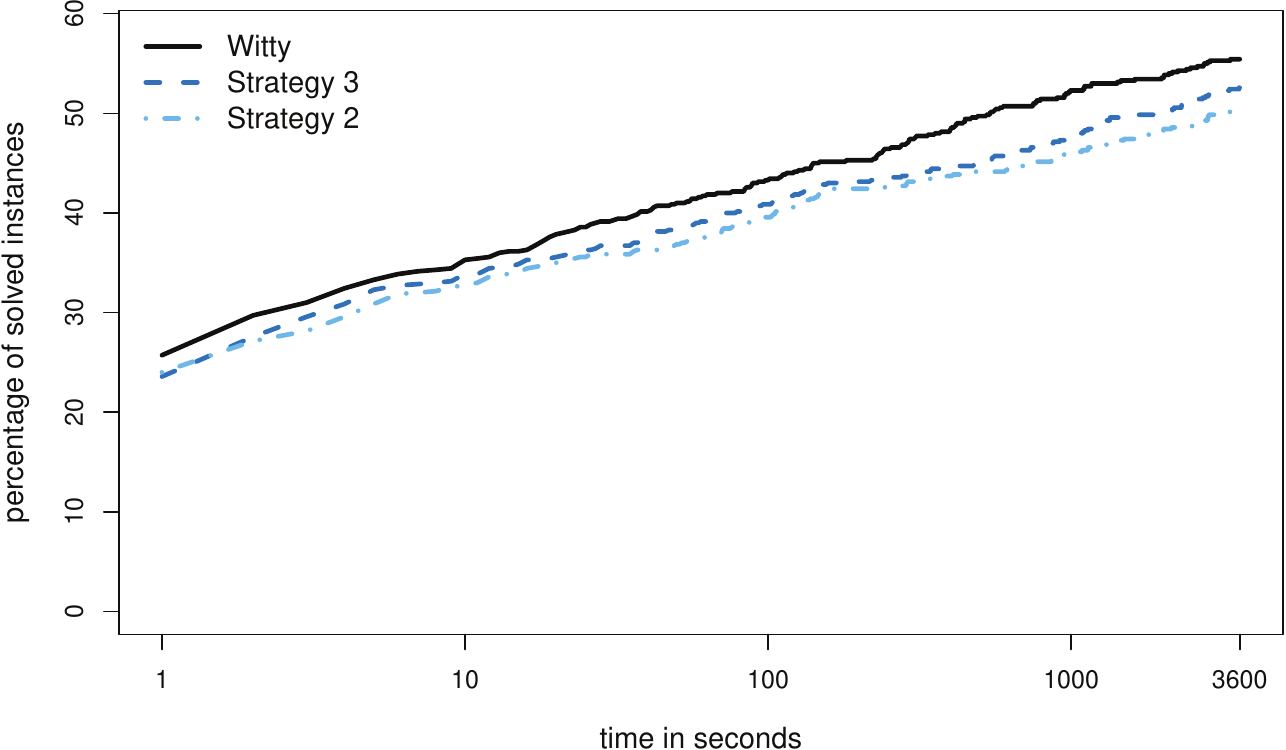}
	\caption{Comparison of the three strategies for solving \textsc{MSDT}.}
	\label{pdf:comparison_strategies}
\end{minipage}
\end{figure}

Here we compare the performance of six variants of our algorithm. This comparison is shown in \Cref{fig-dirty-ex-prio-use-rr}. \texttt{Naive} does not use any of the improvements discussed in this paper. \texttt{DirtyEPriority} only uses the dirty example priority from \Cref{sec:base_algo}. \texttt{Basic} uses both the dirty example priority and the reduction rules from \Cref{sec:base_algo}. \texttt{LB} additionally uses the lower bounds from Section \ref{sec:lower_bounds}, \texttt{SubConst} also uses the Subset Constraints from Section \ref{sec:subset_constraints} and \OS{} uses all improvements.

\paragraph{Effect of using the dirty example priority.}
\texttt{DirtyEPriority} solved $26$~instances more than \texttt{Naive}.
Only $13$~instances were solved more than a second faster by \texttt{Naive} than \texttt{DirtyEPriority}.
\texttt{DirtyEPriority} also substantially decreases the size of the search trees: 
on average, the search trees of \texttt{DirtyEPriority} had $20$~times less nodes than the search trees of the same instances solved by \texttt{Naive}.
Thus, using the dirty example priority is clearly beneficial.

\paragraph{Effect of using the reduction rules.}
 
First, we explain the order in which we use the reduction rules. 
This is important because applying the same rules to the same data set in two different orders may lead to different results (for example: Dimension Reduction Rule and Dimension Merge Rule).

Initially, we remove as many cuts as possible: 
first, we apply the Dimension Reduction Rule exhaustively,  second we apply the Equivalent Cuts Rule exhaustively. 
Afterwards, we apply the Dimension Merge Rule exhaustively to remove as many dimensions as possible. 
Finally, we apply the Remove Duplicate Examples Rule and the Remove Dimension Rule exhaustively to clean up the data. 

For a general overview on the effectiveness on our data rules, we refer to \Cref{tab:overview_datasets_reduced}.
In most cases our reduction rules reduced the number of cuts; for some instances even up to one third of all cuts.

As can be seen in \Cref{fig-dirty-ex-prio-use-rr}, the use of the reduction rules resulted in $16$~more solved instances.
There were, however, two instances that were only solved by \texttt{DirtyEPriority}. 
Furthermore, $12$~instances were solved more than a second faster by \texttt{DirtyEPriority} than by \texttt{Basic}.
Hence, overall \texttt{Basic} performed slightly better than \texttt{DirtyEPriority}.
Moreover, the main advantage of these reduction rules is that they are a protection against data sets with a high level of redundancy.

\paragraph{Effect of using the lower bounds.}
As discussed in \Cref{sec:lower_bounds}, we calculate the lower bounds after Line~$5$ in each call of \texttt{Refine} in \Cref{alg:base} and return~$\bot$ if one of them is bigger than the remaining size budget. 
However, preliminary experiments showed that the calculation of the PairLB takes too long compared to how effective it is. 
Because of this we only calculate the PairLB once to obtain an initial lower bound for the solution of the \textsc{MSDT} instance.

\texttt{LB} solved $66$~instances more than \texttt{Basic} and on instances solved by both \texttt{LB} was on average~$22$ times (median~$8$ times) faster than \texttt{Basic}.
All instances solved by \texttt{Basic} were solved at most one second slower by \texttt{LB}.
Furthermore, \texttt{LB} decreases the number of search-tree nodes drastically:  
on average, the search trees of \texttt{LB} had~$64$ times less nodes than the search trees of \texttt{Basic} on the same instances. 
Thus, the two lower bounds are a big improvement for the algorithm in every instance.

\paragraph{Effect of using the Subset Constraints.}
We check in \Cref{line-apply-refine} of \Cref{alg:base} whether any subset constraint is violated, right after the one-step refinement has been performed.
Overall, the Subset Constraints were one of the biggest improvements for the number of solvable instances. \texttt{SubConst} solved $52$~instances more than \texttt{LB}. On instances solved by both \texttt{SubConst} was roughly $8$ times (median $3.4$ times) faster than \texttt{LB}.
All instances solved by \texttt{LB} were also solved by \texttt{SubConst}.
Furthermore, all of these instances were solved faster by \texttt{SubConst} than by \texttt{LB}.
\texttt{SubConst} also reduced the number of search-tree nodes substantially: 
on average, the search trees of \texttt{SubConst} were~$13$ times smaller than the search trees of \texttt{LB} on the same instances. 
Thus, the two subset constraints are also generally a big improvement for the algorithm.

\paragraph{Effect of using the subset caching.}

\OS{} solved six instances more than \texttt{SubConst} and was roughly~$1.6$ times (median~$1.1$ times) faster than \texttt{SubConst} on instances solved by both.
There were only three instances which were solved by \texttt{SubConst} but not by \OS.
Since \OS{} solved more instances than \texttt{SubConst} and yielded a speed-up of factor~$1.6$ (median~$1.1$), subset caching is a small improvement for the algorithm.
However, on average, the search trees of \OS{} had~$1.15$ times more nodes than the search trees of \texttt{SubConst} on the same instances.
This is clearly due to the additional instances of \textsc{BSDT} that we solve to populate the set-trie and it shows why it is important for us to limit which example subsets we choose to check and potentially add to the set-trie (recall that we set the limit to~$\min(|E| / 4,30)$).

\paragraph{Comparison of the three different strategies to solve \textsc{MSDT}.}

Recall that our algorithm only solves the decision problem \textsc{BSDT}. 
But since we are interested in the optimization problem \textsc{MSDT} we presented a strategy that allows us to solve \textsc{MSDT} by solving multiple instances of \textsc{BSDT} using our algorithm.
In the strategy we described in the main part (\OS), we linearly increase the maximum size of the tree starting from a lower bound until our algorithm no longer returns~$\bot$. 
However, this is not the only way to find the optimal value of~$s$.

As a second strategy we can linearly decrease the maximum size of the tree starting from an upper bound until our algorithm returns~$\bot$. 
As a third strategy we can perform a binary search between a lower bound and an upper bound until the optimal tree size is found.

For the initial lower bound we use the PairLB from \Cref{sec:pairLB}. 
For the initial upper bound we use the scikit-learn implementation~\citep{scikit-learn} of the CART algorithm originally proposed by \citet{DBLP:books/wa/BreimanFOS84}. 
The SAT-based solver by~\citet{DBLP:conf/sat/JanotaM20} also uses this implementation to calculate an upper bound.

\Cref{pdf:comparison_strategies} shows the comparison of the three different strategies.
\OS{} solved $20$~instances more than Strategy~$3$ which solved $16$~instances more than Strategy~$2$.
Overall, \OS{} is roughly $4.5$ times (median~$1.6$ times) faster than Strategy~$2$, the slowest of the three strategies.
Furthermore, all instances that were solved by either one of the other $2$~strategies were also solved by \OS{}.
If we compare the running time of the three strategies on instances that can be solved by all of them we can see that \OS{} is only very rarely substantially slower than the other two: 
there are only~$14$~instances where \OS{} is more than~$1$ second slower than the fastest strategy.
This shows that \OS{} is generally the best choice for our algorithm.

The reason why the three strategies perform similarly is most likely because there are always at least two instances of \textsc{BSDT} that need to be solved by all strategies: 
If~$s_{\min}$ is the solution for an instance of \textsc{MSDT}, then all strategies need to solve at least the \textsc{BSDT} instance with~$s = s_{\min}$ and the instance with~$s = s_{\min} - 1$.
Since these two instances are closest to the optimal value $s_{\min}$ they take the longest time to solve and are therefore the main deciding factor in whether a strategy can solve an instance of \textsc{MSDT} or not.

\subsection{Dependence of \OS{} and \texttt{MurTree} on different input parameters}

\begin{figure}
\centering
\begin{minipage}{.4\textwidth}
  \centering
  \includegraphics[width=.9\linewidth]{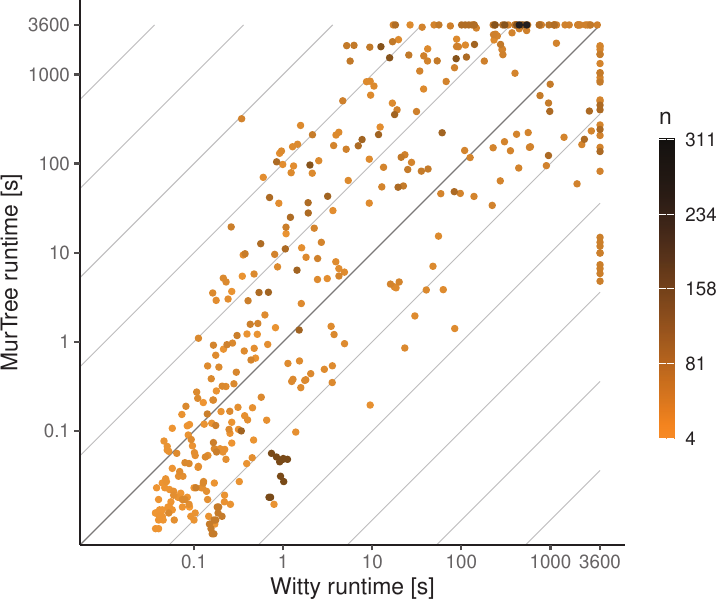}
  \captionof{figure}{Relation of number~$n$ of examples to the time required to solve the instance.}
  \label{fig-time-n}
\end{minipage}%
\hspace{0.3cm}
\begin{minipage}{.4\textwidth}
  \centering
  \includegraphics[width=.9\linewidth]{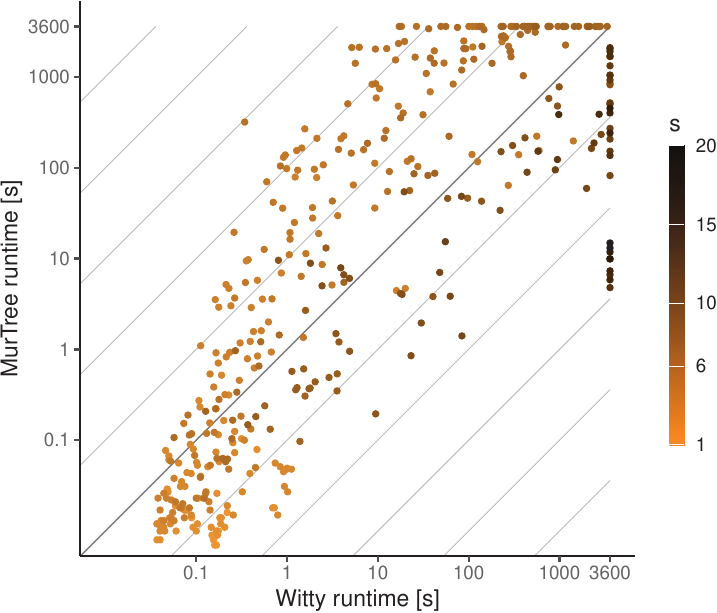}
  \captionof{figure}{Relation of smallest tree size~$s$ to the time required to solve the instance.}
  \label{fig-time-s}
\end{minipage}
\end{figure}

\begin{figure}
\centering
\begin{minipage}{.4\textwidth}
  \centering
  \includegraphics[width=.9\linewidth]{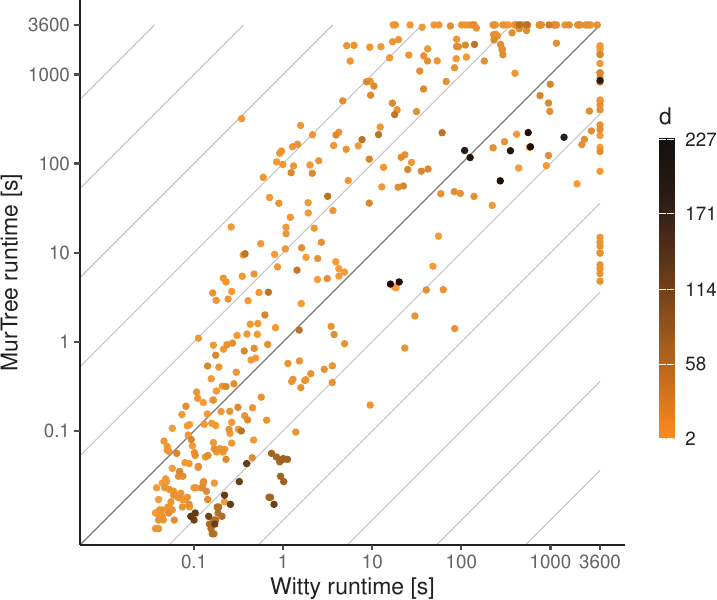}
  \captionof{figure}{Relation of number~$d$ of dimensions to the time required to solve the instance.\\}
  \label{fig-time-d}
\end{minipage}%
\hspace{0.3cm}
\begin{minipage}{.4\textwidth}
  \centering
  \includegraphics[width=.9\linewidth]{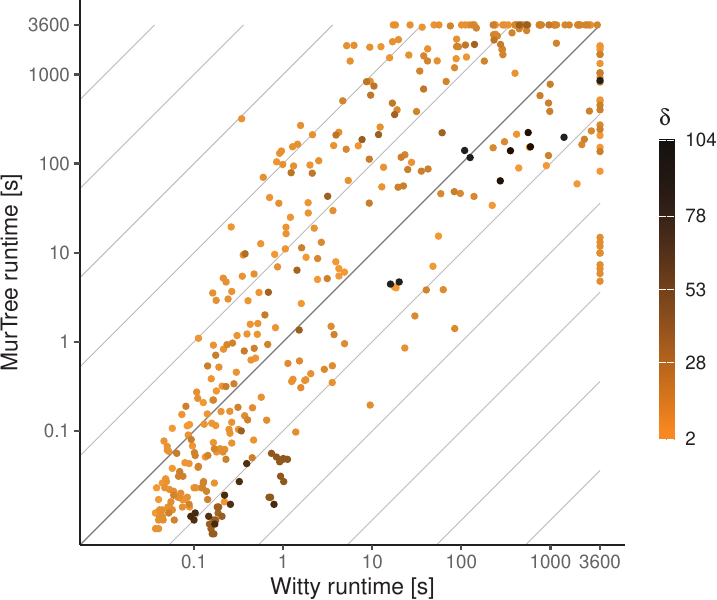}
  \captionof{figure}{Relation of number~$\delta$ of maximal dimensions in which two different labeled examples differ to the time required to solve the instance.}
  \label{fig-time-delta}
\end{minipage}
\end{figure}

\begin{figure}
	\centering
  \includegraphics[width=.36\textwidth]{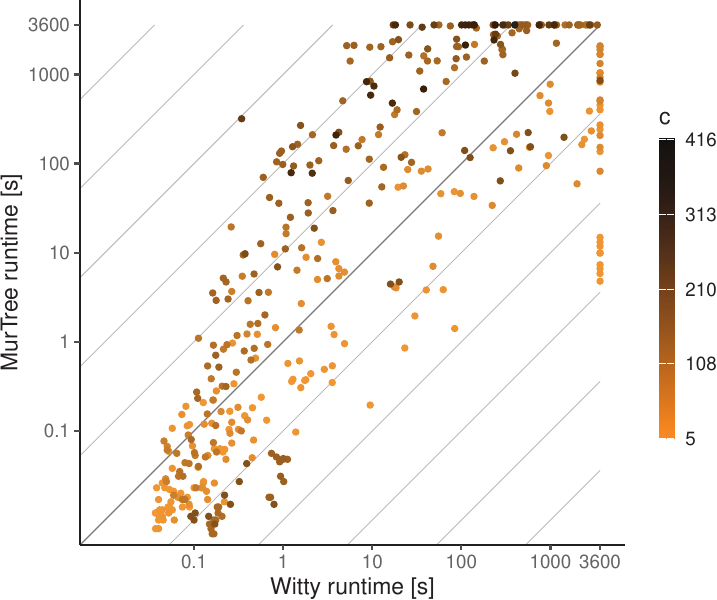}
  \captionof{figure}{Relation of number~$c$ of cuts to the time required to solve the instance.}
  \label{fig-time-c}
\end{figure}

We compared the running times of our algorithm \OS{} and the state-of-the-art \texttt{MurTree} against the parameters number~$n$ of examples (\Cref{fig-time-n}), number~$d$ of dimensions (\Cref{fig-time-d}), largest domain size~$D$ (\Cref{fig-time-D})\footnote{Recall that for \texttt{MurTree} it was necessary to transform the data set~$\mathcal{D}$ into an equivalent data set~$\mathcal{D}'$ where each dimension is binary (hence~$D=2$). For \Cref{fig-time-D} we used the original data set~$\mathcal{D}$.}, number~$\delta$ of maximal dimensions in which two different labeled examples differ (\Cref{fig-time-delta}), number~$c$ of cuts (\Cref{fig-time-c}), and size~$s$ of a solution (\Cref{fig-time-s}).
Note that in all six figures the light gray lines correspond to a difference of factor 10 in the running time of both algorithms.
Furthermore, observe that the location of all points in the six figures is identical, only their coloring is different; depending on the parameter~$n,d,D,\delta,c,$ or~$s$, respectively.

Overall we can differentiate between "easy" instances that can be solved in at most one second by both algorithms and "hard" instances where at least one algorithm takes longer than one second.
The figures show that \texttt{MurTree} tends to be faster than \OS{} on "easy" instances. However, on "hard" instances \OS{} performs a bit better than \texttt{MurTree} suggesting that \OS{} scales a better than \texttt{MurTree}.
Additionally, \OS{} has more instances with a speedup between~$10$ and~$100$ over \texttt{MurTree} than \texttt{MurTree} has over \OS{}.

\Cref{fig-time-n} does not show a clear trend. However, there do seem to be a few more instances with large~$n$ where \OS{} performs better than \texttt{MurTree} but this is not enough to make a clear judgement on which algorithm scales better with~$n$.

As previously mentioned in Section \ref{sec:evaluation}, \Cref{fig-time-s} suggests that \texttt{MurTree} scales slightly better with~$s$ than \OS{}. However, if we also look at \Cref{fig-time-D} we can see that the instances with a large value for~$s$ where \texttt{MurTree} is better than \OS{} also have very small values of~$D$. Together with \Cref{fig-time-c}, this suggests that \OS{} can deal with a larger number of cuts as long as these cuts are part of a few larger dimensions. But on instances with small values of~$D$, \texttt{MurTree} performs better and can even solve instances with a larger value for~$s$.

Finally, \Cref{fig-time-d,fig-time-delta}, illustrating the dependence on~$d$ and~$\delta$, do not show a clear pattern. There are a few instances where these values are large that are solved slightly faster by \texttt{MurTree} than by \OS{} but they are not enough to really form a conclusion. Especially since most of these instances can be solved in less than a second by both algorithms.

\begin{table}[t!]
	\centering
	\small
	\begin{tabular}{lrrrrrrrr}
\toprule
	\multirow{2}{*}{Instance name} & \multicolumn{2}{c}{\OS}  & \multicolumn{2}{c}{CART} & \multicolumn{2}{c}{Weka} & \multicolumn{2}{c}{YaDT} \\ \cmidrule(r){2-9}
 &  $20\%$ & $50\%$ & $20\%$ & $50\%$ & $20\%$ & $50\%$ & $20\%$ & $50\%$ \\ \midrule
		postoperative-patient-data & $3.80$ & $9.90$ & $4.40$ & $14.80$ & $4.80$ & $15.40$ & $1.20$ & $1.90$ \\
		hayes-roth & $3.80$ & $8.70$ & $4.60$ & $11.70$ & $4.40$ & $11.60$ & $1.70$ & $4.10$ \\
		lupus & $4.90$ & $11.10$ & $5.30$ & $13.10$ & $5.30$ & $12.80$ & $1.20$ & $1.20$ \\
		appendicitis & $2.40$ & $5.20$ & $2.90$ & $6.80$ & $2.90$ & $7.40$ & $1.30$ & $1.90$ \\
		molecular\_biology\_promoters & $2.50$ & - & $2.70$ & - & $3.00$ & - & $2.40$ & - \\
		tae & $5.60$ & - & $6.90$ & - & $7.00$ & - & $0.90$ & - \\
		cloud & $3.50$ & - & $4.00$ & - & $4.00$ & - & $2.40$ & - \\
		cleveland-nominal & $7.60$ & - & $9.00$ & - & $9.30$ & - & $3.60$ & - \\
		lymphography & $4.20$ & - & $5.30$ & - & $5.20$ & - & $2.80$ & - \\
		hepatitis & $3.50$ & - & $4.20$ & - & $4.40$ & - & $2.40$ & - \\
		glass2 & $4.80$ & - & $7.00$ & - & $5.90$ & - & $3.90$ & - \\
		backache & $3.80$ & - & $4.60$ & - & $4.50$ & - & $2.20$ & - \\
		auto & $4.80$ & - & $5.80$ & - & $6.80$ & - & $5.40$ & - \\
		glass & $5.40$ & - & $7.80$ & - & $7.80$ & - & $3.80$ & - \\
		biomed & $4.00$ & $6.30$ & $5.50$ & $12.20$ & $5.20$ & $11.20$ & $2.60$ & $4.70$ \\
		new-thyroid & $3.10$ & $5.30$ & $3.30$ & $6.60$ & $3.90$ & $7.10$ & $2.20$ & $5.00$ \\
		spect & $6.60$ & - & $8.10$ & - & $8.60$ & - & $1.10$ & - \\
		breast-cancer & - & - & - & - & - & - & - & - \\
		heart-statlog & $6.80$ & - & $9.40$ & - & $10.10$ & - & $4.60$ & - \\
		haberman & - & - & - & - & - & - & - & - \\
		heart-h & - & - & - & - & - & - & - & - \\
		hungarian & $7.20$ & - & $9.50$ & - & $9.90$ & - & $3.50$ & - \\
		cleve & - & - & - & - & - & - & - & - \\
		heart-c & - & - & - & - & - & - & - & - \\
		cleveland & - & - & - & - & - & - & - & - \\
		ecoli & $3.70$ & - & $5.20$ & - & $5.20$ & - & $3.20$ & - \\
		schizo & - & - & - & - & - & - & - & - \\
		bupa & - & - & - & - & - & - & - & - \\
		colic & - & - & - & - & - & - & - & - \\
		dermatology & $1.50$ & $2.80$ & $1.50$ & $3.20$ & $2.00$ & $4.40$ & $1.10$ & $1.80$ \\
		cars & $6.20$ & - & $8.90$ & - & $8.50$ & - & $5.10$ & - \\
		soybean & $6.10$ & - & $8.50$ & - & $9.20$ & - & $5.10$ & - \\
		australian & - & - & - & - & - & - & - & - \\
		diabetes & - & - & - & - & - & - & - & - \\
		contraceptive & - & - & - & - & - & - & - & - \\ 
		\bottomrule
	\end{tabular}
	\caption{Overview of the tree sizes of the solutions. 
	Recall that for each instance the training data set consists of~$20\%$ or~$50\%$ of the examples and for both values we sampled 10~times.
	If \OS{} solved at least~$5$ of the $10$~sampled instances within the time limit, the corresponding entry is the average solution size.
	Each of the three heuristics was only used for instances which were solved by \OS{}.}
	\label{tab:overview_datasets_tree_size_comparison}
\end{table}

\begin{table}[t!]
	\centering
	\small
	\begin{tabular}{lrrrrrrrr}
\toprule
	\multirow{2}{*}{Instance name} & \multicolumn{2}{c}{\OS}  & \multicolumn{2}{c}{CART} & \multicolumn{2}{c}{Weka} & \multicolumn{2}{c}{YaDT} \\ \cmidrule(r){2-9}
 &  $20\%$ & $50\%$ & $20\%$ & $50\%$ & $20\%$ & $50\%$ & $20\%$ & $50\%$ \\ \midrule
		postoperative-patient-data & $0.55$ & $0.58$ & $0.57$ & $0.56$ & $0.57$ & $0.58$ & $0.59$ & $0.63$ \\
		hayes-roth & $0.64$ & $0.71$ & $0.65$ & $0.74$ & $0.64$ & $0.74$ & $0.64$ & $0.66$ \\
		lupus & $0.64$ & $0.70$ & $0.63$ & $0.67$ & $0.66$ & $0.70$ & $0.68$ & $0.73$ \\
		appendicitis & $0.79$ & $0.82$ & $0.79$ & $0.80$ & $0.77$ & $0.79$ & $0.79$ & $0.82$ \\
		molecular\_biology\_promoters & $0.62$ & - & $0.62$ & - & $0.62$ & - & $0.62$ & - \\
		tae & $0.56$ & - & $0.57$ & - & $0.56$ & - & $0.56$ & - \\
		cloud & $0.63$ & - & $0.66$ & - & $0.64$ & - & $0.63$ & - \\
		cleveland-nominal & $0.62$ & - & $0.63$ & - & $0.64$ & - & $0.64$ & - \\
		lymphography & $0.70$ & - & $0.75$ & - & $0.73$ & - & $0.74$ & - \\
		hepatitis & $0.72$ & - & $0.76$ & - & $0.78$ & - & $0.78$ & - \\
		glass2 & $0.68$ & - & $0.70$ & - & $0.69$ & - & $0.65$ & - \\
		backache & $0.77$ & - & $0.78$ & - & $0.79$ & - & $0.81$ & - \\
		auto & $0.72$ & - & $0.74$ & - & $0.68$ & - & $0.69$ & - \\
		glass & $0.67$ & - & $0.67$ & - & $0.68$ & - & $0.68$ & - \\
		biomed & $0.79$ & $0.89$ & $0.79$ & $0.83$ & $0.78$ & $0.83$ & $0.76$ & $0.85$ \\
		new-thyroid & $0.90$ & $0.92$ & $0.90$ & $0.92$ & $0.88$ & $0.91$ & $0.88$ & $0.91$ \\
		spect & $0.78$ & - & $0.79$ & - & $0.77$ & - & $0.86$ & - \\
		breast-cancer & - & - & - & - & - & - & - & - \\
		heart-statlog & $0.71$ & - & $0.70$ & - & $0.71$ & - & $0.74$ & - \\
		haberman & - & - & - & - & - & - & - & - \\
		heart-h & - & - & - & - & - & - & - & - \\
		hungarian & $0.72$ & - & $0.75$ & - & $0.74$ & - & $0.78$ & - \\
		cleve & - & - & - & - & - & - & - & - \\
		heart-c & - & - & - & - & - & - & - & - \\
		cleveland & - & - & - & - & - & - & - & - \\
		ecoli & $0.91$ & - & $0.90$ & - & $0.91$ & - & $0.92$ & - \\
		schizo & - & - & - & - & - & - & - & - \\
		bupa & - & - & - & - & - & - & - & - \\
		colic & - & - & - & - & - & - & - & - \\
		dermatology & $0.98$ & $0.98$ & $0.98$ & $0.98$ & $0.98$ & $0.98$ & $0.98$ & $0.98$ \\
		cars & $0.87$ & - & $0.86$ & - & $0.87$ & - & $0.87$ & - \\
		soybean & $0.93$ & - & $0.91$ & - & $0.92$ & - & $0.91$ & - \\
		australian & - & - & - & - & - & - & - & - \\
		diabetes & - & - & - & - & - & - & - & - \\
		contraceptive & - & - & - & - & - & - & - & - \\
		\bottomrule
	\end{tabular}
	\caption{Overview of the percentage of correctly classified examples on the test data set. 
	Recall that for each instance the training data set consists of~$20\%$ or~$50\%$ of the examples and for both values we sampled 10~times.
	If \OS{} solved at least~$5$ of the $10$~sampled instances within the time limit, the corresponding entry is the average percentage of correctly classified examples.
	Each of the three heuristics was only used for instances which were solved by \OS{}.}
	\label{tab:overview_datasets_test_classification_comparison}
\end{table}

\begin{table}[t!]
	\centering
	\small
	\begin{tabular}{lrrrrrrrr}
\toprule
	\multirow{2}{*}{Instance name} & \multicolumn{2}{c}{\OS}  & \multicolumn{2}{c}{CART} & \multicolumn{2}{c}{Weka} & \multicolumn{2}{c}{YaDT} \\ \cmidrule(r){2-9}
 &  $20\%$ & $50\%$ & $20\%$ & $50\%$ & $20\%$ & $50\%$ & $20\%$ & $50\%$ \\ \midrule
		postoperative-patient-data & $0.95$ & $0.90$ & $0.92$ & $0.77$ & $0.88$ & $0.68$ & $1.00$ & $0.95$ \\
		hayes-roth & $0.94$ & $0.87$ & $0.91$ & $0.90$ & $0.88$ & $0.81$ & $0.96$ & $0.88$ \\
		lupus & $0.86$ & $0.77$ & $0.87$ & $0.80$ & $0.83$ & $0.74$ & $0.99$ & $0.99$ \\
		appendicitis & $0.97$ & $0.86$ & $0.93$ & $0.87$ & $0.94$ & $0.79$ & $1.00$ & $0.96$ \\
		molecular\_biology\_promoters & $0.98$ & - & $0.97$ & - & $0.93$ & - & $0.96$ & - \\
		tae & $0.93$ & - & $0.83$ & - & $0.75$ & - & $0.99$ & - \\
		cloud & $0.92$ & - & $0.92$ & - & $0.87$ & - & $0.94$ & - \\
		cleveland-nominal & $0.91$ & - & $0.89$ & - & $0.78$ & - & $0.87$ & - \\
		lymphography & $0.96$ & - & $0.96$ & - & $0.91$ & - & $0.93$ & - \\
		hepatitis & $0.93$ & - & $0.90$ & - & $0.88$ & - & $0.95$ & - \\
		glass2 & $0.92$ & - & $0.92$ & - & $0.80$ & - & $0.86$ & - \\
		backache & $0.92$ & - & $0.91$ & - & $0.92$ & - & $0.94$ & - \\
		auto & $0.91$ & - & $0.89$ & - & $0.74$ & - & $0.79$ & - \\
		glass & $0.93$ & - & $0.89$ & - & $0.87$ & - & $0.91$ & - \\
		biomed & $0.90$ & $0.84$ & $0.90$ & $0.78$ & $0.84$ & $0.71$ & $0.93$ & $0.86$ \\
		new-thyroid & $0.94$ & $0.90$ & $0.93$ & $0.86$ & $0.87$ & $0.79$ & $0.97$ & $0.82$ \\
		spect & $0.83$ & - & $0.81$ & - & $0.71$ & - & $0.95$ & - \\
		breast-cancer & - & - & - & - & - & - & - & - \\
		heart-statlog & $0.85$ & - & $0.90$ & - & $0.80$ & - & $0.85$ & - \\
		haberman & - & - & - & - & - & - & - & - \\
		heart-h & - & - & - & - & - & - & - & - \\
		hungarian & $0.87$ & - & $0.90$ & - & $0.85$ & - & $0.96$ & - \\
		cleve & - & - & - & - & - & - & - & - \\
		heart-c & - & - & - & - & - & - & - & - \\
		cleveland & - & - & - & - & - & - & - & - \\
		ecoli & $0.92$ & - & $0.94$ & - & $0.88$ & - & $0.94$ & - \\
		schizo & - & - & - & - & - & - & - & - \\
		bupa & - & - & - & - & - & - & - & - \\
		colic & - & - & - & - & - & - & - & - \\
		dermatology & $1.00$ & $0.92$ & $1.00$ & $0.99$ & $0.94$ & $0.97$ & $1.00$ & $1.00$ \\
		cars & $0.84$ & - & $0.87$ & - & $0.75$ & - & $0.80$ & - \\
		soybean & $0.80$ & - & $0.87$ & - & $0.86$ & - & $0.90$ & - \\
		australian & - & - & - & - & - & - & - & - \\
		diabetes & - & - & - & - & - & - & - & - \\
		contraceptive & - & - & - & - & - & - & - & - \\
		\bottomrule
	\end{tabular}
	\caption{Overview of the balance of the trees of the solutions. 
	Recall that for each instance the training data set consists of~$20\%$ or~$50\%$ of the examples and for both values we sampled 10~times.
	If \OS{} solved at least~$5$ of the $10$~sampled instances within the time limit, the corresponding entry is the average balance.
	Each of the three heuristics was only used for instances which were solved by \OS{}.}
	\label{tab:overview_datasets_tree_balance_comparison}
\end{table}

\subsection{Solution quality}

Although it is not our main goal in this work, we want to evaluate the solution quality of the minimum-size decision trees that \OS{} can compute to see whether there are any potential tendencies as compared to heuristics.
Note that, in the following, we will focus only on the instances where the minimum-size trees could be computed by \OS{}.
\OS{} can be used in an anytime fashion to successively improve the tree size, potentially yielding small-yet-suboptimal tree sizes with reasonable running time, but we leave exploring this direction to future work.

We compare the obtained minimum-size trees with the results of three popular heuristics.
More precisely, we compare against the CART heuristic~\citep{DBLP:books/wa/BreimanFOS84} and the C4.5 heuristic~\citep{Quinlan93}.
For C4.5 we compare against 2 variants: the one in Weka~\citep{WittenFH16} and the one in YaDT~\citep{Ruggieri04,Ruggieri19}, from now on simply called Weka and YaDT.
We use two variants of C4.5 since Weka can be parameterized to compute a perfect decision tree while YaDT cannot.
In other words, CART and Weka also compute perfect decision trees while YaDT does not.

\Cref{tab:overview_datasets_tree_size_comparison} shows the average solution size for all four algorithms.
The decision trees computed by our algorithm \OS{} are usually smaller than the ones computed by CART and Weka. 
On average our trees have roughly one vertex less.
In contrast, the imperfect trees computed by YaDT are still smaller than the ones computed by \OS; on average they have roughly two vertices less.

\Cref{tab:overview_datasets_test_classification_comparison} shows the percentage of correctly classified examples on the test data set, which consists of all examples which are not contained in the training data set.
All algorithms perform very similar; YaDT achieves an on average 1.03\,\% lower ratio of misclassifications as compared to minimum-size trees.

\Cref{tab:overview_datasets_tree_balance_comparison} shows the average balance of the computed trees.
For this, we use the \emph{normalized inverse Sackin index}~\citep{LLMN22} of a binary rooted tree~$T$:
To define this value, we also need the \emph{Sackin index}~$\sack(T)$ which is defined as~$\sack(T)=\sum_{\ell\in L(T)}\nu(\ell)$.
Here,~$L(T)$ is the set of leafs of~$T$ and~$\nu(\ell)$ is the number of internal vertices from leaf~$\ell$ to the root of~$T$.
Now, let~$T'$ be a balanced binary tree with the same number of vertices as~$T$ (note that~$T'$ is not necessarily complete).
The \emph{normalized inverse Sackin index}~$\nISack(T)$ of a binary rooted tree~$T$ is~$\nISack(T)=\sack(T)/(|L(T)|\cdot \sack(T'))$.
Thus,~$0\le \nISack(T)\le 1$ for each~$T$ and~$\nISack(T)=1$ if and only if~$T$ is balanced.

\Cref{tab:overview_datasets_tree_balance_comparison} shows that all computed trees have a large normalized inverse Sackin index, that is, all of them are very balanced.
Usually, \OS{} achieves a higher score than CART and Weka; there are only a few instances where the trees computed with these heuristics are more balanced.
In contrast, the normalized inverse Sackin index of the trees computed by YaDT is usually even larger than that of the trees computed by \OS.

\begin{figure}
\centering
\begin{minipage}{.45\textwidth}
	\centering
	\includegraphics[width=\linewidth]{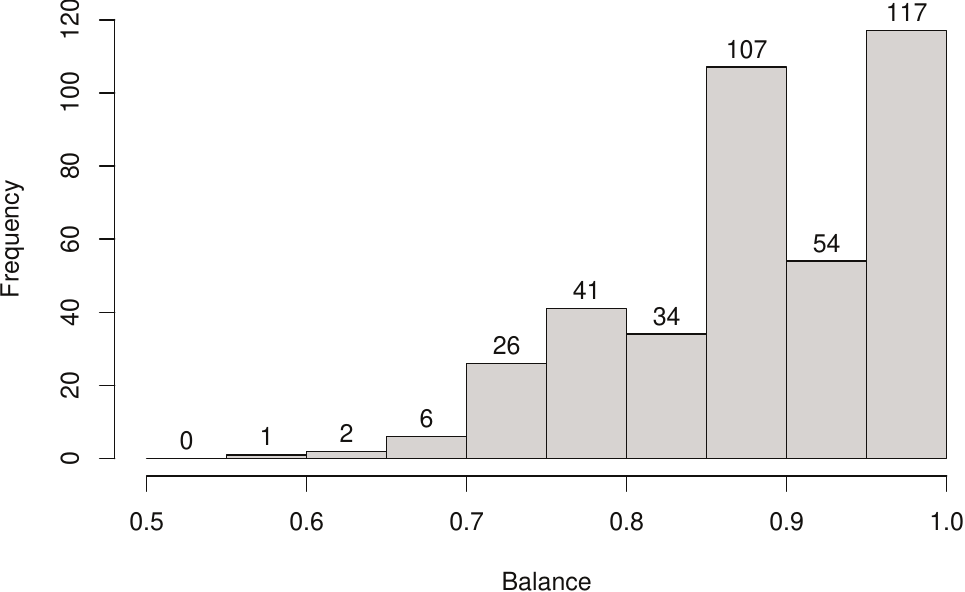}
	\caption{Balance of the size-perfect decision trees found by \OS{}.\\}
	\label{fig-balancing}
\end{minipage}%
\hspace{0.3cm}
\begin{minipage}{.45\textwidth}
	\centering
	\includegraphics[width=\linewidth]{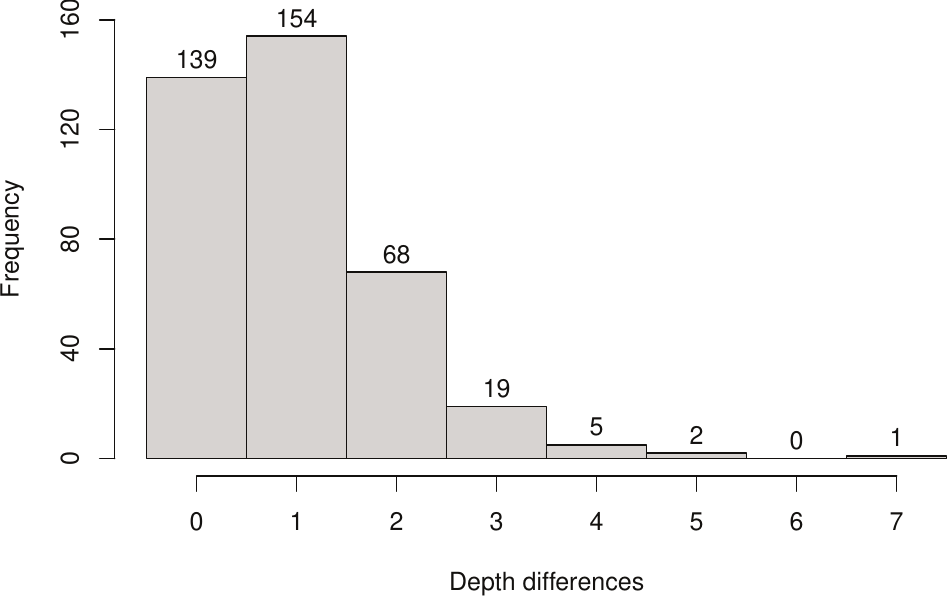}
	\caption{Differences between the depth of the perfect decision trees found by \OS{} and the smallest depth of any perfect decision tree.}
	\label{fig-depth-diffs}
\end{minipage}
\end{figure}

\Cref{fig-balancing} shows a histogram on the normalized inverse Sackin index of the solutions found by \OS.
The normalized inverse Sackin index of all trees is larger than~$0.5$ and more than $85\%$~of the solutions have a normalized inverse Sackin index of at least~$0.8$.
Furthermore, more than $20\%$~ of the solutions are almost complete binary trees.
Hence, the solutions found by \OS{} are usually very tree-like (have a high normalized inverse Sackin index value) and only in a few cases the optimal solution has low normalized inverse Sackin index.

This is also supported by \Cref{fig-depth-diffs} which shows a histogram for the differences between the depth of the trees found by \OS{} and the smallest depth of any perfect decision tree. To calculate the optimal depth we used \texttt{MurTree}. We can see that roughly~$75\%$ of the trees found by \OS{} are at most one away from the optimal depth. While a close to optimal depth does not necessarily mean that the tree is balanced, it does suggest that the tree is probably almost as balanced as it can possibly be while still being a perfect decision tree.

Overall, our evaluation indicates that, while being smaller than the perfect trees computed by CART and Weka, there is no clear tendency in terms of solution quality between \OS{}, CART, and Weka.
The trees computed by YaDT are even smaller than the minimum-size perfect trees, there is no clear tendency in terms of classification error on the test data set, but they achieve better balance scores.

From these results we can derive two hypotheses to test in future work:
First, whenever one is interested in perfect decision trees, minimum-size trees do not compromise solution quality as compared to the CART and Weka heuristics.
Second, minimum-size perfect decision trees achieve similar misclassification errors in validation sets as compared to the non-perfect decision trees computed by the YaDT heuristic.
Validating in particular the second hypothesis would show that fast heuristics such as YaDT are highly suitable for practice.

}

\section{Outlook}
\label{sec:conclusion}

We have provided a new fastest solver \OS{} for computing perfect decision trees with a size constraint.
While previous algorithms~\citep{DBLP:conf/ijcai/NarodytskaIPM18,DBLP:conf/sat/JanotaM20,DBLP:journals/jmlr/DemirovicLHCBLR22} only support binary dimensions, \OS{} in particular benefits from dimensions having many thresholds.
We conclude with a set of limitations and possible extensions of \OS{}.

First, we focused on binary classification problems.
However, \OS{} can easily be adapted for more classes: any example having a different class than that of the witness in any leaf is a dirty example; it would be interesting to scrutinize the resulting algorithm on multiclass instances.

Second, it is interesting to adapt and tune \OS{} to find decision trees with a depth constraint. 
While this constraint can easily be incorporated into \OS{} and all of our improvements are still valid, they could be made much tighter and many new improvements are possible that do not apply to the size constraint.
For example, once a leaf with maximal depth has been found, the corresponding leaf-to-root path $P$ cannot change anymore.
The solution subtrees rooted at vertices in $P$ thus have fixed example sets and can be determined independently of each other.

Finally, it is essential to adapt \OS{} to looser accuracy guarantees, for instance, to the scenario where a given number~$\tau$ of misclassifications are allowed.
\citet{GZ24} showed that the underlying training problem is tractable in theory but the running time of their algorithm is impractical.
The witness-tree paradigm could be extended to this problem by replacing dirty examples with a set of $\tau + 1$ dirty examples, of which one needs to be reclassified.
While the theoretical running-time guarantee would increase, in practice this drawback could be outweighed by the fact that the size $s$ of the optimal tree could decrease substantially (as shown by the results of YaDT on the benchmark data set).

\section*{Acknowledgments}
Luca Pascal Staus was supported by the Carl Zeiss Foundation, Germany, within the project ``Interactive Inference''.
Frank Sommer was supported by the Alexander von Humboldt Foundation and partially supported by the  DFG, project EAGR (KO 3669/6-1).

\newpage
\appendix
\onecolumn
\section*{Appendix}
\appendixProofText

\end{document}
